\newif\iflncs
\newcommand\GM[1]{}
\newcommand\tamer[1]{}
\newcommand{\poly}{\mathrm{poly}}
\newcommand{\polylog}{\mathrm{polylog}}
\declaretheorem[name=Protocol]{protocol}
\newcommand{\bin}{\{0,1\}}
\newcommand{\C}{\mathsf{C}}
\renewcommand{\bin}{\{0,1\}}
\renewcommand{\C}{\mathsf{C}}
\newcommand\GM[1]{\textcolor{red}{[G: #1]}}
\newcommand\tamer[1]{\textcolor{magenta}{[T: #1]}}
\declaretheorem[style=definition,name=Protocol]{protocol}
\crefname{protocol}{Protocol}{Protocols}
\crefname{claim}{Claim}{Claims}
\newcommand{\FF}{\mathbb{F}}
\newcommand{\NN}{\mathbb{N}}
\newcommand{\Expct}{\mathbb{E}}
\renewcommand{\sec}{\lambda}
\newcommand{\adv}{\mathcal{A}}
\newcommand{\Bad}{\mathrm{Bad}}
\newcommand{\X}{\mathsf{X}}
\newcommand{\U}{\mathsf{U}}
\newcommand{\V}{\mathsf{V}}
\newcommand{\Y}{\mathsf{Y}}
\newcommand{\B}{\mathsf{B}}
\newcommand{\TD}{\mathbf{TD}}
\newcommand{\padv}{\mathcal{P}^*}
\newcommand{\badv}{\mathcal{B}}
\newcommand{\equal}{\texttt{equal}}
\newcommand{\accept}{\texttt{accept}}
\renewcommand{\epsilon}{\varepsilon}
\title{How to Verify that a Small Device is Quantum, Unconditionally}
\date{}
\author{Anonymous Submission}
\institute{}
\author{Giulio Malavolta\thanks{Bocconi University. {\tt giulio.malavolta@unibocconi.it}. Work supported by the European Research Council through an ERC Starting Grant (Grant agreement No.~101077455, ObfusQation) and partially funded by the Deutsche Forschungsgemeinschaft (DFG, German Research Foundation) under Germany's Excellence Strategy - EXC 2092 CASA – 390781972.} \and Tamer Mour\thanks{Bocconi University. {\tt tamer.mour@unibocconi.it}. Work supported by European Research Council (ERC) under the EU’s Horizon 2020 research and innovation programme (Grant agreement No. 101019547).}}
\begin{document}
\maketitle

\begin{abstract}
    A proof of quantumness (PoQ) allows a classical verifier to efficiently test if a quantum machine is performing a computation that is infeasible for any classical machine. In this work, we propose a new approach for constructing PoQ protocols where soundness holds \emph{unconditionally} assuming a bound on the \emph{memory} of the prover, but otherwise no restrictions on its runtime. In this model, we propose two protocols:
\begin{itemize}
    \item A simple protocol with a quadratic gap between the memory required by the honest parties and the memory bound of the adversary. The soundness of this protocol relies on Raz's (classical) memory lower bound for matrix inversion (Raz, FOCS 2016).
    \item A protocol that achieves an exponential gap, building on techniques from the literature on the bounded storage model (Dodis et al., Eurocrypt 2023).
\end{itemize}
Both protocols are also efficiently verifiable. Despite having worse asymptotics, our first protocol is conceptually simple and relies only on arithmetic modulo $2$, which can be implemented with one-qubit Hadamard and CNOT gates, plus a \emph{single} one-qubit non-Clifford gate. 
\end{abstract}

\iflncs
\else
\newpage
\tableofcontents
\newpage
\fi

\section{Introduction}

Quantum computing promises to bring the computing sciences to a new world, where the laws of quantum mechanics are harnessed to do computation. The first step towards this ambitious goal is to obtain an experimental demonstration of \emph{quantum advantage}: The construction of a quantum apparatus capable of performing computations that are manifestly infeasible for any classical machine.  Existing approaches combine techniques from cryptography, complexity theory, and physics, in order to design a \emph{proof of quantumness} (PoQ) that is (i) efficient enough to be run on existing quantum machines and (ii) \emph{efficiently verifiable} by a fully classical computer. Unfortunately, despite a massive industrial and academic effort devoted to achieving this milestone, a convincing experimental demonstration remains elusive.

To set some context, let us discuss existing approaches for the experimental demonstration of quantum advantage. On the one hand, the \emph{circuit sampling} approach asks the quantum machine to (approximately) sample from a distribution defined by a quantum circuit, that is conjectured to be hard to simulate classically. The advantage of this approach is that it is feasible for near-term quantum machines, and is in fact the approach adopted by Google's first experiment \cite{supremacy}, and more recently in \cite{willow}. On the downside, the classical hardness of sampling from these distributions has been called into question \cite{closinggap}. Moreover, by far the biggest problem of this approach is that a classical machine cannot efficiently verify that the computation was done correctly.

To overcome this drawback, researchers have turned their attention to \emph{cryptography}. The simplest proposal that allows efficient verification would be to run Shor's algorithm \cite{FOCS:Shor94} to factor a large number. Then, given its prime decomposition, it is easy to check that the computation was done correctly. More sophisticated and general approaches exist, where the classical hardness is proven assuming the intractability of computational problems related to lattices \cite{FOCS:BCMVV18}, group actions \cite{TCC:AlaMalRah22}, or discrete logarithms \cite{dlog}. More recently, proposals based on compiling two-player non-local games \cite{STOC:KLVY23} have emerged as an alternative pathway to PoQ. Common to all of these approaches is that the (classical) verification is always efficient, making them an appealing alternative to the circuit sampling approach. On the flip-side, their soundness also relies on unproven computational conjectures and, furthermore, the quantum operations are somewhat more complex than the ones involved in the sampling method, making this approach less suitable for experiments. To the best of our knowledge, none of the existing experiments \cite{experimental} was performed in a regime where the problem is classically hard.


To summarize, all known PoQ protocols rely on computational assumptions, and are either inefficiently verifiable or beyond the current technological reach. We emphasize that the former limitation is for a good reason: It was recently shown that PoQ protocols \emph{imply} the existence of (quantum) computationally hard problems \cite{MSY24}. In this work, we propose a new approach for testing quantumness. Instead of placing assumptions on the computational power of the prover, we constrain its \emph{memory}.

\subsection{Our Contributions}

We consider a model where the adversary has a limited amount of memory, but is otherwise computationally unbounded. This is commonly referred to as the \emph{bounded storage model} \cite{Maurer92,CM97,Ding01,JC:Vadhan04,DHRS04,EC:GuaZha19,DQW23}.
We show that in this model, one can construct protocols that are (i) efficiently verifiable and (ii) unconditionally sound.

Our first result is a simple PoQ protocol with a quadratic gap between the space complexity of the honest execution and the memory bound on the malicious prover.\footnote{For the moment, it suffices to think of the space complexity of a quantum prover as the logarithm of the dimension of his Hilbert space. We make this more precise as the discussion progresses.} The protocol is extremely simple; the honest quantum prover performs simple arithmetics modulo $2$, and can be implemented with one-qubit Hadamard and CNOT gates and a \emph{single} one-qubit non-Clifford gate. The prover's circuit is illustrated in \Cref{fig:qart}. The protocol is inspired by techniques from the cryptographic literature \cite{CVQC,BCMVV21,BGKPV23,NZ23,BK24,BKMSW24}, and its soundness relies on Raz's (classical) memory lower bound for learning parities \cite{Raz}. We summarize our result with the following statement.

\begin{theorem}[Simple PoQ with Quadratic Gap, Informal]
    There exists a proof of quantumness protocol with constant gap between completeness and soundness, where the honest execution runs $O(n^2)$ time, the verifier has $O(n)$ memory, and the honest prover has $n+2$ qubits. Soundness holds unconditionally against any classical attacker that uses less than $n^2/20$ bits of memory.
\end{theorem}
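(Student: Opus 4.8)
The plan is to instantiate the claw-state paradigm of Brakerski--Christiano--Mahadev--Vazirani--Vidick (as refined in the computational-Bell-test line of work), but with the claw-freeness coming \emph{unconditionally} from Raz's memory lower bound instead of a cryptographic assumption. Concretely, the verifier samples a uniformly random secret $x\in\bin^{n}$ and then streams $\Theta(n)$ uniformly random linear constraints satisfied by $x$; this defines an $\FF_2$-linear two-to-one map whose two preimages of any image differ exactly by $x$. The verifier keeps only $x$ (and $O(1)$ extra words), discarding the stream, so it uses $O(n)$ memory. The honest prover builds the claw state on its $n$-qubit input register plus a \emph{single} ancilla: for each streamed constraint it writes the corresponding inner product into the ancilla using CNOT gates and measures it, which halves the support of the superposition with no post-selection and leaves $\frac{1}{\sqrt 2}\big(\ket{r_0}+\ket{r_1}\big)$ with $r_0\oplus r_1=x$. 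After the prover commits to the resulting image, the verifier issues a random challenge: a ``preimage'' challenge (measure and return one of $r_0,r_1$) or a ``quantumness'' challenge that runs a computational Bell test on one extra ``magic'' qubit. Preparing that magic qubit is the \emph{single} non-Clifford one-qubit gate, and it is essential: an all-Clifford prover is classically simulable by Gottesman--Knill and could win no quantumness test, which is also why a purely linear-mod-$2$ protocol cannot work.

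\textbf{Completeness.} This should be a direct calculation. The streamed constraints pin the kernel down to $\{0,x\}$ except with probability $2^{-\Omega(n)}$; the sequence of CNOT-and-measure steps deterministically produces a valid claw state; the preimage response is always accepted; and the Bell-test response is accepted with the quantum CHSH-type value, up to the fixed constant loss caused by the single non-Clifford gate. Hence the honest prover is accepted with some constant probability $c$ strictly above the value $s$ of the best classical strategy, the running time is $O(n)$ rounds of $O(n)$-gate operations, i.e.\ $O(n^{2})$, and the qubit count is $n+2$.

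\textbf{Soundness.} For the soundness argument, let $\adv$ be a classical prover using at most $n^{2}/20$ bits of memory that is accepted with probability $s+\delta$ for non-negligible $\delta$. The crucial structural point is that, since $\adv$'s entire internal state $\sigma$ just after the streaming phase (and before the challenge) is a string of at most $n^{2}/20$ bits, the reduction can record $\sigma$ in full and re-run $\adv$ from it on a constant number of fresh challenges --- at which point $\adv$ no longer sees the stream. Averaging over $\adv$'s coins and over $\sigma$ isolates, with constant probability, a ``good'' $\sigma$ from which $\adv$ answers both a preimage challenge and the quantumness challenge well; combining a preimage $r_b$ with the Bell-test responses from the constant number of rewinds should then yield, with non-negligible advantage, the bit $\langle d,x\rangle$ for a verifier-chosen uniformly random $d$. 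Composed with $\adv$'s streaming phase, this is a classical streaming algorithm that reads only the $\Theta(n)$ streamed samples, uses memory only a constant factor above $n^{2}/20$, and predicts a uniformly random parity of the secret $x$ with non-negligible advantage. Raz's memory lower bound --- in the form that rules out even non-trivial prediction of a fresh random parity for memory $o(n^{2})$ and sub-exponentially many samples --- forbids this except with probability $2^{-\Omega(n)}$; hence $\delta=2^{-\Omega(n)}$, so no $n^{2}/20$-memory classical prover is accepted with probability more than $s+2^{-\Omega(n)}<c$, which is the claimed constant gap.

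\textbf{The main obstacle.} The hard part will be reconciling the three resource budgets simultaneously. First, the verifier must be able to \emph{check} the prover's responses with only $O(n)$ memory while the honest prover uses only $n+2$ qubits; this forces the streamed instance and its trapdoor (essentially $x$ itself) to be carefully structured, and one must make sure the stream never trivially leaks a valid response --- in particular the equation-type challenge must be verifier-chosen and fresh, so a classical prover cannot simply echo a streamed constraint. Second, and most importantly, the extraction in the soundness proof must be \emph{memory-light}: the recorded state $\sigma$ plus the rewinding workspace plus the bookkeeping that distils the bit $\langle d,x\rangle$ must together still sit inside the window where Raz's bound bites, which rules out, for example, performing Gaussian elimination on $\Theta(n)$ collected equations (that alone would cost another $\Theta(n^{2})$ bits on top of $\sigma$). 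The extraction therefore has to work from a \emph{constant} number of rewinds of $\sigma$ rather than by solving a linear system, and the constant $n^{2}/20$ in the statement is chosen small enough to absorb this constant-factor overhead relative to Raz's threshold. Finally, one must verify that the single non-Clifford gate genuinely suffices for the honest prover to beat the classical value $s$, while every all-Clifford (hence classically simulable, hence memory-modelled) strategy is already subsumed by the bounded-memory soundness bound.
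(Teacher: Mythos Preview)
Your high-level strategy---instantiate a claw-generation protocol via the $2$-to-$1$ linear map induced by streamed rows orthogonal to the verifier's secret, follow with a CHSH-type test, and reduce soundness to Raz via a Goldreich--Levin step---matches the paper's. But there is a genuine gap at exactly the point you flag as ``the main obstacle'': how the verifier actually \emph{checks} the prover's responses with $O(n)$ memory.

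Your protocol has a preimage challenge (return one of $r_0,r_1$) and a Bell-test challenge. Neither can be verified by a verifier who has discarded the streamed matrix $A$ and remembers only the secret $x=r_0\oplus r_1$. For the preimage challenge, checking that a claimed $r_b$ satisfies $Ar_b=y$ requires $A$. For the Bell-test challenge, the standard verification requires the verifier to know $r^{\intercal}r_0$ for a random $r$: in the ``computational basis'' case $r^{\intercal}r_0=r^{\intercal}r_1$, the verifier must compare the prover's bit to this common value, and knowing only $x=r_0\oplus r_1$ does not determine $r^{\intercal}r_0$. You say the trapdoor must be ``carefully structured'' but do not say how; this is precisely the point where the protocol design does real work.

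The paper's key idea here is to drop the preimage challenge entirely and instead make the verifier's $r$ computable against the prover's outputs. During streaming, the verifier samples $u\in\FF_2^m$ and accumulates $r=\sum_i u_i a_i$ on the fly (one row at a time, $O(n)$ memory). Then $r^{\intercal}r_0=u^{\intercal}Ar_0=u^{\intercal}y$, where $y$ is the image the prover reports bit-by-bit---so the verifier learns $r^{\intercal}r_0$ without ever knowing $r_0$. For the other CHSH case the verifier instead samples a fresh $r$ with $r^{\intercal}(x,-1)=1$; here only $d^{\intercal}(x,-1)$ is needed for the check, which the verifier knows. A coin flip between the two cases makes $r$ uniform from the prover's view, so the soundness reduction goes through.

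Two smaller points on your soundness sketch. First, the paper's reduction uses no preimage: it rewinds only on the two values of $\theta$ to build a predictor for $r^{\intercal}(x,-1)$ given random $r$, then applies Goldreich--Levin to extract $(x,-1)$. Second, the reduction's memory overhead must be \emph{additive} $O(n\log n)$ (the cost of Goldreich--Levin), not a constant \emph{factor} above $n^{2}/20$: any factor strictly greater than $1$ already pushes you outside the regime where Raz's bound applies.
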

%
We stress that, although the verifier's runtime and the prover's quantum circuit size are quadratic, an honest execution of the protocol has linear parallel runtime. In particular, the quantum circuit implementing the prover has depth $O(n)$.

Despite its simplicity, an unsatisfactory aspect of the above protocol is that it has only a quadratic gap between the memory of the honest and the malicious prover. Leveraging techniques from the literature of protocols in the bounded storage model \cite{DQW23}, we show that this gap can be made arbitrarily large. The following theorem establishes the theoretical feasibility of a PoQ with an exponential gap.

\begin{theorem}[PoQ with Exponential Gap, Informal]
    There exists a proof of quantumness protocol with constant gap between completeness and soundness, where the honest execution execution runs $\poly(n)$ time, the verifier has $\polylog (n)$ memory, and the honest prover has $\polylog (n)$ qubits. Soundness holds unconditionally against any classical attacker that uses less than $n$ bits of memory.
\end{theorem}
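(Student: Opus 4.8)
The plan is to reuse the architecture of the quadratic-gap protocol essentially verbatim, changing only the source of hardness. That protocol compiles a two-to-one ``parity'' function into a proof of quantumness through the measurement-based template of \cite{FOCS:BCMVV18} (commit to a claw superposition, then answer either a computational-basis or a Hadamard-basis challenge), and its soundness is exactly Raz's memory lower bound --- which is why the gap is only quadratic. To obtain an arbitrary gap I would instead instantiate the \emph{same} compiler with a \emph{bounded-storage trapdoor claw-free function} (BSM-TCF): an interactive, $\poly(n)$-round streamed specification of a (possibly noisy) two-to-one function $f$ such that \textit{(i)} an honest quantum prover can process the whole stream \emph{coherently while ever holding only $\polylog(n)$ qubits}, ending in a state $\ket{x_0}+\ket{x_1}$ with its image $y=f(x_0)=f(x_1)$ already measured out; \textit{(ii)} the verifier, retaining only a $\polylog(n)$-bit \emph{digest} of the stream, can later check a purported preimage $x_b$ of $y$ and a purported Hadamard answer $d$ (``$\langle d, x_0\oplus x_1\rangle=0$'') for consistency with $y$; and \textit{(iii)} no classical algorithm with fewer than, say, $2n$ bits of memory can, after streaming the specification, output a claw $(x_0,x_1)$. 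Given such an $f$, the proof of quantumness is immediate, and it inherits a constant completeness--soundness gap from the compiler (take the noise in \textit{(i)} small; or, if necessary, take $O(1)$ parallel copies, which still leaves the honest memory $\polylog(n)$). The theorem thus reduces to \emph{constructing} a BSM-TCF, and this is where I would invoke and re-engineer the bounded-storage-model machinery of \cite{DQW23}: ``speak much'' --- stream for $\poly(n)$ rounds --- to push the storage gap from $\polylog(n)$ up to $n$, while ``remember little'' --- the $\polylog(n)$-bit digest --- to keep both honest parties small.

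For completeness I would write out the honest prover explicitly: it sweeps through the stream keeping a single $\polylog(n)$-qubit working register, at each step applying a constant number of Hadamard and \textsf{CNOT} gates together with the one non-Clifford rotation gate (exactly as in the first protocol), so that after the sweep it holds $\ket{x_0}+\ket{x_1}$ and has already reported $y$. A computational-basis challenge is then answered by measuring the register; a Hadamard challenge by a Hadamard layer followed by a measurement, which by the usual phase-cancellation calculation returns some $d$ with $\langle d, x_0\oplus x_1\rangle=0$. The verifier accepts by construction of its digest (up to a negligible error, or a small constant in the noisy case). All the real work here is property \textit{(i)} --- the small-space coherent ``streaming evaluator/equivocator'' for $f$ --- which is the piece imported from \cite{DQW23}.

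For soundness, suppose $\padv$ is a classical prover with fewer than $n$ bits of memory making $\Verifier$ accept with probability exceeding the honest acceptance probability by a non-negligible $\epsilon$. Because $\padv$ is memory-bounded, at the instant the challenge arrives its \emph{entire} configuration is a string $\sigma$ of at most $n$ bits, and everything it does afterwards is a randomized function of $\sigma$ and the challenge. A routine averaging argument then yields, with non-negligible probability over a run, a $\sigma$ from which one can both \textit{(a)} extract a valid preimage $x_b$ of $y$ by running $\padv$'s computational-basis branch, and \textit{(b)} collect enough valid Hadamard answers $d$ --- by re-invoking $\padv$ from $\sigma$ under fresh challenge randomness --- to reconstruct $x_0\oplus x_1$ via the standard Goldreich--Levin-style / linear-algebraic decoding; together these give a claw. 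The composite procedure --- simulate $\padv$ through the stream once to obtain $\sigma$, record the ($\polylog$-size) digest and $y$, then post-process $\sigma$ --- uses $n+O(\polylog n)$ bits of memory and outputs a claw, contradicting property \textit{(iii)}. Note that there is no ``rewinding the verifier'' difficulty despite the long stream: the stream is run only once, and only the $n$-bit state $\sigma$ (and the small digest) is carried across the re-invocations of $\padv$.

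The crux --- and what I expect to be the main obstacle --- is reconciling \textit{(i)} with \textit{(iii)}: the ``hard core'' of $f$ must carry $\Omega(n)$ bits of entropy, so that a claw provably cannot be compressed into $n$ classical bits, yet the quantum prover must never materialize more than $\polylog(n)$ qubits of it at a time. This forces $f$ to be a genuinely \emph{interactive, streamed} object whose claw is ``spread across'' all $\poly(n)$ rounds and accessible only coherently, so the DQW23 constructions --- designed for two-party bounded-storage tasks such as key agreement and oblivious transfer --- must be turned into a two-to-one function that is claw-free against bounded memory \emph{and} equipped with a succinct trapdoor. Checking that one construction delivers all of \textit{(i)}--\textit{(iii)} simultaneously --- small-space quantum equivocation, a correct $\polylog(n)$-size verifier trapdoor, and claw-freeness with enough slack above $n$ to absorb the $\polylog(n)$ overhead of the soundness reduction --- is the technical heart; once those abstract properties are in hand, the proof-of-quantumness wrapper is just the first protocol's analysis re-run with them in place of Raz's bound.
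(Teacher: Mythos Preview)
Your high-level template is right: reduce to building a claw-generation subroutine with bounded-storage soundness, then plug it into the standard claw-to-PoQ compiler; and your soundness reduction (snapshot the $\leq n$-bit state $\sigma$ at challenge time, then Goldreich--Levin from rewound answers) is essentially what the paper does. But the proposal has a genuine gap at exactly the point you yourself flag as ``the crux'': you never actually construct the BSM-TCF. You say you would ``invoke and re-engineer'' \cite{DQW23}, but the architecture you sketch---a single streamed two-to-one function $f$ that the prover evaluates coherently in $\polylog(n)$ space, ending with $\ket{x_0}+\ket{x_1}$ and a measured image $y$---is not what the paper builds, and it is not clear how to realize such an $f$ directly while keeping the verifier's digest $\polylog(n)$ bits.

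The paper's construction is quite different from a ``streamed function description''. The verifier streams $k\gg n$ \emph{uniformly random bits} $U_1,\dots,U_k$ while secretly remembering the values at two random positions $v_0^*,v_1^*\in[k]$; the prover loads the stream coherently into $\sum_{v}\ket{v}\ket{U_v}$ using $O(\log k)$ qubits. Then the two parties run a \emph{coherent interactive hashing} protocol (\`a la \cite{NOVY,DHRS04}) on the index register, collapsing the state to a superposition over exactly two indices; the verifier restarts unless those indices equal $v_0^*,v_1^*$. Interactive-hashing soundness plus a min-entropy/incompressibility argument on the stream shows that a memory-$n$ prover cannot know \emph{both} bits $U_{v_0^*},U_{v_1^*}$---but this yields only \emph{one bit} of hardness. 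The paper therefore repeats this $\lambda$ times independently and then \emph{stitches} the $\lambda$ one-bit claws into a single claw via a CNOT-style entangling-and-measure step, so that breaking the final claw requires predicting all $\lambda$ hard bits simultaneously. None of the three key ingredients---interactive hashing as the two-index selection mechanism, the ``remember two random stream positions'' trapdoor, and amplification by stitching---appears in your outline, and without them there is no construction.

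Two smaller points. First, the paper's compiler is the CHSH-style test of \cite{BGKPV23} (rotated-basis measurement of a single committed qubit), not the preimage/Hadamard two-challenge test of \cite{FOCS:BCMVV18} you describe; this is inessential but changes the constants and the form of the soundness inequality. Second, your honest-prover description (``exactly as in the first protocol'', one non-Clifford gate per streamed symbol) does not match: in the paper the streamed bits are absorbed by controlled bit-flips only, and the single non-Clifford rotation appears once at the very end, in the CHSH measurement.
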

As our final contribution, we show that the above protocol achieves a form of soundness also against \emph{quantum provers}. In more details, it allows the generation of a state of the form
\[
\frac{\ket{x_0} + \ket{x_1}}{\sqrt{2}}
\]
such that even the prover that holds that state in memory cannot guess both $x_0$ and $x_1$. In the literature, this is known as a \emph{claw state generation} protocol \cite{BK24}.

\begin{theorem}[Claw Generation with Quantum Soundness, Informal]\label{thm:claw-gen-quantum-informal}
    There exists a claw generation protocol, 
    where the honest execution runs $\poly(n)$ time, the verifier has $\polylog (n)$ memory, and the honest prover has $\polylog (n)$ qubits. Soundness holds unconditionally against any quantum attacker that uses less than $n$ qubits.
\end{theorem}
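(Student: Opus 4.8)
The plan is to realize the claw generation protocol as a syntactic variant of the exponential-gap protocol from the previous theorem, and then upgrade its soundness analysis from classical to quantum provers. Recall that the core of that protocol is a pair of functions $f_0,f_1$ implicitly specified by a long random string that the verifier streams once, with the properties that: (i) the honest prover, reading the stream with $\polylog(n)$ qubits, can coherently evaluate the pair --- prepare $\sum_{b,x}\ket{b}\ket{x}\ket{f_b(x)}$, measure the image register to obtain a value $y$ that it forwards to the verifier, and be left with a claw state $(\ket{x_0}+\ket{x_1})/\sqrt{2}$ satisfying $f_0(x_0)=f_1(x_1)=y$; (ii) the verifier, with $\polylog(n)$ memory, can afterwards check a claimed preimage against $y$; and (iii) no \emph{classical} prover using fewer than $n$ bits of memory can, at the end of the protocol, output a claw $(x_0,x_1)$ of $(f_0,f_1)$. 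Completeness and all the resource bounds are then inherited verbatim, so the whole content of the theorem is promoting (iii) to quantum provers using fewer than $n$ qubits.

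Since producing both preimages of the announced $y$ is exactly producing a claw of $(f_0,f_1)$, it suffices to prove a \emph{quantum} bounded-storage hardness of claw-finding for the streamed function pair. I would revisit the information-theoretic argument underlying (iii) --- which, following the techniques of \cite{DQW23}, ultimately reduces claw-finding to the impossibility of recovering a hidden secret once the stream has passed, given a bounded memory --- and redo it with a quantum workspace of $q<n$ qubits in place of a classical memory of $<n$ bits. The substitution that makes this go through is an incompressibility bound: a $q$-qubit state can be used to answer only about $q$ bits' worth of independent, individually unpredictable queries about the stream (Nayak's bound on quantum random access codes / a Holevo-type argument), so the same counting that excludes a classical memory of $<n$ bits excludes a quantum memory of $<n$ qubits, up to the constant-factor slack already present in the parameters.

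Two bookkeeping points need attention. First, the reduction that feeds $\padv$ into the quantum lower bound must respect the qubit budget: it runs $\padv$ and then \emph{measures} its short classical output $(x_0,x_1)$, hence uses no more qubits than $\padv$ itself, so the bound $q<n$ is preserved. Second, one should check that the streaming primitive imported from \cite{DQW23} has a security proof that is purely information-theoretic and black-box (i.e., relativizing) in the relevant sense: if it is a black-box reduction to a random oracle or to an incompressibility game, the quantum lifting is essentially automatic from the known quantum analogues of those statements; if it instead relies on a tailored combinatorial argument, that argument must be re-examined in the presence of a quantum auxiliary state.

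I expect the genuine obstacle to be exactly this last step --- establishing the quantum bounded-storage lower bound with the right constant, i.e.\ ruling out \emph{all} quantum provers with strictly fewer than $n$ qubits rather than, say, $n/\polylog(n)$. Classical streaming lower bounds are already delicate in this regime, and the quantum version must additionally handle the fact that the adversary's per-symbol update is an arbitrary unitary on its $q$-qubit register, so that ``what the memory knows at the end'' is an honestly quantum object; turning this into a clean incompressibility statement without losing large constant or polylogarithmic factors is where the real work sits. Should only a weaker constant be attainable, the theorem still holds with $n$ replaced by $n^{1-o(1)}$, which is qualitatively the same claim --- an exponential gap against a $\polylog(n)$-qubit honest prover.
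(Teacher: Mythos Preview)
Your proposal misidentifies the structure of the underlying protocol, and this propagates into a gap in the quantum soundness argument. The exponential-gap protocol is \emph{not} built from a streamed function pair $(f_0,f_1)$ with an image measurement and preimage check. Rather, a single 1-bit claw is produced by streaming $k$ random bits (so the prover holds $\sum_v\ket{v}\ket{u_v}$), then running \emph{coherent interactive hashing} on the index register to collapse to a superposition of two indices $v_0,v_1$; the hard bit is one of $u_{v_0},u_{v_1}$. This is repeated $\sec$ times and the $\sec$ independent 1-bit claws are \emph{stitched} together into one claw. Your proposed reduction (``measure the short classical output $(x_0,x_1)$ and feed it into a lower bound'') does not match this structure: there is no single image $y$ whose two preimages the adversary must output, and the security per repetition is only one bit, so the amplification step is essential and cannot be omitted.

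Because of this, a generic Nayak/Holevo random-access-code bound does not plug in where you suggest. The classical analysis first uses min-entropy of the stream conditioned on the bounded state to exhibit a small ``bad set'' of indices, then invokes interactive-hashing soundness to say both $v_0,v_1$ rarely land in that set. In the quantum case the adversary's post-stream register is a genuine $m$-qubit state, and one needs a statement that, for most indices $i$, the bit $U_i$ is \emph{trace-distance close to uniform} jointly with the quantum memory and the prefix $U_{<i}$. The paper obtains this from the Plug-In Lemma of \cite{BBK22} (\cref{lem:plug-in}), which gives a $\sqrt{m/k}$-type bound; only then can interactive-hashing soundness be applied round by round and the stitched claw analyzed via a hybrid over the $\sec$ repetitions. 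A further point you should be aware of: the resulting quantum soundness error is only $O(\sec^2\sqrt{m/k})$, i.e.\ inverse polynomial for polynomial $k$, not negligible. So ``ruling out all provers with fewer than $n$ qubits with negligible error'' is not what is obtained; the informal theorem is realized with an inverse-polynomial soundness gap (which suffices for the downstream applications).
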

%

Claw generation protocols have proven to be extremely valuable in quantum cryptography. For instance, all known protocols to classically verify BQP computation rely on claw states, one way or another \cite{CVQC,BCMVV21,BGKPV23,NZ23,BK24,BKMSW24}. To demonstrate the usefulness of our claw-generation protocol, we sketch how to derive analogous PoQ protocols with unconditional security in the bounded storage model, by adapting analysis from prior work~\cite{BK24} (\cref{sec:appendix}).


\subsection{Technical Overview}

Our starting point is a recent result of Raz~\cite{Raz}, who considers the following experiment between a verifier and a prover:
\begin{enumerate}
        \item The verifier samples a uniform $s \gets \mathbb{F}_2^n$.
        \item For $i=1\dots m$, the verifier samples a uniform $v_i \gets\mathbb{F}_2^n$ and sends $(v_i, v_i^\intercal s)$ to the prover.
        \item The prover returns some $\Tilde{s}\in\FF^n_2$.
\end{enumerate}
The main theorem of~\cite{Raz} shows that, if a prover uses less than $n^2/20$ memory bits, then the probability that he outputs $s$ is $2^{-\Omega(n)}$, for any $m = \poly(n)$. This implies a strong lower bound on the memory needed to solve linear equation systems, for any classical computer. One could speculate that quantum algorithms are better suited for solving this kind of tasks, thus immediately yielding a PoQ protocol based on the memory-hardness of solving linear equations. Alas, we are not aware of any quantum algorithm solving this problem with $O(n)$ qubits, bringing us back to square one.

Nevertheless, inspired by a recent work from Guan and Zhandry~\cite{EC:GuaZha19}, we observe that the above problem has sufficient \emph{structure} to allow us to leverage techniques developed in the context of computationally security. Turning our attention to cryptographic PoQs, we see that all recent works~\cite{BCMVV21,KCVY22,KLVY23,BGKPV23} follow the same three-phase template:
\begin{itemize}
    \item (Claw Generation) In the first phase, the prover and verifier engage in an interaction, at the end of which the internal state of the (honest) prover is of the form
    \[
    \frac{\ket{x_0}+\ket{x_1}}{\sqrt{2}},
    \]
    where $x_0$ and $x_1$ are two bitstrings that are known to the verifier. Furthermore, it is required that the prover cannot output both $x_0$ and $x_1$ simultaneously. Thus, this phase relies on cryptographic hardness and is typically realized using a \emph{trapdoor claw-free function} (TCF)~\cite{Mah18FHE}. We refer to the above state as a \emph{claw state}, and to this interaction as a \emph{claw generation} subroutine.
    \item (Commitment) In the second phase, the prover is instructed to convert the claw state into a single qubit $b\in\{\ket{0},\ket{1},\ket{+},\ket{-}\}$. The security guarantee from the claw state generation phase translates to the fact that the prover cannot tell whether $b$ is in the $Z$-basis or in the $X$-basis, as this would imply learning the values $x_0$ and $x_1$.
    \item (Non-Local Game) The protocol then completes with a \emph{CHSH test}: A test of quantumness that is derived by the CHSH non-local game, where a quantum prover is challenged to produce a correlation with the ``hidden basis'' that a classical prover cannot.
\end{itemize}
Note that after the first phase, all guarantees are information-theoretic. In other words, cryptography is used only to derive the hardness of recovering the values in the claw state. Thus, our task to design a PoQ in the bounded storage model reduces to the task of designing a claw state generation protocol with security against memory-bounded provers.

\paragraph{A Simple PoQ from Learning Parities.} The above experiment considered by Raz already suggests a construction for a claw state generation protocol. Consider the linear map $x\mapsto Ax$, where $A=(V,Vs)\in\FF^{m\times(n+1)}$ is induced by samples in the experiment, namely composed by rows of the form $(v_i,v^\intercal_i s)$. We make two simple observations:
\begin{enumerate}
    \item $A$ is two-to-one with overwhelming probability, since $V$ is full rank with overwhelming probability.
    \item Finding a \emph{claw}, namely $x_0,x_1$ such that $Ax_0=Ax_1$, implies finding $s$ since $x_0+x_1=(s,-1)$ is the only non-trivial element in $\ker(A)$. Computing $s$ given samples of the form $(v_i,v_i^\intercal s)\in\FF^{n+1}$ is precisely the problem of learning parities considered in the experiment, and cannot be done with less than $n^2/20$ memory.
\end{enumerate}

Hence, $x\mapsto Ax$ is a \emph{claw-free function} in the bounded storage model, namely a 2-to-1 function where it is hard to find a claw. Claw generation from such a function is straight-forward: the verifier sends the function to the prover, the prover computes the function over the uniform superposition then measures an outcome $y$, reducing the state to a superposition of the two pre-images of $y$.

Naively performing the above, however, requires the parties to communicate $A$ and, in particular, to remember it using quadratic space. Instead, we perform the evaluation in rounds: at rounds $i=1,\dots,m$ of the interaction, the verifier samples a row from $A$, i.e. a vector $a_i=(v_i,v_i^\intercal s)\in\FF^{n+1}_2$, and sends it to the prover. The prover starts with a uniform superposition over $n+1$ qubits in a register $\X$ and, at round $i$, applies the isometry
\[
\ket{x}_\X\mapsto\ket{x}_\X\ket{a^\intercal_i x}_\Y,
\]
measures $\Y$ and returns the outcome $y_i\in\FF_2$ to the verifier. Overall, such an interaction corresponds to the prover computing the mapping $\ket{x}_\X\mapsto \ket{x}_\X\ket{Ax}_{\Y}$, then measuring the $m$ qubits in $\Y$ to obtain an outcome $y=(y_1,\dots,y_m)$ and a residual claw state $(\ket{x_0}+\ket{x_1})/\sqrt{2}$, where $x_0,x_1$ are the two pre-images of $y$ under $A$.



While the above protocol seems to give us the claw generation sufficient for a PoQ, we have omitted an important detail: How can the verifier compute $x_0$ and $x_1$, which are generally necessary to carry out the CHSH test? Once again, computing $x_0$ and $x_1$ given $y$ requires the verifier to invert the matrix $A$. Since $A$ has quadratic size, it is infeasible for the verifier in our model to remember it, let alone invert it. Taking a closer look at the literature, we observe that existing instantiations of this outline, e.g.,~\cite{KCVY22}, do not require the verifier to extract $x_0$ and $x_1$ entirely but rather only the bits $b_0=r^\intercal x_0$ and $b_1=r^\intercal x_1$ for a uniformly random $r\gets\FF^{n+1}_2$ \emph{sampled by the verifier} in the commitment phase, independently of $x_0$ and $x_1$. In fact, it is required that the verifier can tell if $b_0$ and $b_1$ are equal -- in which case $r^\intercal(s,-1)=r^\intercal(x_0+x_1)=0$ -- and, only when they are, to know their value.

Given this observation, we propose a memory-efficient implementation of the verifier. At the beginning of the protocol, the verifier flips a coin to decide whether $r$ satisfies $r^\intercal(s,-1)=0$ or $r^\intercal(s,-1)=1$. In the former case, the verifier is not required to learn $b_0$ and $b_1$ and, therefore, the CHSH test can be executed with a uniformly random $r\notin\ker((s,-1))$, independently of $x_0,x_1$. In the latter case, the verifier must know $r^\intercal x_0$. We use the fact that the prover communicates $y$, which is a linear function of $x_0$, to the verifier. Instead of sampling $r\gets\ker((s,-1))$ directly, the verifier samples $r$ as a linear combination of the rows of $A$: when he sends $a_i$ at round $i$, he decides ``on-the-fly'' whether to include it in the linear combination or not, with probably $1/2$ each. Remembering his choices, the verifier completes the claw generation with a linear combination $u\in\FF^m_2$ satisfying $u^\intercal A=r$. In particular, it holds that $r^\intercal x_0=u^\intercal A x_0=u^\intercal y$. This the verifier can compute given $u$ and $y$ in linear space, allowing him to execute the CHSH test also in this case.

\paragraph{A PoQ with an Exponential Gap.} Our first PoQ, while extremely simple, achieves only a quadratic gap between the space complexity of the honest parties and the memory needed for a successful attacker. A natural question is whether we can achieve a better gap.

We positively answer this question in our second result. We devise an interactive claw generation protocol that is secure against attackers with memory $m$ while requiring only $\polylog m$ memory for an honest execution. The protocol is based on \emph{interactive hashing}~\cite{NOVY,DHRS04}. In prior work~\cite{CCM98,Ding01,DQW23}, (classical) interactive hashing was used to construct oblivious transfer (OT) protocols in the BSM. Most relevant to us is the work by Dodis et al.~\cite{DQW23} where, roughly speaking, interactive hashing is used to let a sender and a receiver jointly choose two bit values from a long stream $u_1,\dots,u_k\in\bin$, such that the verifier knows both values but a bounded-memory receiver can remember only one of them by the end of the protocol. We observe that realizing this outline with a \emph{quantum} receiver, using a coherent implementation of interactive hashing~\cite{MY23}, allows the receiver to obtain a superposition of the two chosen bits while still preventing him from recovering both simultaneously. 

Let us recall the outline from~\cite{DQW23} in more detail. A sender sends a stream of $k\gg m$ uniform bits $u_1,\dots,u_k\gets\bin$, one at a time, to the receiver. The sender remembers the bits at two random locations $v^*_0,v^*_1\gets[k]$ of his choice. After the streaming is complete, the two parties perform interactive hashing. The transcript of the interactive hashing protocol defines a 2-to-1 hash function $h$ over the domain $[k]$ and a hash value $y$, and guarantees the following: \begin{enumerate*}[label=(\roman*)]
    \item\label{item:ih-completeness} On the one hand, the receiver can control one pre-image of $y$ under $h$ and, in fact, he can make it so $y=h(v)$ for an apriori arbitrarily chosen input $v$. \item \label{item:ih-soundness} On the other hand, the receiver cannot control \emph{both} pre-images of $y$ under $h$. Namely, for any bounded-size set of inputs $B$, apriori chosen by the receiver, it holds that $h^{-1}(y)\subseteq B$ with a very small probability. 
\end{enumerate*}
Consequently, the interactive hashing protocol defines a pair of indices $v_0,v_1\in[k]$, and the two chosen bits are set to be $u_{v_0}$ and $u_{v_1}$. By the soundness of the interactive hashing (\ref{item:ih-soundness}), the receiver cannot obtain both $u_{v_0}$ and $u_{v_1}$ by the end of the protocol. This is shown in~\cite{DQW23} as follows: Define $B$ to be the set of bits in the stream about which the receiver remembers sufficient information. Via standard incompressibility argument, due to the bounded memory of the receiver, $B$ cannot be too large. Hence, the receiver cannot have too much information about both $u_{v_0}$ and $u_{v_1}$.

We carry the above outline where the verifier plays the role of the sender and the quantum prover plays the role of the classical receiver, coherently. That is, the prover prepares a uniform superposition over $[k]$, next to an ancilla qubit. In the streaming phase, upon the receipt of the $v^{th}$ bit $u_v$, the prover maps the basis vector $\ket{v}\ket{0}$ to $\ket{v}\ket{u_v}$, entangling the two registers. Consequently, the streaming completes with the prover's internal state being
$$
\frac{1}{\sqrt{k}}\sum_{v\in[k]}\ket{v}_\V\ket{u_v}_\U.
$$

Next, the verifier and the prover, with input register $\V$, perform interactive hashing to select two values $v_0,v_1\in[k]$. The residual state of the prover after this is complete is
$$
    \frac{\ket{v_0,u_{v_0}}+\ket{v_1,u_{v_1}}}{\sqrt{2}},
$$
where $v_0,v_1$ are the two pre-images of $y$ under $h$, for $y$ and $h$ defined by the transcript of the interactive hashing (recall in a classical invocation, it holds by \ref{item:ih-completeness} that $h(v)=y$ for the prover's classical input $v$). The verifier then checks that $h(v^*_0)=h(v^*_1)=y$. If not, the protocol is repeated with a new random stream of bits. Otherwise, the parties have performed successful claw generation: the prover has a superposition of two values that the verifier knows, yet of which he can remember at most one. By extending the analysis from~\cite{DQW23} to the setting of a memory-bounded \emph{quantum} adversary, we additionally show that such claw generation guarantees security also against quantum adversaries. The analysis turns out to be quite technical. In particular, we rely on a lemma from~\cite{BBK22} that can be seen as an analog of incompressibility arguments for quantum information. For more details we refer the reader to \Cref{sec:qhardness}.

While the above claw generation already provides non-trivial hardness for computing the values in the claw, such hardness is limited to that of predicting a \emph{single bit} (note we do not claim that the prover cannot obtain the indices $v_0,v_1$). Naturally, such a ``1-bit claw generation'' cannot provide too much security. We propose a simple security amplification strategy via \emph{stitching}. In stitching, we convert many ``1-bit claws'' into a single claw that is as hard to predict as predicting all of the 1-bit claws that compose it, simultaneously.

Let us describe how to stitch two claws together. Stitching more claws generalizes straight-forwardly. Let the prover's state be $(\ket{v^1_0,u^1_0}+\ket{v^1_1,u^1_1})\otimes (\ket{v^2_0,u^2_0}+\ket{v^2_1,u^2_1})$ after the generation of two claw states. The verifier, who has the values $v^1_0,v^1_1,v^2_0,v^2_1$, sends to the prover two predicates $g^1,g^2:[k]\to\bin$ such that $g^i(v^i_b)=b$. Stitching is complete with the prover applying the following isometry over the registers containing the $v$ values
$$
    \ket{v^1}_{\V^1}\ket{v^2}_{\V^2}\mapsto \ket{v^1}_{\V^1}\ket{v^2}_{\V^2}\ket{g^1(v^1)\oplus g^2(v^2)}_{\B},
$$
then measuring the qubit in $\B$ to obtain a bit $b\in\bin$, which he sends to the verifier. This operation entangles the two claws, resulting in the following stitched claw state
$$
    \frac{\ket{v^1_0,u^1_0}\ket{v^2_b,u^2_b}+\ket{v^1_1,u^1_1}\ket{v^2_{1-b},u^2_{1-b}}}{\sqrt{2}},
$$
which the verifier can anticipate given $b$.

To conclude, our claw generation protocol performs the above 1-bit claw generation sequentially $\sec$ times, where $\sec$ is a security parameter, then performs stitching to stitch together the $\sec$ 1-bit claws into a claw that a memory-bounded attacker can break with only negligible probability.

\section{Preliminaries}

We denote by $[n]$ the set $\{1, \dots, n\}$. We recall the following fundamental fact about the rank of random binary matrices (see, e.g.~\cite{BKW97}).
\begin{proposition}[Rank of a Random Matrix]\label{prop:rank}
    Let $M \gets \mathbb{F}_2^{m \times n}$ be a uniformly sampled random matrix, with $m = 2n$. Then,
    \[
    \Pr\left(\rank(M) = n \right) = 1 - O(2^{-n}).
    \]
\end{proposition}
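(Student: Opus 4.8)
The plan is to show the complementary event---that $M$ does \emph{not} have full column rank---occurs with probability $O(2^{-n})$. Since $m = 2n > n$, having $\rank(M) = n$ is equivalent to the $n$ columns of $M$ being linearly independent over $\mathbb{F}_2$, so it suffices to bound the probability that the columns are linearly dependent.

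I would expose the columns of $M$ one at a time, writing $M = (c_1 \mid c_2 \mid \dots \mid c_n)$ with each $c_k \gets \mathbb{F}_2^m$ independent and uniform. For the columns to be linearly independent it is necessary and sufficient that for every $k$, the column $c_k$ lies outside $\operatorname{span}(c_1, \dots, c_{k-1})$. Conditioned on any fixed choice of $c_1, \dots, c_{k-1}$, this span is a subspace of $\mathbb{F}_2^m$ of dimension at most $k-1 \le n-1$, hence of size at most $2^{k-1}$; since $c_k$ is uniform on $\mathbb{F}_2^m$ and independent of the earlier columns, the probability that $c_k$ falls into this span is at most $2^{k-1}/2^m$. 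A union bound over $k = 1, \dots, n$ then gives
\[
\Pr\!\left(\rank(M) < n\right) \;\le\; \sum_{k=1}^{n} \frac{2^{k-1}}{2^{m}} \;=\; \frac{2^{n}-1}{2^{m}} \;<\; \frac{2^{n}}{2^{2n}} \;=\; 2^{-n},
\]
using $m = 2n$. Therefore $\Pr(\rank(M) = n) \ge 1 - 2^{-n} = 1 - O(2^{-n})$, as claimed.

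There is essentially no serious obstacle here; the only point requiring a moment's care is the orientation of the problem---because $m > n$ we are asking for full \emph{column} rank, so the relevant union bound runs over the $n$ columns and the ``bad'' subspaces have dimension growing only up to $n-1$, comfortably below $m$, which is exactly what makes the geometric sum collapse to $O(2^{-n})$. (If one prefers an exact statement, the same column-by-column argument yields $\Pr(\rank(M)=n) = \prod_{k=0}^{n-1}(1 - 2^{k-m})$, and bounding this product by $1 - \sum_{k=0}^{n-1} 2^{k-m}$ gives the same conclusion, but the union-bound version above is all that is needed.)
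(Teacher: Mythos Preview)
Your argument is correct: the column-by-column exposure plus union bound cleanly gives $\Pr(\rank(M)<n)<2^{-n}$. The paper does not actually prove this proposition but simply cites it as a standard fact (referring to~\cite{BKW97}), so there is no proof in the paper to compare against; your proof is a standard and entirely adequate one.
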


\subsection{Memory Bounded Algorithms}

We say that an adversary is \emph{memory bounded} if the size of its state at any point in the computation is bounded by some parameter $m$. In fact, for most of our bounds (for instance the one in \cref{sec:exp}) we can also consider a slightly stronger adversary that is allowed to have an unlimited amount of short-term memory, but can only store an $m$-bit state after the end of each round. 

\begin{lemma}[\cite{Raz}]\label{lmm:raz}
    For any $C < 1/20$ there exists $\alpha > 0$, such that the following holds. For $m\leq 2^{\alpha n}$ and an algorithm $\mathcal{A}$, consider the experiment:
    \begin{itemize}
        \item Sample a uniform $s \in \mathbb{F}_2^n$.
        \item For $i=1\dots m$: Sample a uniform $v_i \in\mathbb{F}_2^n$ and send $(v_i, v_i\cdot s)$ to $\mathcal{A}$.
        \item $\mathcal{A}$ returns some $\Tilde{s}$.
    \end{itemize}
    If $\mathcal{A}$ uses less than $C n^2$ memory bits, then:
    \[
    \Pr\left(s = \Tilde{s}\right)\leq O(2^{-\alpha n}).
    \]
\end{lemma}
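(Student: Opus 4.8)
The plan is to prove this via the branching--program method for time--space lower bounds. First I would model any learner that uses fewer than $Cn^2$ memory bits and reads $m$ samples as a layered branching program with $m+1$ layers and at most $2^{Cn^2}$ vertices per layer (one per reachable memory configuration), where each internal vertex reads the next labelled sample $(v_i, v_i\cdot s)$ and branches on its value, and each leaf is labelled with a guess $\tilde s\in\mathbb{F}_2^n$. Running $\mathcal{A}$ on a uniform $s$ and a fresh sample stream induces a distribution over root-to-leaf paths, and $\Pr[s=\tilde s]$ is exactly the chance the path ends at a leaf whose label equals $s$. For each vertex $w$ let $\mathbb{P}_w$ be the posterior distribution of $s$ conditioned on the computation reaching $w$; the root posterior is uniform with $\|\mathbb{P}_{\mathrm{root}}\|_\infty = 2^{-n}$. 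I would call $w$ \emph{significant} if $\|\mathbb{P}_w\|_\infty \ge 2^{-(1-\gamma)n}$ for a small constant $\gamma>0$. A non-significant leaf outputs $s$ correctly with probability at most $\|\mathbb{P}_w\|_\infty < 2^{-(1-\gamma)n}$, so the whole task reduces to showing that the computation reaches some significant vertex with probability only $2^{-\Omega(n)}$.

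The core is a progress/potential argument bounding the mass that can flow into significant vertices, and here the \emph{parity structure} is essential. Conditioning on one more sample $(a, a\cdot s)$ updates the posterior by the pointwise multiplication $\mathbb{P}_w(x) \mapsto \mathbb{P}_w(x)\cdot\tfrac{1}{2}\bigl(1+(-1)^{a\cdot(s + x)}\bigr)$ followed by renormalization, i.e.\ a ``twist'' of $\mathbb{P}_w$ by the character $\chi_a$. Because the $2^n\times 2^n$ sign matrix $\bigl((-1)^{a\cdot x}\bigr)_{a,x}$ is (a normalization of) a Hadamard matrix, it is an excellent $\ell_2$-extractor: by Parseval, whenever $\|\mathbb{P}_w\|_2$ is not already large, the update increases $\|\mathbb{P}_w\|_2$ by more than a $1+2^{-\Omega(n)}$ factor for at most a $2^{-\Omega(n)}$ fraction of the samples $a$. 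I would then define an exponential potential $\Phi(w)=g(\|\mathbb{P}_w\|_2)$ (with $g$ growing fast enough to penalize significant vertices but slowly enough that a rare ``bad'' sample is absorbed), freeze each path at the first significant vertex it hits, and show that the expected potential over the active paths grows by at most a $1+2^{-\Omega(n)}$ factor per layer. Since the potential starts at essentially its uniform value and there are only $m\le 2^{\alpha n}$ layers, picking $\alpha$ small enough relative to the hidden constants keeps the final potential $2^{-\Omega(n)}$, hence the mass on significant vertices is $2^{-\Omega(n)}$, which is the desired bound.

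The step I expect to be the main obstacle is making the single-step estimate and the potential genuinely supermartingale-like in the presence of \emph{merging}: a layer-$(i{+}1)$ vertex can be reached through many incoming edges, and averaging posteriors across a merge can, a priori, concentrate mass in a way naive $\ell_2$ tracking does not see. The resolution is to track a finer per-vertex invariant --- essentially $\|\mathbb{P}_w\|_2$ together with control on the heavy coordinates of $\mathbb{P}_w$ (the ``significant values'') --- and to take a union bound over all $m\cdot 2^{Cn^2}$ vertices of the program; this is precisely where the width budget $2^{Cn^2}$ with $C<1/20$ is spent, and it is what pins down the final constant relating $C$ to $\alpha$. Since this careful bookkeeping is exactly the content of~\cite{Raz}, I would invoke it rather than re-derive it, noting only that the cleaner $L_2$-extractor abstraction of Garg--Raz--Tal yields the same conclusion once the Hadamard matrix is plugged in as an $(\Omega(n),\Omega(n))$-extractor.
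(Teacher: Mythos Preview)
The paper does not prove this lemma at all: it is stated as a black-box citation of Raz's result, with no accompanying argument. Your sketch accurately summarizes the branching-program/potential method of~\cite{Raz} (and the $\ell_2$-extractor reformulation of Garg--Raz--Tal), so there is no discrepancy to report --- you are simply supplying the proof that the paper chose to import rather than reproduce.
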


\subsection{The Goldreich-Levin Extractor}

We recall the well-known Goldreich-Levin extractor~\cite{GL89}, highlighting its memory complexity.

\begin{lemma}[\cite{GL89}]\label{lmm:GL}
    Let $f:\mathbb{F}_2^n \to \mathbb{F}_2$ be a function such that, for some $x\in\FF^n_2$,
    \[
    \Pr_{r \in \mathbb{F}_2^n}\left(f(r) = r^\intercal x\right) \geq 1/2 + \varepsilon(n).
    \]
    for inverse-polynomial $\varepsilon(n)$. Then there exists an extractor with memory $O(n \log n)$ running in time polynomial in $n$ and making oracle calls to $f$, that returns $x$ with probability inverse-polynomial in $n$.
\end{lemma}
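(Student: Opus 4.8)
The plan is to run the standard Goldreich--Levin local decoder, but to schedule its computation so that only $O(n\log n)$ bits are ever resident in memory. Choose $\ell = O(\log n)$ large enough that $2^\ell \ge n/\varepsilon(n)^2$; this is possible since $\varepsilon(n)$ is inverse-polynomial, and it makes $2^\ell = \poly(n)$. The extractor begins by sampling independent uniform vectors $r_1,\dots,r_\ell \gets \FF_2^n$ and storing them --- this is the dominant memory cost, $\ell n = O(n\log n)$ bits --- together with a uniformly random guess $b=(b_1,\dots,b_\ell) \gets \FF_2^\ell$ for the (unknown) values $(r_1^\intercal x,\dots,r_\ell^\intercal x)$. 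It then processes the coordinates $i=1,\dots,n$ one at a time: for coordinate $i$ it enumerates the nonempty subsets $S \subseteq [\ell]$ by running an $\ell$-bit counter, for each $S$ it forms $w_S := e_i \oplus \bigoplus_{j\in S} r_j$ in $O(n)$ scratch space, queries $\beta_S := f(w_S)$, and adds the ``corrected vote'' $\beta_S \oplus \bigoplus_{j\in S} b_j \in \{0,1\}$ into a single running tally; after all subsets have been processed it sets $\tilde x_i$ to the majority value of the votes, appends $\tilde x_i$ to the output register, clears the scratch space, and moves to $i+1$. Finally it outputs $\tilde x = (\tilde x_1,\dots,\tilde x_n)$.

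For correctness, condition on the event $E$ that the guess is right, i.e.\ $b_j = r_j^\intercal x$ for all $j$; since $b$ is uniform and independent of $r_1,\dots,r_\ell$, we have $\Pr[E] = 2^{-\ell} = \Theta(\varepsilon(n)^2/n)$. Under $E$, whenever $f(w_S) = w_S^\intercal x$ the corrected vote equals $w_S^\intercal x \oplus \bigoplus_{j\in S} r_j^\intercal x = e_i^\intercal x = x_i$, so every subset on which $f$ happens to be correct votes for the true bit. Over the nonempty subsets the vectors $\bigoplus_{j\in S} r_j$, hence the $w_S$, are pairwise independent and each individually uniform over $\FF_2^n$, so the indicators $\mathbf{1}[f(w_S)=w_S^\intercal x]$ are pairwise independent with mean at least $1/2+\varepsilon(n)$. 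Chebyshev's inequality then bounds the probability that fewer than half of the $2^\ell-1$ votes are correct by $\frac{1/4}{(2^\ell-1)\varepsilon(n)^2} \le \frac{1}{4n}$, so a union bound over the $n$ coordinates gives $\Pr[\tilde x = x \mid E] \ge 3/4$, whence $\Pr[\tilde x = x] \ge \frac{3}{4}\Pr[E] = \Omega(\varepsilon(n)^2/n)$, which is inverse-polynomial as required; no amplification is needed.

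It remains to account for the resources. The stored lists $r_1,\dots,r_\ell$ occupy $O(n\log n)$ bits; the guess $b$, the subset counter, and the vote tally occupy $O(\log n)$ bits; the scratch vector $w_S$ together with the incrementally written output register occupy $O(n)$ bits; and because the subset sums $\bigoplus_{j\in S} r_j$ are regenerated from the stored $r_j$'s on demand and the coordinates are handled sequentially, nothing else is ever held in memory. The total is $O(n\log n)$, and the running time (and number of oracle calls) is $n\cdot(2^\ell-1)\cdot\poly(n) = \poly(n)$. I expect the only non-routine point to be exactly this scheduling: the usual presentation of Goldreich--Levin stores all $2^\ell$ shifted query points, or keeps a tally for every coordinate at once, which would cost $\poly(n)\cdot n$ or $\Theta(n^2)$ bits; the observation that each subset sum can be recomputed in $O(n)$ scratch space whenever it is needed is what brings the memory down to $O(n\log n)$. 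The probabilistic core is verbatim Goldreich--Levin and requires no change.
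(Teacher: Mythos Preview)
Your proof is correct and follows essentially the same approach as the paper: run the standard Goldreich--Levin decoder with $O(\log n)$ stored seed vectors $r_1,\dots,r_\ell$ and guessed inner products, processing the output coordinates one at a time so that the seeds are the only $O(n\log n)$-sized object ever held in memory. The only cosmetic difference is that the paper computes the per-coordinate majority by appealing to Barrington's theorem (constant extra space), whereas you keep an explicit $O(\log n)$-bit running tally; both are well within the stated budget.
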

\begin{proof}[Proof Sketch]
    We briefly recall the description of the extractor. For $t = O(\log n/\varepsilon^2(n)) = O(\log n)$ the extractor proceeds as follows:
    \begin{enumerate}
        \item Sample uniformly random $(r_1, \dots, r_t) \gets \mathbb{F}_2^n$ and $(b_1, \dots, b_t) \gets \mathbb{F}_2$.
        \item For $i = 1\dots n$,
        \begin{enumerate}[label*=\arabic*.]
            \item For any $S \subseteq [t]$, compute
            \[
            x_{i, S} = f\left( \sum_{j\in S}r_j\right) +\sum_{j\in S}b_j.
            \]
            \item Set $x_i = \mathrm{maj}\left\{x_{i,S}\right\}_S$.
        \end{enumerate}
        \item Return $(x_1, \dots, x_n)$.
    \end{enumerate}
    It is clear that the runtime of the algorithm is polynomial in $n$ and it is shown in~\cite{GL89} that the above extractor succeeds with at least inverse-polynomial probability in $n$. As for the memory complexity, the algorithm must store $(r_1, \dots, r_t)$ and $(b_1, \dots, b_t)$ and, for each bit, compute the majority function which is computable in constant memory due to Barrington's theorem~\cite{Barrington}. Overall, the memory required by the extractor is bounded by $O(n\log n)$.
\end{proof}

\subsection{Information Theory}

The \emph{Shannon entropy} (or simply entropy) of a random variable $X$ is defined as $H(X)=\Expct_{x\gets X}(-\log \Pr_{X}(X=x))$. The conditional entropy of $X$ given another random variable $Y$ is defined as $H(X\mid Y)=\Expct_{y\gets Y}(H(X\mid Y=y))$.

The \emph{min-entropy} of a random variable $X$ is defined as $H_\infty(X)=-\log\max_x \Pr(X=x)$. We define the \emph{conditional min-entropy} of $X$ given another random variable $Y$ as: \[H_\infty(X\mid Y)=-\log \Expct_{y\gets Y}(\max_x\Pr(X=x\mid Y=y)) = -\log \Expct_{y\gets Y}(2^{-H_\infty(X\mid Y=y)}).\] Note that $H_\infty(X\mid Y)\leq H(X\mid Y)$ for any $X,Y$ (in particular this holds for the non-conditional notions).

We recall the following basic facts on min-entropy from the literature.
\begin{lemma}[\cite{DORS08}]\label{lem:min-entropy-conditional}
    For any random variables $X,Y,Z$, where $Y$ is over $\bin^m$, we have $H_\infty(X\mid Y,Z)\geq H_\infty(X\mid Z) - m$.
\end{lemma}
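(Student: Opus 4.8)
The plan is to simply unfold the definition of conditional min-entropy given in the excerpt and reduce the inequality to an elementary estimate on probabilities, the only input being that the support of $Y$ has size at most $2^m$. Recall that, specializing the stated definition with the conditioning variable taken to be the pair $(Y,Z)$, we have $2^{-H_\infty(X\mid Y,Z)} = \Expct_{(y,z)\gets (Y,Z)}\big(\max_x \Pr(X=x\mid Y=y,Z=z)\big)$ and $2^{-H_\infty(X\mid Z)} = \Expct_{z\gets Z}\big(\max_x \Pr(X=x\mid Z=z)\big)$. Since $-\log$ is monotone decreasing, it suffices to establish the single inequality
\[
\Expct_{(y,z)\gets(Y,Z)}\Big(\max_x \Pr(X=x\mid Y=y,Z=z)\Big) \;\le\; 2^m\cdot \Expct_{z\gets Z}\Big(\max_x \Pr(X=x\mid Z=z)\Big).
\]

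The key step is a per-$y$ bound. First I would fix an arbitrary $z$ in the support of $Z$ and expand the inner expectation over $y$ conditioned on $Z=z$. The point is that the factor $\Pr(Y=y\mid Z=z)$ does not depend on $x$, so it may be pushed inside the maximum: $\Pr(Y=y\mid Z=z)\cdot\max_x\Pr(X=x\mid Y=y,Z=z) = \max_x \Pr(X=x,Y=y\mid Z=z) \le \max_x \Pr(X=x\mid Z=z)$, where the last inequality just drops the event $Y=y$. Summing over the at most $2^m$ values $y\in\bin^m$ gives $\Expct_{y\gets (Y\mid Z=z)}\big(\max_x\Pr(X=x\mid Y=y,Z=z)\big)\le 2^m\cdot\max_x\Pr(X=x\mid Z=z)$. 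Taking the expectation over $z\gets Z$ on both sides, the left-hand side collapses by the tower property of expectation to $2^{-H_\infty(X\mid Y,Z)}$ and the right-hand side becomes $2^m\cdot 2^{-H_\infty(X\mid Z)}$; applying $-\log$ finishes the argument.

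I do not expect a genuine obstacle here — the whole argument is a few lines of conditional-probability bookkeeping. The one conceptual point to keep straight is to distinguish $\Pr(X=x\mid Y=y,Z=z)$, which can be as large as $1$ even when the min-entropy is high, from the joint quantity $\Pr(X=x,Y=y\mid Z=z)$, which is what actually gets bounded by $\Pr(X=x\mid Z=z)$; and to note that it is precisely the cardinality bound $\lvert\mathrm{supp}(Y)\rvert\le 2^m$ that makes the entropy loss additive in $m$ rather than multiplicative. (An alternative would be to cite a general min-entropy chain rule, but the direct computation above is self-contained and cleaner.)
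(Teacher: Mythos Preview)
Your argument is correct and is exactly the standard proof of this min-entropy chain rule. The paper itself does not prove the lemma at all; it simply cites \cite{DORS08}, so your self-contained derivation is more than what the paper provides.
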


\begin{lemma}[\cite{DORS08}]\label{lem:min-entropy-tail-bound}
    For any random variables $X,Y$ and any $\varepsilon > 0$, it holds that $$ \Pr_{y\gets Y}(H_\infty(X\mid Y=y) \geq H_\infty(X\mid Y)-\log(1/\varepsilon))\geq 1-\varepsilon. $$
\end{lemma}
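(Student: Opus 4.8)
The plan is to observe that the statement is essentially Markov's inequality in disguise. Set $p(y) := \max_x \Pr(X = x \mid Y = y) = 2^{-H_\infty(X \mid Y = y)}$, so that $p(Y)$ is a non-negative random variable, and recall that by the definition of conditional min-entropy recorded above we have $\Expct_{y \gets Y}\big(p(y)\big) = 2^{-H_\infty(X \mid Y)}$.

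First I would rewrite the event of interest entirely in terms of $p$. Since $z \mapsto 2^{-z}$ is monotone decreasing, the inequality $H_\infty(X \mid Y = y) \geq H_\infty(X \mid Y) - \log(1/\varepsilon)$ is equivalent to $p(y) \leq 2^{-H_\infty(X \mid Y)} \cdot \varepsilon^{-1} = \varepsilon^{-1}\,\Expct_{y' \gets Y}\big(p(y')\big)$. Hence the lemma is equivalent to the bound $\Pr_{y \gets Y}\big(p(y) > \varepsilon^{-1}\,\Expct(p(Y))\big) \leq \varepsilon$.

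This last inequality is immediate from Markov's inequality applied to the non-negative random variable $p(Y)$: if $\Expct(p(Y)) = 0$ the event in question is empty, and otherwise $\Pr\big(p(Y) > \varepsilon^{-1}\,\Expct(p(Y))\big) \leq \Pr\big(p(Y) \geq \varepsilon^{-1}\,\Expct(p(Y))\big) \leq \varepsilon$. Taking complements and substituting back $p(y) = 2^{-H_\infty(X \mid Y = y)}$ yields the claimed inequality. I expect no real obstacle; the only points needing (routine) care are the direction of monotonicity when translating between min-entropies and the values $p(y)$, and the bookkeeping of strict versus non-strict inequalities in the Markov step.
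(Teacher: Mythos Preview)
Your proof is correct and is exactly the standard Markov argument. Note, however, that the paper does not actually prove this lemma: it is stated in the preliminaries with a citation to \cite{DORS08} and no proof is given, so there is nothing to compare against beyond observing that your argument is the expected one.
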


\begin{proposition}[\cite{DQW23}]\label{prop:independent-condition}
    For random variables $X,Y,Z$ where $X$ and $Y$ are independent conditioned on $Z$, it holds that $H_\infty(X\mid Y)\geq H_\infty(X\mid Y,Z) \geq H_\infty(X\mid Z)$ and $H_\infty(X,Y\mid Z)\geq H_\infty(X|Z)+H_\infty(Y|Z)$.
\end{proposition}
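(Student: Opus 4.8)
The plan is to pass to \emph{guessing probabilities} and argue entirely there. For random variables $X,W$, write $\gamma(X\mid W):=2^{-H_\infty(X\mid W)}=\Expct_{w\gets W}\bigl(\max_x\Pr(X=x\mid W=w)\bigr)=\sum_w\max_x\Pr(X=x,W=w)$; this is exactly the largest probability with which some function of $W$ (deterministic, without loss of generality) equals $X$. Each of the three inequalities is equivalent to an inequality between such $\gamma$'s, and I would establish them in the order below.

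\textbf{Step 1.} First I show $H_\infty(X\mid Y)\ge H_\infty(X\mid Y,Z)$, equivalently $\gamma(X\mid Y)\le\gamma(X\mid Y,Z)$; this needs no independence hypothesis. Writing $\Pr(X=x,Y=y)=\sum_z\Pr(X=x,Y=y,Z=z)$ and using $\max_x\sum_z a_{x,z}\le\sum_z\max_x a_{x,z}$ for nonnegative $a_{x,z}$, I get $\gamma(X\mid Y)=\sum_y\max_x\sum_z\Pr(X=x,Y=y,Z=z)\le\sum_{y,z}\max_x\Pr(X=x,Y=y,Z=z)=\gamma(X\mid Y,Z)$. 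Operationally: a predictor that sees $(Y,Z)$ does at least as well as one that sees only $Y$.

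\textbf{Step 2.} Next, $H_\infty(X\mid Y,Z)\ge H_\infty(X\mid Z)$, which is where I would use that $X$ and $Y$ are independent given $Z$. For every $(y,z)$ in the support we have $\Pr(X=x\mid Y=y,Z=z)=\Pr(X=x\mid Z=z)$, so the inner maximum $\max_x\Pr(X=x\mid Y=y,Z=z)$ depends only on $z$; averaging over $Y$ conditioned on $Z$ therefore changes nothing, giving $\gamma(X\mid Y,Z)=\sum_z\Pr(Z=z)\max_x\Pr(X=x\mid Z=z)=\gamma(X\mid Z)$, in fact with equality. Chaining Steps 1 and 2 yields the full chain $H_\infty(X\mid Y)\ge H_\infty(X\mid Y,Z)\ge H_\infty(X\mid Z)$.

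\textbf{Step 3.} Finally, $H_\infty(X,Y\mid Z)\ge H_\infty(X\mid Z)+H_\infty(Y\mid Z)$. Using $X\perp Y\mid Z$ once more, for each $z$ in the support the joint maximum factorizes, since the product $\Pr(X=x\mid Z=z)\Pr(Y=y\mid Z=z)$ splits into an $x$-part and a $y$-part: $\max_{x,y}\Pr(X=x,Y=y\mid Z=z)=\bigl(\max_x\Pr(X=x\mid Z=z)\bigr)\bigl(\max_y\Pr(Y=y\mid Z=z)\bigr)$, i.e.\ $2^{-H_\infty(X,Y\mid Z=z)}=2^{-H_\infty(X\mid Z=z)}\cdot 2^{-H_\infty(Y\mid Z=z)}$ pointwise in $z$. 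The one remaining step, which I expect to be the only real obstacle, is to move from this pointwise product to a bound on the $Z$-averages, namely $\Expct_z\bigl(2^{-H_\infty(X\mid Z=z)}\,2^{-H_\infty(Y\mid Z=z)}\bigr)\le\Expct_z\bigl(2^{-H_\infty(X\mid Z=z)}\bigr)\cdot\Expct_z\bigl(2^{-H_\infty(Y\mid Z=z)}\bigr)$. This is exactly where the convention for conditional min-entropy matters: under the worst-case-over-$z$ convention used in~\cite{DQW23}, the statement collapses to the trivial $\min_z\bigl(H_\infty(X\mid Z=z)+H_\infty(Y\mid Z=z)\bigr)\ge\min_z H_\infty(X\mid Z=z)+\min_z H_\infty(Y\mid Z=z)$; under the averaged convention I would instead invoke the structure present in the intended application, where $Z$ is independent of $(X,Y)$ (or $X,Y$ are unconditionally independent), so that the pointwise identity is already the product of averages. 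I would factor this averaging step out as a small self-contained lemma and prove it in whichever regime the downstream application supplies; everything before it is routine.
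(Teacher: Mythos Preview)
The paper does not prove this proposition; it is quoted from \cite{DQW23} without argument, so there is no paper proof to compare against. Your Steps~1 and~2 are correct and standard (Step~2 in fact gives equality, as you note).

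Your hesitation in Step~3 is well placed, and the issue you isolate is not a detail to be patched later but a genuine obstruction. Under the \emph{averaged} definition of conditional min-entropy that this paper adopts, the second inequality $H_\infty(X,Y\mid Z)\ge H_\infty(X\mid Z)+H_\infty(Y\mid Z)$ is \emph{false} in general, even assuming $X\perp Y\mid Z$. Take $Z$ a uniform bit; conditioned on $Z=0$ let $X,Y$ each be constant, and conditioned on $Z=1$ let $X,Y$ be independent uniform bits. Then $2^{-H_\infty(X\mid Z)}=2^{-H_\infty(Y\mid Z)}=\tfrac12\cdot 1+\tfrac12\cdot\tfrac12=\tfrac34$, whereas $2^{-H_\infty(X,Y\mid Z)}=\tfrac12\cdot 1+\tfrac12\cdot\tfrac14=\tfrac58>\tfrac{9}{16}=\tfrac34\cdot\tfrac34$, so $H_\infty(X,Y\mid Z)<H_\infty(X\mid Z)+H_\infty(Y\mid Z)$. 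The obstruction is exactly what you wrote: one would need $\Expct_z[a_z b_z]\le\Expct_z[a_z]\cdot\Expct_z[b_z]$ for the pointwise guessing probabilities $a_z,b_z$, and that is the wrong-way Cauchy--Schwarz.

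So the ``small self-contained lemma'' you propose cannot be proved at this level of generality. Under a worst-case-over-$z$ convention your one-line argument $\min_z(\alpha_z+\beta_z)\ge\min_z\alpha_z+\min_z\beta_z$ is valid and finishes the proof; under the averaged convention used here the proposition is simply incorrect as written and can only be salvaged with extra structure (for instance when the pointwise guessing probabilities are constant in $z$, or when one conditions on a fixed $Z=z$ rather than averaging). You have correctly diagnosed a real gap in the statement as transplanted into this paper, not a routine step left to the reader.
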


\begin{proposition}[\cite{DQW23}]\label{prop:min-entropy-block-entropy}
    For any random variables $X=X_1,\dots,X_k$ and $Y$, it holds that $$H_\infty(X\mid Y)\leq \sum_{i\in[k]}H(X_i\mid Y).$$
\end{proposition}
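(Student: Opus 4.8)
The plan is to derive the bound from two elementary facts, passing through the ordinary conditional Shannon entropy $H(X\mid Y)$ of the tuple $X=(X_1,\dots,X_k)$ as an intermediary. Concretely, I would show $H_\infty(X\mid Y)\le H(X\mid Y)$ and then $H(X\mid Y)\le \sum_{i\in[k]}H(X_i\mid Y)$, and chain the two inequalities.

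For the first inequality, I would argue pointwise-then-average. For every fixed value $y$ in the support of $Y$, the non-conditional inequality $H_\infty(\,\cdot\,)\le H(\,\cdot\,)$ gives $H_\infty(X\mid Y=y)\le H(X\mid Y=y)$. Next, by the definition of conditional min-entropy recorded in the preliminaries,
\[
H_\infty(X\mid Y)=-\log\Expct_{y\gets Y}\big(2^{-H_\infty(X\mid Y=y)}\big)\le \Expct_{y\gets Y}\big(H_\infty(X\mid Y=y)\big),
\]
where the inequality is Jensen's applied to the convex function $t\mapsto 2^{-t}$ (so that $\Expct_y 2^{-H_\infty(X\mid Y=y)}\ge 2^{-\Expct_y H_\infty(X\mid Y=y)}$) followed by monotonicity of $-\log$. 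Averaging the pointwise bound over $y$ then yields $\Expct_{y\gets Y} H_\infty(X\mid Y=y)\le \Expct_{y\gets Y}H(X\mid Y=y)=H(X\mid Y)$. This is precisely the inequality $H_\infty(X\mid Y)\le H(X\mid Y)$ already noted in the preliminaries, so in the writeup I may simply cite it rather than reprove it.

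For the second inequality, I would invoke the chain rule for conditional Shannon entropy, $H(X_1,\dots,X_k\mid Y)=\sum_{i\in[k]}H(X_i\mid X_1,\dots,X_{i-1},Y)$, together with the fact that conditioning does not increase Shannon entropy, i.e.\ $H(X_i\mid X_1,\dots,X_{i-1},Y)\le H(X_i\mid Y)$ for each $i$. Summing over $i$ gives $H(X\mid Y)\le\sum_{i\in[k]}H(X_i\mid Y)$, and combining with the first inequality completes the proof.

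There is essentially no genuine obstacle here; the only point requiring care is that conditional min-entropy is defined through a logarithm of an expectation rather than an expectation of a logarithm, so the reduction to $H(X\mid Y)$ must apply Jensen's inequality in the correct direction, as isolated above. Everything else is a direct appeal to standard Shannon-entropy identities (the chain rule and the monotonicity ``conditioning reduces entropy''), and no property specific to the joint distribution of $(X,Y)$ is used.
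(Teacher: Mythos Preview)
Your proof is correct and matches the paper's own argument exactly: the paper also chains $H_\infty(X\mid Y)\le H(X\mid Y)$ (already recorded in the preliminaries) with the Shannon chain rule $H(X\mid Y)=\sum_i H(X_i\mid Y,X_{<i})$ and the bound $H(X_i\mid Y,X_{<i})\le H(X_i\mid Y)$.
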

\begin{proof}
    The inequality follows easily from the chain-rule of Shannon entropy $$H_\infty(X\mid Y)\leq H(X\mid Y)=\sum_{i\in[k]}H(X_i\mid Y,X_1,\dots,X_{i-1})\leq \sum_{i\in[k]} H(X_i\mid Y).$$
\end{proof}

In the following proposition, we give a lower bound on the min-entropy of a binary random variable with high Shannon entropy.

\begin{proposition}[Binary Entropy Bound]\label{prop:entropy-to-min-entropy}
    For any binary random variable $X$, letting $H(X)=h$, it holds that $$H_\infty(X)\geq 1-\log(1+\sqrt{1-h^{\ln 4}}).$$
\end{proposition}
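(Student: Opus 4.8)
The plan is to strip away the algebra first and reduce to a clean analytic inequality about the binary entropy function. Write $p=\max_x\Pr(X=x)$, so that $H_\infty(X)=-\log p$ and $h=H(X)=H_2(p)$, where $H_2(\cdot)$ is the binary entropy function in bits; since both $H$ and $H_\infty$ are invariant under relabelling, we may assume $p\ge 1/2$. The claimed bound $-\log p\ge 1-\log(1+\sqrt{1-h^{\ln 4}})$ rearranges to $1+\sqrt{1-h^{\ln 4}}\ge 2p$, i.e.\ $\sqrt{1-h^{\ln 4}}\ge 2p-1$, and squaring (legitimate as $p\ge 1/2$) to $h^{\ln 4}\le 1-(2p-1)^2=4p(1-p)$. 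So the whole statement reduces to proving the inequality $H_2(p)^{\ln 4}\le 4p(1-p)$ for all $p\in[0,1]$.

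Next I would substitute $p=(1+\delta)/2$ with $\delta\in[0,1]$, so $4p(1-p)=1-\delta^2$, and use the classical expansion (coming from the Taylor series of $(1\pm\delta)\ln(1\pm\delta)$) $1-H_2((1+\delta)/2)=\frac{1}{2\ln 2}\sum_{j\ge 1}\frac{\delta^{2j}}{j(2j-1)}=:F(\delta)$. Writing $\alpha:=1/\ln 4=1/(2\ln 2)$ and $s:=\delta^2$, and raising both sides of $H_2(p)^{\ln 4}\le 1-\delta^2$ to the power $\alpha$, the target becomes $F(\delta)\ge 1-(1-s)^\alpha$. Both sides are power series in $s$ with nonnegative coefficients: $F=\sum_{j\ge1}d_j s^j$ with $d_j=\alpha/(j(2j-1))$, and $1-(1-s)^\alpha=\sum_{k\ge1}c_k s^k$ with $c_1=\alpha$ and $c_k=\frac{\alpha}{k!}\prod_{i=1}^{k-1}(i-\alpha)$ for $k\ge2$. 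So it suffices to show $\sum_{j\ge1}e_j s^j\ge 0$ on $[0,1]$, where $e_j:=d_j-c_j$; both series converge absolutely on $[0,1]$ (their coefficient sums are finite), and evaluating at $s=1$ — where $F(1)=1-H_2(1)=1$ and $1-(1-1)^\alpha=1$ — yields the key identity $\sum_{j\ge1}e_j=0$, while $e_1=0$ since $d_1=c_1=\alpha$.

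The heart of the argument is then a \emph{single-crossing} claim for the coefficient sequence: there is an index $j_0$ with $e_j\ge 0$ for $j\le j_0$ and $e_j\le 0$ for $j>j_0$. Granting this, I would conclude by an Abel/rearrangement step: using $\sum_j e_j=0$ to subtract $s^{j_0}\sum_j e_j$, write $\sum_{j\ge1}e_j s^j=\sum_{j\ge1}e_j(s^j-s^{j_0})$, and observe that for $s\in[0,1]$ every summand is nonnegative (for $j\le j_0$, $e_j\ge 0$ and $s^j\ge s^{j_0}$; for $j>j_0$, $e_j\le 0$ and $s^j\le s^{j_0}$), which finishes the proof. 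To prove the single-crossing claim, note that for $k\ge2$ one has $e_k=\frac{\alpha}{k}(\frac{1}{2k-1}-\prod_{i=1}^{k-1}(1-\alpha/i))$, so $e_k\ge 0$ iff $\psi(k)\le 0$, where $\psi(k):=\ln(2k-1)+\sum_{i=1}^{k-1}\ln(1-\alpha/i)$ and $\psi(1)=0$. A direct computation gives $\psi(k+1)-\psi(k)=\ln\frac{(2k+1)(k-\alpha)}{k(2k-1)}$, which is positive precisely when $k>\frac{\alpha}{2(1-\alpha)}=\frac{1}{2(2\ln 2-1)}\approx 1.29$, hence for all $k\ge 2$; together with $\psi(2)=\ln(3(1-\alpha))<0$ and $\psi(k)\to+\infty$ (its increments are $\Theta(1/k)$, so the telescoping sum diverges), the sequence $\psi$ is nonpositive up to some $j_0$ and strictly positive afterwards, which is exactly the desired sign pattern of $(e_j)$.

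I expect the single-crossing analysis to be the main obstacle, precisely because the inequality is genuinely tight: $\ln 4$ is the unique exponent for which the two power series agree to second order at $s=0$, so there is no slack near $p=1/2$, and numerically the inequality remains razor-thin over a wide range of $p$ as well. Consequently any lossy estimate — e.g.\ $\ln(1-x)\le -x$, which already fails term-by-term here — is useless, and the proof is forced to track the coefficients $e_j$ essentially exactly; concretely, everything reduces to the elementary but sharp facts that $\psi(k+1)>\psi(k)$ for $k\ge 2$, that $\psi(2)<0$, and that $\psi(k)\to\infty$. The initial reduction and the final Abel-summation step are otherwise routine.
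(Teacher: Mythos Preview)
Your reduction is identical to the paper's: both set $p=\max_x\Pr(X=x)\ge 1/2$ and unwind the claimed bound to the inequality $H_2(p)\le (4p(1-p))^{1/\ln 4}$, equivalently $H_2(p)^{\ln 4}\le 4p(1-p)$. The only difference is that the paper simply \emph{cites} this inequality (Tops{\o}e, 2001) and stops, whereas you supply a self-contained proof of it via the power-series expansion, the single-crossing analysis of the coefficient differences $e_j$, and the Abel-summation step. Your argument for the Tops{\o}e bound checks out: the computations of $d_j,c_j$, the identity $\sum_j e_j=0$, the formula $\psi(k+1)-\psi(k)=\ln\frac{(2k+1)(k-\alpha)}{k(2k-1)}$ and its sign change at $k=\alpha/(2(1-\alpha))\approx 1.29$, and the divergence $\psi(k)\to\infty$ are all correct, and the Abel step is valid because $\sum_j|e_j|\le\sum_j d_j+\sum_j c_j=2<\infty$. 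So you have strictly more than the paper --- a full proof where the paper appeals to the literature --- but the approach is otherwise the same.
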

\begin{proof}
    Let $p=\max_{b\in\bin}\Pr(X=b)=2^{-H_\infty(X)}>1/2$. We start from the following known upper bound on the binary entropy function~\cite{Topsoe01}: $H(X)\leq (4p(1-p))^{1/\ln 4}$. The bound implies $4p-4p^2-h^{\ln 4}\geq 0$ and, consequently, $p\leq \frac{1}{2}(1+\sqrt{1-h^{\ln 4}})$.
\end{proof}
We recall some useful concentration bounds.
\begin{lemma}[Markov Inequality]\label{lmm:markov}
Let ${X}$ be a non-negative random variable and let $\alpha > 0$, then:
\[
\Pr\left(X \geq \alpha\cdot \Expct(X)\right) \leq 1/\alpha.
\]
\end{lemma}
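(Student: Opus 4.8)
The plan is the textbook one-line argument: lower-bound $\Expct(X)$ by the contribution of the tail event, then rearrange. I would assume $\Expct(X) > 0$ (the only case of interest, since $\Expct(X) = 0$ forces $X = 0$ almost surely), write $\mu = \Expct(X)$, fix the threshold $t = \alpha \mu$, and let $A$ denote the event $\{X \geq t\}$ with indicator $\mathbf{1}_A$.

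First I would record the pointwise inequality $X \geq t \cdot \mathbf{1}_A$: on $A$ it holds by the definition of $A$, and on the complement of $A$ the left-hand side is non-negative because $X \geq 0$ while the right-hand side is $0$. Taking expectations and using monotonicity together with linearity of expectation gives $\mu = \Expct(X) \geq t \cdot \Expct(\mathbf{1}_A) = t \cdot \Pr(A) = \alpha \mu \cdot \Pr\!\left(X \geq \alpha \Expct(X)\right)$; dividing both sides by $\alpha \mu > 0$ yields the claimed bound $\Pr(X \geq \alpha \Expct(X)) \leq 1/\alpha$. There is essentially no obstacle in this argument: the only point requiring a moment's care is the degenerate zero-mean case noted above, where the multiplicative formulation is implicitly understood to assume a positive mean.
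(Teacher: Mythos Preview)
Your proof is correct; it is the standard indicator-function argument for Markov's inequality. The paper itself does not give a proof of this lemma at all --- it merely states it as a recalled fact --- so your argument simply fills in what the paper leaves implicit.
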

\begin{lemma}[Chernoff Inequality]\label{lmm:chernoff}
Let ${X}_1 \dots {X}_n$ be independent random variables such that ${X}_i \in\{0,1\}$ and let $\Tilde{X} = \Expct(\sum_{i}{X}_i)$. Then:
\[
\Pr\left(\sum_{i}{X}_i \geq (1+\delta)\cdot \Tilde{X}\right)\leq e^{-\frac{\Tilde{X}\delta^2}{3}}.
\]
\end{lemma}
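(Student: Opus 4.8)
The plan is to run the standard exponential-moment (Chernoff--Cram\'er) argument. Write $S=\sum_i X_i$ and $\mu=\Tilde X=\Expct(S)$, and introduce a free parameter $t>0$ to be optimized at the end. Since $x\mapsto e^{tx}$ is increasing, the event $\{S\ge(1+\delta)\mu\}$ is the same as $\{e^{tS}\ge e^{t(1+\delta)\mu}\}$, so applying Markov's inequality (\cref{lmm:markov}, in the equivalent form $\Pr(Y\ge c)\le \Expct(Y)/c$ for a non-negative $Y$ and $c>0$) to $Y=e^{tS}$ yields
\[
\Pr\left(S\ge(1+\delta)\mu\right)\;\le\; \Expct(e^{tS})\cdot e^{-t(1+\delta)\mu}.
\]

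Next I would bound the moment generating function $\Expct(e^{tS})$. By independence of the $X_i$ it factorizes as $\prod_i \Expct(e^{tX_i})$, and writing $p_i=\Pr(X_i=1)$ each factor equals $1-p_i+p_ie^t=1+p_i(e^t-1)\le \exp\!\big(p_i(e^t-1)\big)$, using $1+z\le e^z$. Multiplying over $i$ and using $\sum_i p_i=\mu$ gives $\Expct(e^{tS})\le \exp\!\big(\mu(e^t-1)\big)$, and therefore
\[
\Pr\left(S\ge(1+\delta)\mu\right)\;\le\; \exp\!\Big(\mu\big((e^t-1)-t(1+\delta)\big)\Big)\qquad\text{for every }t>0.
\]

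It then remains to optimize over $t$ and simplify. The exponent $(e^t-1)-t(1+\delta)$ is convex in $t$ and minimized at $t=\ln(1+\delta)>0$; substituting this value produces the familiar expression
\[
\Pr\left(S\ge(1+\delta)\mu\right)\;\le\; \left(\frac{e^{\delta}}{(1+\delta)^{1+\delta}}\right)^{\mu}=\exp\!\Big(\mu\big(\delta-(1+\delta)\ln(1+\delta)\big)\Big).
\]
The proof concludes with the elementary scalar estimate $\delta-(1+\delta)\ln(1+\delta)\le -\delta^2/3$, equivalently $\phi(\delta):=(1+\delta)\ln(1+\delta)-\delta-\tfrac{1}{3}\delta^2\ge 0$: one checks $\phi(0)=0$ and $\phi'(\delta)=\ln(1+\delta)-\tfrac{2}{3}\delta$, which satisfies $\phi'(0)=0$, is increasing on $[0,\tfrac12]$ since there $\phi''(\delta)=\tfrac{1}{1+\delta}-\tfrac{2}{3}\ge 0$, and remains positive at $\delta=1$ (as $\ln 2>\tfrac{2}{3}$); hence $\phi'\ge 0$ and thus $\phi\ge 0$ on $[0,1]$. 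Plugging this into the displayed bound gives $\Pr(S\ge(1+\delta)\mu)\le e^{-\mu\delta^2/3}$.

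I do not expect a genuine obstacle here, as this is the textbook argument; the only point requiring care is the final scalar inequality and the range of $\delta$ for which the clean exponent $-\delta^2/3$ is valid, namely $0\le\delta\le 1$, which covers every invocation of the lemma in this paper. (For larger $\delta$ one would instead retain the sharper exponent $\delta-(1+\delta)\ln(1+\delta)$, or use the weaker-but-unrestricted $-\mu\delta^2/(2+\delta)$.)
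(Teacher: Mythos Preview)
The paper does not actually prove \cref{lmm:chernoff}; it is stated without proof as a standard concentration bound in the preliminaries. Your argument is exactly the textbook Chernoff--Cram\'er proof and is correct for $0<\delta\le 1$.

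One small inaccuracy worth flagging: your parenthetical claim that the range $0\le\delta\le 1$ ``covers every invocation of the lemma in this paper'' is not right. In \cref{prop:J-bound} the expectation is $O(1)$ while the threshold is $\lambda/2$, so $\delta=\Theta(\lambda)$; similarly in the quantum-hardness claim the expectation is $O(\log k)$ and the threshold is $\lambda/2$. In those regimes the clean exponent $-\mu\delta^2/3$ is not valid (indeed the lemma as stated, with no restriction on $\delta$, is false in general), and one should instead use the sharper form $\big(e^{\delta}/(1+\delta)^{1+\delta}\big)^{\mu}$ or the bound $e^{-\mu\delta\ln(1+\delta)/2}$, which still yield the super-polynomially small probabilities the paper needs. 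This is a deficiency in how the paper states and applies the lemma, not in your proof of it.
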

We say that a set of random variables ${X}_1 \dots {X}_n$ is \emph{negatively correlated} if for every subset $S \subseteq [n]$ it holds that:
\[
\Pr\left(\prod_{i\in S} X_i = 1\right) \leq \prod_{i\in S}\Pr\left( X_i = 1\right).
\]
In~\cite{PS97}, it is shown that the Chernoff bound (\cref{lmm:chernoff}) continues to hold also for negatively correlated random variables.

\subsection{Quantum Information}

In quantum mechanics, physical systems are identified with Hilbert spaces~$\mathcal{H}$, and the states of the system are identified with positive semidefinite operators (PSD)~$\rho$ with unit trace, called \emph{density operators}. A state is called \emph{pure} if the density operator has rank one, and otherwise it is called \emph{mixed}. Any unit vector~$\ket \psi \in \mathcal{H}$ determines a pure state by the formula~$\rho = \ketbra{\psi}{\psi}$, and conversely any pure state can be written in this way. We often associate a Hilbert space with a register $\mathsf{X}$, and we denote a state in such register as $\ket{\psi}_{\mathsf{X}}$.

A \emph{measurement} with a finite outcome set~$\mathcal{O}$ is described by a collection of bounded operators $\{M_x\}_{x\in \mathcal{O}}$ acting on~$\mathcal{H}$ such that~$\sum_{x\in \mathcal{O}} M_x=Id$, referred to as a POVM. A \emph{quantum circuit} is a unitary operator that operates on $\mathcal{H}$ and is given by the composition of unitary gates (taken from some fixed universal gate set). The size of a quantum circuit is the number of gates used in that circuit. The qubits are typically split into input qubits and ancillas, which are assumed to be initialized in the $\ket{0}$. In its most general form, any physically-admissible quantum operation is described by a \emph{completely-positive trace-preserving} (CPTP) map from linear operators $\mathrm{L}(\mathsf{X})$ on a register $\mathsf{X}$ to linear operators $\mathrm{L}(\mathsf{Y})$ on a register $\mathsf{Y}$.
The \emph{trace distance} between two states $\rho$ and $\sigma$ is defined as:
\[
\TD(\rho, \sigma) = \frac{1}{2}\norm{\rho-\sigma}_1 = \frac{1}{2}\mathsf{Tr}\left(\sqrt{(\rho-\sigma)^\dagger(\rho-\sigma)}\right)
\]
where $\mathsf{Tr}$ is the trace of a matrix. The operational meaning of the trace distance is that 
$1/2\cdot  (1+\TD(\rho, \sigma))$ is the maximal probability that two states $\rho$ and $\sigma$ can be distinguished by any (possibly unbounded) quantum channel or algorithm.

Lastly, we recall a useful lemma from \cite{BBK22}.


\begin{lemma}[Plug-In Lemma~\cite{BBK22}]\label{lem:plug-in}
	Let $X=X_1,\dots,X_k$ and $W$ be arbitrarily dependent random variables, where $W$ is over $m$ qubits. Then, it holds that:
	$$\TD\left((i,X_{<i},X_i,W),(i,X_{<i},X'_i,W)\right)\leq\sqrt{m/2k},$$
	where $i$ is uniformly random over $[k]$ and $X'_i$ is sampled according to the marginal distribution of $X_i$ given $X_{<i}$ (and is otherwise independent in $W$).
\end{lemma}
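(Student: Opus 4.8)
The plan is to prove the lemma by an information-theoretic argument: rewrite the trace distance as an average ``decoupling'' error over the random index $i$, bound this average by the quantum mutual information between $X$ and $W$ via a Pinsker-type inequality, and finally control that mutual information by $m$ using the chain rule together with the fact that $W$ is carried by only $m$ qubits.

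First I would note that, since the index $i$ and the prefix $X_{<i}$ are sampled identically on the two sides, the left-hand side is exactly
\[
\Expct_{i\gets[k]}\ \Expct_{x_{<i}}\ \TD\!\left(\rho^{x_{<i}}_{X_i W},\ \rho^{x_{<i}}_{X_i}\otimes\rho^{x_{<i}}_{W}\right),
\]
where $\rho^{x_{<i}}$ is the classical-quantum state conditioned on $X_{<i}=x_{<i}$ and $\rho^{x_{<i}}_{X_i},\rho^{x_{<i}}_{W}$ are its marginals (this product state is precisely the conditional distribution of $(X'_i,W)$ from the statement). For each fixed $i,x_{<i}$, the quantum Pinsker inequality applied to this cq-state and its decoupled version gives $\TD(\cdot,\cdot)^2\le\tfrac12\,I(X_i:W\mid X_{<i}=x_{<i})$, up to a $\ln 2$ factor that only improves the bound. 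Averaging over $x_{<i}$, then over $i$, and pulling the square root out by concavity (Jensen, applied twice) bounds the whole expression by $\sqrt{\tfrac{1}{2k}\sum_{i\in[k]}I(X_i:W\mid X_{<i})}$.

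It then remains to show $\sum_{i}I(X_i:W\mid X_{<i})\le m$. By the chain rule for quantum mutual information this sum equals $I(X_1\dots X_k:W)$, and since $X$ is classical and $W$ occupies $m$ qubits, $I(X:W)=H(W)-H(W\mid X)\le H(W)\le m$; substituting yields the claimed $\sqrt{m/2k}$.

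I expect the main obstacle to be the quantum step: one has to invoke the classical-quantum forms of Pinsker's inequality and of the chain rule (together with strong subadditivity, to ensure each conditional term $I(X_i:W\mid X_{<i})$ is nonnegative so the partial sums never exceed $m$), and to track normalization conventions --- nats versus bits, $\|\cdot\|_1$ versus $\TD$ --- carefully enough that the leading constant comes out as $1/2$ rather than off by a factor. The reduction to the conditional average and the two uses of Jensen's inequality are routine.
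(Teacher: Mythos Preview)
The paper does not prove this lemma; it is stated in the preliminaries and attributed to \cite{BBK22} without proof. Your argument is correct and is exactly the standard proof: decompose the trace distance as an average decoupling cost, apply quantum Pinsker to bound each term by conditional mutual information, use Jensen to pull out the square root, and then invoke the chain rule $\sum_i I(X_i\!:\!W\mid X_{<i})=I(X\!:\!W)\le H(W)\le m$ (with $H(W\mid X)\ge 0$ since $X$ is classical). The only caveat is bookkeeping of the $\ln 2$ factor in Pinsker, which, as you note, only strengthens the constant.
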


\section{Simple Proof of Quantumness with Quadratic Gap}\label{sec:simple}

The first proof of quantumness we present in this work is based on the space lower bound for learning with parities by Raz~\cite{Raz} (\Cref{lmm:raz}) and, consequently, achieves a quadratic gap between the space complexity of an honest execution and the space complexity of the best attack.

\begin{theorem}[PoQ with Quadratic Gap]\label{thm:simple-poq}
    There exists a protocol between a classical verifier and a quantum prover where the two parties take as input a security parameter $n\in\NN$ and, at the end of the interaction, the verifier either accepts or rejects, and
    \begin{itemize}
        \item (Completeness) The verifier accepts with probability $\cos^2(\pi/8)-O(2^{-n})$ when interacting with an honest prover.
        \item (Soundness) Any non-uniform classical prover $\padv$ that on input $n$ uses less than $n^2/20$ memory bits has success probability at most $3/4+O(2^{-n})$ in making the verifier accept.
        \item (Complexity) The verifier and (honest) prover run in time $O(n^2)$ and in space $O(n)$. In particular, the prover uses $n+2$ qubits.
    \end{itemize}
\end{theorem}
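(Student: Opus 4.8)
The protocol I would analyze is the standard claw-generation/commitment/CHSH template instantiated with the ``streamed'' claw-free map $x\mapsto Ax$, $A=(V\mid Vs)\in\FF_2^{2n\times(n+1)}$, together with the memory-efficient verifier sketched in the overview. In detail: the verifier samples $s\gets\FF_2^n$ and a bit $c\gets\bin$; the prover prepares a uniform superposition on an $(n+1)$-qubit register $\X$ and keeps one ancilla; for $i=1,\dots,2n$ the verifier samples $v_i\gets\FF_2^n$, sends $a_i=(v_i,v_i^\intercal s)$, the prover applies $\ket x_\X\mapsto\ket x_\X\ket{a_i^\intercal x}$ into the ancilla, measures it to obtain $y_i$, and returns $y_i$; meanwhile, if $c=1$ the verifier maintains on the fly a uniformly random subset sum $r=\sum_{i\in S}a_i$ together with $\beta=\sum_{i\in S}y_i$, and if $c=0$ it will instead use a uniform $r$ with $r^\intercal(s,1)=1$. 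In the commitment phase the verifier sends $r$; the prover computes $r^\intercal x$ into its ancilla ($\B$), applies $H^{\otimes(n+1)}$ to $\X$, measures $\X$ to get $d\in\FF_2^{n+1}$, and returns $d$. Finally the verifier sends a CHSH challenge $e\gets\bin$, the prover measures $\B$ with the honest CHSH strategy for $e$ (a $\pm\pi/8$-rotated measurement, the single non-Clifford operation) and returns $o$, and the verifier accepts iff $o\oplus\rho=\bar c\wedge e$, where $\rho=\beta$ if $c=1$ and $\rho=d^\intercal(s,1)$ if $c=0$.

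For completeness I would first apply \Cref{prop:rank} with $m=2n$: with probability $1-O(2^{-n})$ the matrix $V$ has rank $n$, hence $A$ has rank $n$, is exactly $2$-to-$1$, and $\ker A=\{0,(s,1)\}$ (since $A(s,1)^\intercal=Vs+Vs=0$). Conditioned on this, the streaming phase leaves the honest prover with $(\ket{x_0}+\ket{x_1})/\sqrt2$ where $Ax_0=Ax_1=y$ and $x_0\oplus x_1=(s,1)$, and a short computation shows that after the commitment $\B$ is the state $\ket\rho$ in the $Z$-basis when $c=1$ — with $\rho=r^\intercal x_0=\sum_{i\in S}(Ax_0)_i=\sum_{i\in S}y_i=\beta$ — and the state $\ket\rho$ in the $X$-basis when $c=0$, with $X$-basis value $\rho=d^\intercal(s,1)$, which is exactly the bit the verifier recomputes from $d$ and $s$. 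Since $c$ is hidden from the prover, it faces one of the two CHSH inputs $a=\bar c$ blindly, and the honest prepare-then-measure CHSH strategy wins with probability $\cos^2(\pi/8)$; multiplying by $1-O(2^{-n})$ gives the claimed completeness.

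For soundness, conditioned again on $\mathrm{rank}\,V=n$ (a loss of $O(2^{-n})$), the protocol is a \emph{compiled} CHSH game in the sense of \cite{KCVY22,KLVY23}: marginally $r$ is uniform over $\FF_2^{n+1}$ regardless of $c$, so the prover's transcript can correlate with $a=\bar c$ only through a prediction of $r^\intercal(s,1)$. Following the standard compiled-Bell-test analysis, a classical prover winning with probability $3/4+\varepsilon$ over the protocol's randomness wins \emph{both} CHSH challenges simultaneously with probability at least $1/2+2\varepsilon$; forking the prover at the CHSH round and reading off $o(0)\oplus o(1)$ therefore yields, for uniform $r$, a prediction of $r^\intercal(s,1)$ with advantage $\Omega(\varepsilon)$. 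Feeding this predictor to the Goldreich--Levin extractor of \Cref{lmm:GL} recovers $(s,1)$, hence $s$, with inverse-polynomial probability whenever $\varepsilon\ge 1/\poly(n)$. The crucial — and, I expect, hardest — point is to realize this reduction as a genuinely low-memory algorithm in the sense of \Cref{lmm:raz}: the reduction (i) streams one block of $2n$ Raz-samples to $\padv$ while updating only $O(n)$ bits of verifier bookkeeping — precisely what the on-the-fly generation of $r$ buys us, and the reason the claw function is streamed rather than transmitted in full; (ii) saves $\padv$'s post-streaming state once and restores it for each Goldreich--Levin query and each of the two CHSH branches; and (iii) runs the extractor itself in $O(n\log n)$ memory (\Cref{lmm:GL}). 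Bounding the resulting total memory below $Cn^2$ for a constant $C<1/20$ (the two-fold blow-up from the fork being the only nontrivial overhead, which can be avoided, e.g.\ in the round-boundary memory model) contradicts \Cref{lmm:raz} unless $\varepsilon<1/\poly(n)$ for every polynomial; together with the exponentially small conditioning loss this gives the stated soundness bound $3/4+O(2^{-n})$.

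Finally, the complexity claims are read off the description: the verifier stores only $s$, the bit $c$, the running $r$ and $\beta$ (or a uniform $r$ with $r^\intercal(s,1)=1$), the current $v_i$, and at the end $d$, i.e.\ $O(n)$ bits, and it spends $O(n)$ time per round over $2n$ rounds plus $O(n)$ in the commitment/CHSH phase, for $O(n^2)$ time total; the honest prover uses the $n+1$ qubits of $\X$ plus one reusable ancilla (for the streaming XORs, for $\B$, and for the final rotated measurement), i.e.\ $n+2$ qubits, with $O(n)$ Hadamard/CNOT gates per round and one non-Clifford single-qubit rotation in the CHSH step, again $O(n^2)$ gates overall.
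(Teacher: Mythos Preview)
Your proposal is correct and follows essentially the same route as the paper: the protocol you describe is (up to swapping the convention for the bit $c$) exactly \Cref{protocol:simple}, and completeness and soundness are argued via the same rank bound (\Cref{prop:rank}), the same fork-at-$\theta$ predictor for $r^\intercal(s,1)$, and the same Goldreich--Levin reduction to \Cref{lmm:raz}. The only step the paper spells out that you leave implicit is an averaging argument fixing a ``good'' post-streaming state $w$ before invoking \Cref{lmm:GL}; your memory worry about the fork is handled there by the $O(\ell\log\ell)$ additive overhead claim (and, more fundamentally, because Raz's bound constrains only the inter-sample state, so the post-streaming rewinding/extraction cost is irrelevant).
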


\subsection{The Protocol}

Our protocol follows the general framework of basing PoQ protocol on \emph{claw generation}~\cite{BCMVV21,KCVY22,BGKPV23}. We perform claw generation using on a simple linear function $x\mapsto Ax$, where $A=(V,Vs)\in\FF^{m\times(n+1)}$ for a random $V\gets\FF^{m\times n}$ and $s\gets\FF^n$. The function is 2-to-1 with overwhelming probability and it is claw-free for any adversary who has space at most quadratic due to the space complexity of learning parities by~\cite{Raz} (\Cref{lmm:raz}).

While PoQ based on claw generation generally requires the verifier to extract the values in the claw, for the above function this demands inverting $A$ (and, in particular, remembering it) and therefore considered infeasible in our bounded-storage model. We show how to nevertheless implement the verifier using only linear space. Due to the simplicity of our claw-free function, the prover in our protocol is not only space-efficient, but can be implemented by a very simple quantum circuit, which we illustrate in \cref{fig:qart}.

\begin{protocol}[Parity-based PoQ]\label{protocol:simple}
    Let $n$ be the security paramter of the protocol and let $m = 2n$. The protocol consists of an interaction between a quantum prover and a classical verifier, described as follows.
\begin{itemize}
    \item (Claw Generation)
    \begin{enumerate}
        \item The verifier samples a uniform $s \gets \mathbb{F}_2^n$ and a uniform $u=(u_1,\dots,u_m) \gets \mathbb{F}_2^m$ at the beginning of the protocol. The verifier sets $t = (s, -1)$ and $r = 0^{n+1}$.
    
        The prover prepares the uniform superposition:
        \[
        \ket{\psi} = \frac{1}{\sqrt{2^{n+1}}}\sum_{x\in\mathbb{F}_2^{n+1}}\ket{x} \in \mathbb{C}^{2^{n+1}}.
        \]
        \item For $i = 1 \dots m$,
        \begin{enumerate}[label*=\arabic*.]
            \item \label{step:a} The verifier samples a uniform $v_i \gets \mathbb{F}_2^n$ and sends to the prover $(v_i, v_i^\intercal s) = a_i$. The verifier updates $r = r + u_i\cdot a_i$.
            \item \label{step:y} The prover applies the following isometric mapping to his state
            \[
            \ket{x}_\X \mapsto \ket{x}_{\X}\ket{a^\intercal_i x}_{\Y}
            \]
            and measures the qubit in $\Y$ in the computational basis to obtain a bit $y_i\in\mathbb{F}_2$ and sends it to the verifier.
        \end{enumerate}
    \end{enumerate}
    \item (Commitment) 
    \begin{enumerate}
        \item\label{step:simple-r} The verifier samples a random bit $c \gets\{0,1\}$, then:
        \begin{itemize}
            \item If $c = 0$, sends $r$ to the prover.
            \item If $c=1$, samples a fresh uniform $r\gets\FF_2^{n+1}$ conditioned on $r^\intercal t= 1$ and sends it to the prover.
        \end{itemize}
        \item The prover applies the following isometric mapping to his state
        \[
            \ket{x}_\X \mapsto \ket{x}_{\X}\ket{r^\intercal x}_{\B},
        \]
        then measures $\X$ in Hadamard basis to obtain $d \in \mathbb{F}_2^{n+1}$ and sends it to the verifier.
    \end{enumerate}

    \item (CHSH Test)
    \begin{enumerate}
        \item The verifier samples an angle $\theta \in \{\pi/ 8, - \pi/8\}$ and sends $\theta$ to the prover.
        \item The prover measures the qubit in $\B$ in the basis
        \[
        \left\{\cos(\theta)\ket{0} + \sin(\theta)\ket{1}, \cos(\theta)\ket{0} - \sin(\theta)\ket{1}\right\}
        \]
        to obtain an outcome $b\in \{0,1\}$ and sends it to the verifier.
        \item The verifier accepts if the following conditions are satisfied:
        \begin{itemize}
            \item If $c = 0$, accept iff $u^\intercal y = b$, where $y = (y_1, \dots, y_m)$.
            \item Otherwise: \begin{tabular}[t]{l}
                 If $\theta = \pi/8$, accept iff $d^\intercal t = b$. \\
                 If $\theta = -\pi/8$,  accept iff $d^\intercal t \neq b$.
            \end{tabular}
        \end{itemize}
    \end{enumerate}
\end{itemize}
\end{protocol}

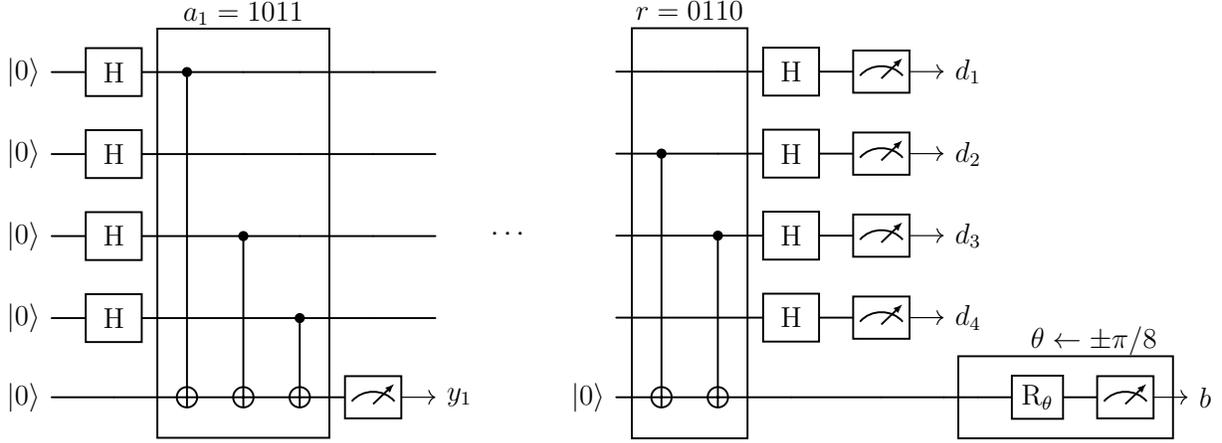
\begin{figure}
    \centering
\resizebox{\textwidth}{!}{

\begin{quantikz}
     \lstick{$\ket{0}$} & \gate{\text{H}}[] &  \ctrl{4} \gategroup[5,steps=3,style={inner
        sep=4pt}]{$a_1 = 1011$} & & & &\\
     \lstick{$\ket{0}$} & \gate{\text{H}}[] &  & & & &\\ 
     \lstick{$\ket{0}$} & \gate{\text{H}}[] &  & \ctrl{2} & & &\\ 
     \lstick{$\ket{0}$} & \gate{\text{H}}[] &  & & \ctrl{1} & &\\ 
     \lstick{$\ket{0}$} &  & \targ{} & \targ{} &  \targ{} & \meter{}\arrow[r] & \rstick{$y_1$}\setwiretype{n} &
    \end{quantikz}
    $\cdots\quad$
    \begin{quantikz}[slice style=blue]
     &  \gategroup[5,steps=2,style={inner
        sep=4pt}]{$r = 0110$} & &\gate{\text{H}}[]& \meter{}\arrow[r]&\rstick{$d_1$} \setwiretype{n} &&&\\
    & \ctrl{3} & &\gate{\text{H}}[] &\meter{}\arrow[r] &\rstick{$d_2$}\setwiretype{n}& &&\\ 
     &  & \ctrl{2} &\gate{\text{H}}[]& \meter{}\arrow[r]&\rstick{$d_3$}\setwiretype{n}& &&\\ 
     &  & &\gate{\text{H}}[]&  \meter{}\arrow[r] &\rstick{$d_4$}\setwiretype{n} &&&\\ 
     \lstick{$\ket{0}$} & \targ{} & \targ{} & & & &
    \gategroup[1,steps=3,style={inner sep=4pt}]{\quad\quad$\theta\gets\pm\pi/8$}
      &\gate{\text{R}_\theta}& \meter{} \arrow[r] &\rstick{$b$}\setwiretype{n}
    \end{quantikz}
}
    \caption{The quantum circuit executed by the prover in a run of the protocol with $n=4$. In each of the first $2n$ rounds the prover receives $a_i\in\FF^n_2$ from the verifier and responds by $y_i\in\FF_2$. In round $2n+1$ he receives $r\in\FF^n_2$ and returns $d\in\FF^n_2$. In the last round, he receives an angle $\theta\gets \{\pi/8,-\pi/8\}$ and responds by a bit $b$.}
    \label{fig:qart}
\end{figure}

\subsection{Analysis}

First, notice that the honest parties only require storage linear in $n$ to run the protocol. Specifically, the verifier only needs to keep in memory the variables $s$, $a_i$ (one at a time), $u$, $d$ and $y$, all of size $O(n)$. On the other hand, the prover's memory consists of $n+1$ classical bits (to store each $a_i$), along with $n+1$ qubits in $\X$ and an additional qubit in $\Y$ and $\B$ (one at a time).


We start with showing that the protocol is complete, namely that an honest quantum prover is able to convince the verifier with high probability.

\begin{lemma}[Completeness]\label{lem:completeness-simple}
    The verifier in an honest execution of the protocol accepts with probability $\cos^2(\pi/8)-O(2^{-n})$. 
\end{lemma}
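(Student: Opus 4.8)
The plan is to propagate the joint state of the honest prover together with the verifier's auxiliary variables through the three phases, and to show that, \emph{conditioned on the event that $V$ is full rank}, the verifier accepts with probability \emph{exactly} $\cos^2(\pi/8)$ in every branch of its random choices. Since $m = 2n$, \Cref{prop:rank} guarantees that $V$ fails to be full rank only with probability $O(2^{-n})$, and taking the trivial lower bound $0$ on that event then yields the claimed $\cos^2(\pi/8) - O(2^{-n})$.

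First I would analyze the claw-generation phase. Writing $A \in \FF_2^{m \times (n+1)}$ for the matrix whose $i$-th row is $a_i^\intercal = (v_i, v_i^\intercal s)$, the $m$ rounds jointly implement the map $\ket{x}_\X \mapsto \ket{x}_\X \ket{Ax}_\Y$ followed by a measurement of $\Y$; starting from the uniform superposition, the outcome $y = (y_1,\dots,y_m)$ necessarily lies in the image of $A$, and the residual $\X$-state is the uniform superposition over $A^{-1}(y)$. When $V$ is full rank we have $\rank(A) = n$ and $\ker(A) = \{0^{n+1}, t\}$ with $t = (s,-1) = (s,1)$ over $\FF_2$ (one checks $At = Vs + Vs = 0$), so $|A^{-1}(y)| = 2$ and the residual state is $(\ket{x_0}+\ket{x_1})/\sqrt{2}$ with $x_0 + x_1 = t$. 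I would also record the bookkeeping identity for the verifier's accumulator: since $r = \sum_i u_i a_i$, we have $r^\intercal x = u^\intercal(Ax)$ for every $x$, hence $r^\intercal x_0 = r^\intercal x_1 = u^\intercal y$ and $r^\intercal t = u^\intercal(At) = 0$.

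Next I would trace the commitment and CHSH phases through the two branches of the coin $c$. If $c = 0$, the verifier sends $r = \sum_i u_i a_i$; the isometry $\ket{x}_\X \mapsto \ket{x}_\X \ket{r^\intercal x}_\B$ leaves $\B$ in the product state $\ket{u^\intercal y}_\B$, so the subsequent Hadamard-basis measurement of $\X$ does not disturb $\B$, and $\B$ enters the CHSH test as the computational-basis state $\ket{\beta}$ with $\beta = u^\intercal y$; measuring it in the basis $\{\cos\theta\ket{0}\pm\sin\theta\ket{1}\}$ with $\theta = \pm\pi/8$ returns $b = \beta$ with probability $\cos^2\theta = \cos^2(\pi/8)$, which is exactly the acceptance condition. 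If $c = 1$, then $r^\intercal t = 1$, so $r^\intercal x_0 \neq r^\intercal x_1$ and the state after the isometry is $\frac{1}{\sqrt{2}}\left(\ket{x_0}_\X\ket{r^\intercal x_0}_\B + \ket{x_1}_\X\ket{r^\intercal x_1}_\B\right)$; the Hadamard-basis measurement of $\X$ returns a uniform $d \in \FF_2^{n+1}$ and collapses $\B$, up to a global phase, to $\ket{+}$ if $d^\intercal t = 0$ and to $\ket{-}$ if $d^\intercal t = 1$ (using $d^\intercal x_0 + d^\intercal x_1 = d^\intercal t$), independently of whether $r^\intercal x_0$ is $0$ or $1$. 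Writing $\tau = d^\intercal t$, measuring this $\pm$-state in the basis above gives $\Pr[b = \tau] = \frac{1}{2}(1+\sin 2\theta)$; for $\theta = \pi/8$ the acceptance condition is $b = \tau$ and for $\theta = -\pi/8$ it is $b \neq \tau$, and since $\frac{1}{2}(1+\sin(\pi/4)) = \cos^2(\pi/8)$ both cases again give acceptance probability $\cos^2(\pi/8)$. Averaging over $c,\theta,d,y$ finishes the full-rank case.

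I expect no conceptual obstacle: the whole argument is a bookkeeping computation. The only delicate points are (i) verifying that in the $c=1$ branch the Hadamard-basis measurement of $\X$ truly leaves $\B$ in a $\pm$-state \emph{up to an irrelevant global phase and independently of $r^\intercal x_0$}, and (ii) checking that all four $(\theta, d^\intercal t)$ combinations in the CHSH test collapse to the single value $\cos^2(\pi/8)$ (this is where the angle $\pi/8$ enters, through $\cos^2(\pi/8) = \frac{1+\cos(\pi/4)}{2} = \frac{1+\sin(\pi/4)}{2}$); one should also make sure that the $O(2^{-n})$ failure probability of \Cref{prop:rank} is the \emph{only} source of imperfection, so that no further slack leaks into the bound.
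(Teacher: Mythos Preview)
Your proposal is correct and follows essentially the same approach as the paper: condition on $V$ being full rank (absorbing the $O(2^{-n})$ failure into the bound via \Cref{prop:rank}), identify the residual claw state $(\ket{x_0}+\ket{x_1})/\sqrt{2}$ with $x_0+x_1=t$, and then split into the cases $c=0$ (where $\B$ is a computational-basis state $\ket{u^\intercal y}$) and $c=1$ (where $\B$ collapses to $\ket{\pm}$ depending on $d^\intercal t$), showing in each that the CHSH test accepts with probability exactly $\cos^2(\pi/8)$. Your write-up is in fact slightly more explicit than the paper's in two places---the bookkeeping identity $r^\intercal x = u^\intercal(Ax)$ and the trigonometric verification $\tfrac{1}{2}(1+\sin(\pi/4))=\cos^2(\pi/8)$---but the structure and the key observations are identical.
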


\begin{proof}
Let $\ell=n+1$ and let $A\in\FF^{m\times \ell}$ denote the matrix that is composed by the vectors $a_i$, sent by the verifier in Step~\ref{step:a} in the claw generation of \Cref{protocol:simple}, as its rows. Let $V\in\FF^{m\times n}$ denote the matrix similarly composed by the random vectors $v_i$. It holds that $A=(V,Vs)$. By \Cref{prop:rank}, $V$ has full rank with probability at least $1-O(2^{-n})$ and, therefore, $A$ defines a 2-to-1 linear map with such a probability.

We show that the verifier accepts with probability $\cos^2(\pi/8)$ conditioned on $A$ is 2-to-1. By the above, this is sufficient to derive the lemma.

It is convenient to delay the measurements done by the prover in Step~\ref{step:y} until after all iterations in claw generation are complete. This results in an equivalent residual state by the principle of delayed measurement. Applying the isometric map defined by $a_i$ to the prover state results into
    \[
        \frac{1}{\sqrt{2^{n+1}}}\sum_{x\in\mathbb{F}_2^{n+1}}\ket{x, \psi_x}_{\X\Y_{1\dots i-1}} \mapsto \frac{1}{\sqrt{2^{n+1}}}\sum_{x\in\mathbb{F}_2^{n+1}}\ket{x, \psi_x, a_i^\intercal x}_{\X\Y_{1\dots i}}.
    \]
    We can then rewrite the state of the prover in the end of claw generation (but before any measurement) as
    \[
    \frac{1}{\sqrt{2^{n+1}}}\sum_{x\in\mathbb{F}_2^{n+1}}\ket{x, A x}_{\X\Y_{1\dots m}}.
    \]
    Measuring the last $m$ qubits in registers $\Y_1,\dots,\Y_m$ in the computational basis, the prover obtains a $y \in \mathbb{F}_2^m$ and the state collapses to
    \[
    \frac{1}{\sqrt{2}}\sum_{x\in\mathbb{F}_2^{n+1} ~:~ Ax = y}\ket{x} = \frac{1}{\sqrt{2}}\left(\ket{x_0}+\ket{x_1}\right).
    \]
    Let $r\in\mathbb{F}_2^{\ell}$ be the string sent by the verifier in the commitment phase. The prover maps its state to
    \begin{equation}\label{eq:claw-state}
        \frac{1}{\sqrt{2}}\left(\ket{x_0}+\ket{x_1}\right)_\X \mapsto \frac{1}{\sqrt{2}}\left(\ket{x_0,  r^\intercal x_0}+\ket{x_1, r^\intercal x_1}\right)_{\X\B},
    \end{equation}
    then measures $\X$ in the Hadamard basis and gets $d\in \mathbb{F}_2^{\ell}$. We distinguish between two cases.
    \begin{itemize}
        \item (Case I) $c=0\implies r^\intercal x_0 = r^\intercal x_1=\delta$ for some $\delta\in\FF_2$: In such a case, the prover's state before measuring $\B$ (RHS of \Cref{eq:claw-state}) is a tensor product between a claw state in $\X$ and the qubit in $\B$, therefore, the measurement of $\X$ has no effect on $\B$. Thus, the prover's residual state after measuring is $\ket{\delta}$. In the CHSH test, the verifier verifies $b$ against the most likely outcome of the prover's measurement, which is $\delta=r^\intercal x_0=u^\intercal y$. It follows, then, that the prover succeeds with probability exactly $\cos^2(\pi/8)$.
        \item (Case II) $c=1\implies r^\intercal x_0 \neq r^\intercal x_1$: The residual state of the prover after measuring $\X$ is
        \[
            \frac{1}{\sqrt{2}}\left(\ket{0}+(-1)^{d\cdot(x_0 + x_1)}\ket{1}\right).
        \]
        Thus, if $\theta = \pi/8$, the most likely outcome of the prover's measurement is $d^\intercal (x_0+x_1)$, whereas if $\theta = -\pi/8$, the most likely outcome is $1-d^\intercal (x_0+x_1)$. Notice that if $A$ has rank $n$ then $\ker(A)$ has dimension 1, namely a single non-trivial element. Therefore, it must be the case that $x_0+x_1=t$. We conclude that the honest prover succeeds with probability exactly $\cos^2(\pi/8)$ in this case as well.
    \end{itemize}
\end{proof}

It remains to show that \Cref{protocol:simple} is sound against classical cheating provers. In the following lemma, we show that if a classical prover is able to convince the verifier with probability significantly better than $3/4$, then he is able to find the vector $s$ sampled by the verifier at the first step of claw generation. The hardness of finding $s$ is implied directly by the bound on space complexity of learning parities from \Cref{lmm:raz}; computing $s$ correspond precisely to learning parities give the verifier messages $(v_i,v_i^\intercal s)$ at claw generation.

Hence, together with \Cref{lem:completeness-simple}, the following lemma completes the proof of \Cref{thm:simple-poq}.

\begin{lemma}[Soundness]\label{lem:simple-soundness}
    Assume there exists a classical adversary $\padv(1^\sec)$ that makes the verifier from \Cref{protocol:simple} accept with probability larger than $3/4 + \epsilon(\sec)$ for some function $\epsilon=1/\poly$. Then, there exists an adversary $\adv(1^\sec)$ that interacts with the verifier in the claw generation sub-protocol (the first phase) and, at the end of the interaction, outputs $s$ with probability at least $1/\poly(\sec)$, where $s$ is the vector sampled by the verifier at the beginning of the protocol. Further, the space complexity of $\adv$ is larger than the space complexity of $\padv$ by an additive factor of at most $O(\ell\log\ell)$.
\end{lemma}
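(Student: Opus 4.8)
The plan is to follow the standard template for the soundness of CHSH‑based proofs of quantumness~\cite{BCMVV21,KCVY22}: build from $\padv$ a classical algorithm $\adv$ that, interacting with the verifier in the claw‑generation phase, extracts the hidden vector $s$. Since the claw‑generation matrix $A=(V,Vs)$ is $2$‑to‑$1$ with probability $1-O(2^{-n})$ (apply \Cref{prop:rank} to $V$), we may condition on this event; then the two preimages $x_0,x_1$ of the string $y$ returned by $\padv$ are well defined and satisfy $x_0\oplus x_1=(s,-1)=:t$, so recovering the claw (equivalently $t$) yields $s$.

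I would first turn the acceptance advantage into an extraction advantage. Writing $p_c$ for the acceptance probability conditioned on the verifier's coin being $c$, the hypothesis gives $\tfrac12(p_0+p_1)>3/4+\epsilon$, hence $p_0,p_1>1/2+2\epsilon$ and $(p_0-\tfrac12)+(p_1-\tfrac12)>\tfrac12+2\epsilon$. Recall $\padv$ never learns $c$: it gets one vector $r$, replies with $d$, then gets $\theta\in\{\pi/8,-\pi/8\}$ and replies with $b$. When $c=0$, $r$ is uniform over the row space $\ker(A)^\perp$ of $A$ (a uniform combination $\sum_i u_i a_i$ is uniform there) and the check forces $b=u^\intercal y=r^\intercal x_0$; when $c=1$, $r$ is uniform over the complementary coset $\{r:r^\intercal t=1\}$ and, averaging the two CHSH checks, rewinding over $\theta$ forces the two answers $b(r,\pi/8),b(r,-\pi/8)$ to be opposite. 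I would therefore define $F(r):=b(r,\pi/8)\oplus b(r,-\pi/8)$, obtained by running the commitment‑and‑CHSH phase twice (free for a classical prover): on a row‑space input the prover wants $b(r,\pi/8)=b(r,-\pi/8)=r^\intercal x_0$, so $F(r)=0=r^\intercal t$, while on the complementary coset the $c=1$ check wants the answers opposite, so $F(r)=1=r^\intercal t$. A short calculation promotes $(p_0-\tfrac12)+(p_1-\tfrac12)>\tfrac12+2\epsilon$ to $\Pr_r[F(r)=r^\intercal t]>1/2+2\epsilon$ over a uniform $r\gets\FF_2^{n+1}$; intuitively $F$ reads off $\padv$'s own guess for $r^\intercal t$, which is exactly the bit it must possess to beat the classical value $3/4$.

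The extractor then runs the Goldreich–Levin decoder of \Cref{lmm:GL} on $F$ to recover $t$, and reads off $s$. The delicate point — and the source of the additive $O(\ell\log\ell)$ overhead rather than $O(\ell)$ — is that the naive decoder would need to store/invert the $\Theta(\ell^2)$‑bit matrix $A$, which blows the budget. Instead, during claw generation $\adv$ forwards each $a_i$ between verifier and $\padv$, records $y$ ($m=2n$ bits), and maintains only $O(\log\ell)$ running subset‑sums $\rho_1,\dots,\rho_t$ of the $a_i$'s (uniform elements of $\ker(A)^\perp$), for $O(\ell\log\ell)$ bits in total, and keeps $\padv$'s final classical state. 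It then invokes GL with the $\rho_j$'s as its internal random vectors and with queries of the form $e_i+\beta+\sum_{j\in S}\rho_j$ for $\beta\in\{0,e_{n+1}\}$; each such point lies in a \emph{fixed} coset of $\ker(A)^\perp$ (the coset depends only on $s_i$, hence is constant over the majority vote for coordinate $i$), so the relevant quantity is the advantage of $F$ on that coset, and running GL both with and without the $e_{n+1}$ shift (and, if needed, against $F$ and $F\oplus1$) pins down every coordinate of $t$. By \Cref{lmm:GL} the decoder uses $O(\ell\log\ell)$ memory and $\poly(\ell)$ time and succeeds with inverse‑polynomial probability, so $\mathrm{space}(\adv)\le\mathrm{space}(\padv)+O(\ell\log\ell)$ as claimed, which together with \Cref{lmm:raz} contradicts the assumed success probability and completes the proof of \Cref{thm:simple-poq}.

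The main obstacle I anticipate is the passage in the second and third paragraphs from ``beats $3/4+\epsilon$'' to ``good parity oracle for $t$ on the cosets GL queries'', made rigorous against an \emph{adversarial} prover that need not play the idealized strategy and may concentrate its advantage in one coset of $\ker(A)^\perp$ or hedge between the two branches. Handling this will likely require first fixing $\padv$'s internal randomness and the claw‑generation transcript so that the conditional acceptance probability exceeds $3/4+\Omega(\epsilon)$ (an inverse‑polynomial‑probability event), then a case analysis on how the advantage of $F$ splits between the two cosets, using the $c=0$ consistency check to pin down the prover's behaviour on the row space; it may also be cleanest to have $\adv$ output a $\poly(\ell)$‑size list of candidates for $t$ and succeed whenever one entry equals $s$, which still suffices. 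The remaining ingredients — the $2$‑to‑$1$ reduction, the pairwise independence of the sums $\sum_{j\in S}\rho_j$ needed for concentration in the GL majority, and the final application of \Cref{lmm:raz} with the stated memory budget — are routine.
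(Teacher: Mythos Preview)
Your overall strategy matches the paper: condition on $A$ having rank $n$, freeze the prover's state $w$ after claw generation (the paper does this by conditioning on $w$ lying in a ``good'' set $G$ of mass $\ge\epsilon/2$ where the conditional acceptance exceeds $3/4+\epsilon/2$), define the oracle $F(r)$ by rewinding over $\theta$, show it predicts $r^\intercal t$ with advantage $\Omega(\epsilon)$ on uniform $r\in\FF_2^{\ell}$, and apply Goldreich--Levin to recover $t=(s,-1)$. The key enabling observation, which you use implicitly but the paper states explicitly, is that the verifier's $r$ is itself uniform over all of $\FF_2^{\ell}$, since the $c=0$ branch is uniform in the row span $\{r:r^\intercal t=0\}$ and the $c=1$ branch is uniform in its complement.

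Where you go astray is the third paragraph, and it rests on a misunderstanding. The GL extractor of \Cref{lmm:GL} needs only oracle access to $F$, and $F$ is computed from the frozen $w$ alone: feed $r$ to $\padv(w)$, collect $d$, run both $\theta$ branches, XOR the answers. Nothing about $A$, $y$, or the row span enters. The $O(\ell\log\ell)$ overhead is simply GL's own cost of storing $O(\log\ell)$ \emph{fresh uniform} seed vectors in $\FF_2^{\ell}$, exactly as stated in \Cref{lmm:GL}; the paper just runs GL black-box on $\badv(w)$. Your proposed workaround---drawing the GL seeds $\rho_j$ as subset sums of the $a_i$'s during streaming---is therefore unnecessary, and in fact it \emph{manufactures} the very obstacle you flag in your final paragraph: with $\rho_j^\intercal t=0$ always, every GL query $e_i+\sum_{j\in S}\rho_j$ for coordinate $i$ lands in the single coset $\{r:r^\intercal t=t_i\}$, so the majority vote would require $F$ to be biased on each coset \emph{separately}, whereas the acceptance bound only controls the sum $\Pr[F{=}0\mid r^\intercal t{=}0]+\Pr[F{=}1\mid r^\intercal t{=}1]\ge 1+2\epsilon$. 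Drop the row-span restriction, use fresh uniform seeds, and the entire coset case-analysis you anticipate simply evaporates.
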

\begin{proof}
    First, we switch to a modified experiment where the matrix $A$ defined by the rows $a_i$ sent by the verifier in claw generation is a rank-$n$ matrix. Since this occurs with probability all but $O(2^{-n})$, the assumption on $\padv$ that it convinces the verifier holds also in the modified experiment.


    Let $W$ be the internal state of the prover after the completion of claw generation (the first phase). Let $\accept$ denote the event that the verifier accepts at the end of the interaction with $\padv$. We define a ``good'' set $G$ over the support of $W$ as
    \[
    G = \left\{ w : \Pr\left(\accept \mid W=w\right)\geq \frac{3}{4} + \frac{\varepsilon(\sec)}{2} \right\}.
    \]
    We claim that $\Pr\left(W \in G\right) \geq \varepsilon(\sec)/2$, where probability is over the random choices of $\padv$ and the verifier. Assume not, then
    \begin{align*}       
    \Pr\left(\accept\right) &= 
    \Pr\left(\accept\mid W \in G\right) \cdot \Pr\left(W \in G\right) + \Pr\left(\accept\mid W \notin G\right) \cdot \Pr\left(W \notin G\right)\\
    &< \frac{\varepsilon(\sec)}{2} + \frac{3}{4} + \frac{\varepsilon(\sec)}{2} = \frac{3}{4} + \varepsilon(\sec),
    \end{align*}
    in contradiction to the initial assumption on $\padv$. 

    Next, we argue that the vector $r$ that the verifier sends in Step~\ref{step:simple-r} of the commitment phase distributes uniformly at random in the eyes of the prover. To see this, observe that $r$ is sampled as follows:
    \begin{itemize}
        \item With probability $1/2$, $r$ is sampled uniformly conditioned on $r^\intercal t = 1$.
        \item With probability $1/2$, $r$ is uniformly sampled from the row-span of $A$. Conditioned on $A$ is rank-$n$, $t$ is the only non-trivial element in $\ker(A)$. Thus, this is equivalent to sampling a uniform $r$, conditioned on $r^\intercal t = 0$.
    \end{itemize}
    Such a distribution is equivalent to uniform since $r^\intercal t=0$ with probability exactly half for a uniformly random $r$.
    
    We describe an adversary $\badv$ that takes as input a $\padv$'s internal state after claw generation, and aims to predict $r^\intercal t$ for a uniformly random $r\gets\FF^{\ell}_2$ (where $\ell=n+1$). $\badv(w)$ performs the following:
    \begin{enumerate}
        \item On input a uniformly sampled $r \gets \FF_2^{\ell}$, $\badv$ sends $r$ to $\padv(w)$ as the message from Step~\ref{step:simple-r} of the commitment phase. 
        \item $\padv$ returns some $d\in \mathbb{F}_2^{\ell}$.
        \item\label{step:B-3} Simulate $\padv$ on both $\theta = \pi/8$ and $\theta = -\pi/8$, rewinding the algorithm in-between. If $\padv$ outputs the same answer for both cases return $0$, else return $1$.
    \end{enumerate}
    
    Let us denote by $\equal(w)$ the event where the answers of simulated $\padv$ in Step~\ref{step:B-3} of $\badv(w)$ are indeed identical for both $\theta = \pi/8$ and $\theta = -\pi/8$. To conclude the proof of the theorem it suffices to show that
    \begin{equation}\label{eq:final}        
    \Pr_{r,\padv}\left(\equal(w)~\middle|~ r^\intercal t =0, w\in G\right) - \Pr_{r,\padv}\left(\equal(w)~\middle|~ r^\intercal t =1, w\in G\right) \geq 2\varepsilon(\sec).
    \end{equation}
    This is indeed sufficient, since it implies the existence of a predictor that has bias $\epsilon$ against $r^\intercal t$ conditioned on $w\in G$. Consequently, by \cref{lmm:GL} (Goldreich-Levin), there exists an extractor with memory $O(\ell\log \ell)$ that outputs $t$, given oracle access to $\badv(w)$ that is successful when $w\in G$. Given such an extractor, we define $\adv$ as follows: Simulate $\padv$ in the claw generation with the verifier and obtain an internal state $w$, then use the extractor with access to $\badv(w)$ to extract $t$. Since $w\in G$ with probability at least $\epsilon/2$, it suffices to argue inverse-polynomial success probability of $\adv$ given $w\in G$, which follows by the success of the Goldreich-Levin extractor.
    
    To conclude, we turn to the proof of \cref{eq:final}, which is derived by the following
    \begin{align*}
        \frac{3}{4} +\frac{\varepsilon(n)}{2} &\leq \Pr_{r\in \mathbb{F}_2^{n+1}}\left(\accept\mid w\in G\right) \\
        &= \frac{1}{2}\Pr_{r\in \mathbb{F}_2^{n+1}}\left(\accept\mid r^\intercal t=0, w\in G\right) +
        \frac{1}{2}\Pr_{r\in \mathbb{F}_2^{n+1}}\left(\accept\mid r^\intercal t=1, w\in G\right)\\
        &\leq \frac{1}{4} + \frac{1}{4}\Pr_{r\in \mathbb{F}_2^{n+1}}\left(\equal(w)\mid r^\intercal t =0, w\in G\right) +
        \frac{1}{2}\Pr_{r\in \mathbb{F}_2^{n+1}}\left(\accept\mid r^\intercal t=1, w\in G\right)\\
        &\leq \frac{1}{4} + \frac{1}{4}\Pr_{r\in \mathbb{F}_2^{n+1}}\left(\equal(w)\mid r^\intercal t =0, w\in G\right) +
        \frac{1}{2} - \frac{1}{4}\Pr_{r\in \mathbb{F}_2^{n+1}}\left(\equal(w)\mid r^\intercal t =1, w\in G\right)\\
        &= \frac{3}{4} + \frac{1}{4}\left(\Pr_{r\in \mathbb{F}_2^{n+1}}\left(\equal(w)\mid r^\intercal t =0, w\in G\right) - \Pr_{r\in \mathbb{F}_2^{n+1}}\left(\equal(w)\mid r^\intercal t =1, w\in G\right)\right).
    \end{align*}
\end{proof}

\section{Proof of Quantumness with Arbitrary Gap}\label{sec:exp}

Our second main result is a proof of quantumness protocol that can exhibit up to an exponential gap between the space complexity of the honest parties and the space complexity of the best attack. The protocol instantiates a template laid down by~\cite{BGKPV23} which, similarly to our protocol from \Cref{sec:simple}, follows the general framework from the literature~\cite{BCMVV21,KCVY22,KLVY23} that bases PoQ on claw generation. Unlike the simpler protocol from \Cref{sec:simple}, the claw here is generated in an interactive process, namely via \emph{interactive hashing}~\cite{NOVY}. While the protocol, just like our first one, guarantees only a constant gap between completeness and soundness, the gap can be naturally amplified by sequential repetition. Formally, we obtain the following result.

\begin{theorem}[PoQ with Arbitrary Gap]\label{thm:arbitrary-poq}
    There exists a protocol between a classical verifier and a quantum prover where the two parties take as input a security parameter $\sec\in\NN$ and, at the end of the interaction, the verifier either accepts or rejects, and
    \begin{itemize}
        \item (Completeness) The verifier accepts with probability $\cos^2(\pi/8)$ when interacting with an honest prover.
        \item (Soundness) Any non-uniform classical prover $\padv$ that on input $\sec$ uses less than $m(\sec)$ memory bits has success probability at most $3/4+2^{-\Omega(\sec)}$ in making the verifier accept.
        \item (Complexity) The verifier and (honest) prover run in time $O(\sec^7m^3\cdot\polylog m)$ and in space $O(\sec\cdot\polylog m)$.
    \end{itemize}
\end{theorem}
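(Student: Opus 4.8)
The plan is to realize the claw-generation template for proofs of quantumness --- the same three-phase structure (\emph{claw generation}, \emph{commitment}, \emph{CHSH test}) used in \Cref{protocol:simple} and in~\cite{BGKPV23,KCVY22} --- but with a claw-generation sub-protocol that is sound in the bounded storage model and whose honest execution costs only polylogarithmic space. The sub-protocol has three layers. (i) \textbf{One-bit claw generation via interactive hashing:} the verifier streams $k=\poly(\sec,m)$ uniform bits $u_1,\dots,u_k$ one at a time, remembering two random locations $v^*_0,v^*_1\in[k]$; the parties then run interactive hashing~\cite{NOVY} on $[k]$, which the honest prover executes \emph{coherently}~\cite{MY23} starting from $\frac{1}{\sqrt k}\sum_v\ket{v}_\V\ket{u_v}_\U$ and --- when $h(v^*_0)=h(v^*_1)=y$, which the verifier checks and otherwise restarts --- ends holding the claw state $\frac{1}{\sqrt2}(\ket{v_0,u_{v_0}}+\ket{v_1,u_{v_1}})$ for the two $h$-preimages $v_0,v_1$ of $y$, all four values known to the verifier. (ii) \textbf{Stitching:} repeat (i) sequentially $\sec$ times and stitch as in the overview, using predicates $g^j$ with $g^j(v^j_b)=b$, so the honest prover ends with a single claw state $\frac{1}{\sqrt2}(\ket{x_0}+\ket{x_1})$ where $x_0=(v^1_0,u^1_0,\dots)$ and $x_1=(v^1_1,u^1_1,\dots)$ are both known to the verifier. (iii) \textbf{Commitment and CHSH:} since here the verifier knows $x_0,x_1$ outright, it simply flips a coin $c$, samples $r$ uniformly conditioned on $r^\intercal(x_0\oplus x_1)=c$, sends it, and runs the CHSH test as in \Cref{protocol:simple}. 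Completeness then follows the proof of \Cref{lem:completeness-simple} essentially verbatim --- the honest prover's residual qubit is $\pm\pi/8$-correlated with the bit the verifier checks --- giving acceptance probability exactly $\cos^2(\pi/8)$, with no $O(2^{-n})$ slack since $h$ being $2$-to-$1$ is guaranteed by construction rather than only w.h.p.

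For soundness I would follow \Cref{lem:simple-soundness} almost line for line. Given a classical $\padv$ with memory $<m$ that accepts with probability $3/4+\epsilon$, pass to the ``good'' event that the post-claw-generation state $w$ still yields acceptance $\ge 3/4+\epsilon/2$ (which has probability $\ge\epsilon/2$); note that from the prover's view $r$ is uniform (half the time $r^\intercal(x_0\oplus x_1)=0$, half the time $1$, uniform otherwise); define $\badv(w)$ which feeds $r$, receives $d$, and rewinds $\padv$ over $\theta\in\{\pm\pi/8\}$, returning whether the two CHSH answers agree; the chain of inequalities ending \Cref{lem:simple-soundness} shows $\badv(w)$ predicts $r^\intercal(x_0\oplus x_1)$ with bias $\ge 2\epsilon$ conditioned on $w\in G$, so \Cref{lmm:GL} yields, with memory $O(\ell\log\ell)=\polylog m$, an extractor that outputs $x_0\oplus x_1$ with probability polynomial in $\epsilon$. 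The $\U$-coordinates of $x_0\oplus x_1$ are exactly the ``which-branch'' bits $\beta_j:=u^j_{v^j_0}\oplus u^j_{v^j_1}$, so it suffices to prove the key claim: \emph{no $m$-bounded classical prover outputs $(\beta_1,\dots,\beta_\sec)$ correctly except with probability $2^{-\Omega(\sec)}$} --- matched against the polynomial extraction success, this forces $\epsilon=2^{-\Omega(\sec)}$, which is the claimed bound.

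That claim is the technical heart and the main obstacle, and I would prove it by an incompressibility argument in the style of~\cite{DQW23}, applied round-by-round and then combined by a product bound. Condition on the prover's $m$-bit state $W_j$ entering round $j$; since the round-$j$ stream $u^j_1,\dots,u^j_k$ is fresh and uniform, an entropy-splitting/incompressibility argument bounds by $O(m)\ll k$ the number of positions $v$ with $H(u^j_v\mid W_j)<1-\delta$, i.e. the ``memorized'' set $B_j$; interactive-hashing soundness (\ref{item:ih-soundness}) then gives that the transcript-defined pair satisfies $\{v^j_0,v^j_1\}\not\subseteq B_j$ except with a small constant probability $p$; on the complementary event at least one of $u^j_{v^j_0},u^j_{v^j_1}$ has min-entropy $\ge 1-O(\delta)$ given the prover's entire view --- including all transcripts and the stitching predicate $g^j$, which only reveals $g^j(v^j_0)=0,g^j(v^j_1)=1$, a single bit consistent with either value of $u^j$ --- so $\beta_j$ is near-uniform given that view. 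Hence $\Pr[\,\padv\text{ outputs }\beta_j\mid\text{everything from rounds }<j\,]\le\tfrac12+O(\delta)+p<1$, and multiplying over $j=1,\dots,\sec$ gives $2^{-\Omega(\sec)}$. The delicate points, where I expect the real work to be, are: making the conditioning rigorous so that ``$\beta_j$ near-uniform'' survives not only the later rounds but also the rewinding performed by $\badv$ and the Goldreich--Levin extractor in the commitment phase (this is why the repetitions are sequential and the memory bound is enforced at each round boundary); checking that stitching, although it entangles the $\sec$ claws, does not let the prover trade information about one $\beta_j$ for another; and pinning down the single-round interactive-hashing soundness error to a small enough constant, which fixes $k$ to a concrete polynomial in $\sec$ and $m$.

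Finally, the complexity bookkeeping. Each one-bit claw generation streams $k=\poly(\sec,m)$ bits and is restarted $O(k)$ times in expectation until $h(v^*_0)=h(v^*_1)$, costing $O(k^2)$ time; interactive hashing on $[k]$ runs in $O(\log k)$ rounds of $O(\log k)$-size messages and $O(\polylog k)$ time; doing this $\sec$ times and stitching multiplies time by $O(\sec)$ and adds the $\poly(\sec,m)$-size stitching predicates, and plugging in the concrete bound on $k$ dictated by the soundness analysis yields the stated $O(\sec^7 m^3\cdot\polylog m)$ runtime. For space, the honest prover keeps only the $O(\log k)=\polylog m$ qubits of $\V\otimes\U$ plus $O(1)$ ancillas for stitching, commitment and CHSH, reused across the $\sec$ sequential rounds, together with the $O(\sec)$ bits it must retain for stitching; the verifier likewise stores $O(\polylog m)$ per round plus the $O(\sec)$ remembered locations and bits --- in total $O(\sec\cdot\polylog m)$ space, as claimed.
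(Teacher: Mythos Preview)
Your high-level plan matches the paper's: streaming plus coherent interactive hashing for a one-bit claw, sequential repetition and stitching for amplification, then commitment and CHSH. The soundness reduction via Goldreich--Levin to claw-finding hardness, followed by a round-by-round incompressibility argument against the streamed bits, is also the right skeleton.

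There is, however, a real gap in your soundness argument. Because you run the commitment phase with a \emph{single} string $r$ (as in \Cref{protocol:simple}), Goldreich--Levin extracts only $x_0\oplus x_1$, and you then need hardness of the bits $\beta_j=u^j_{v^j_0}\oplus u^j_{v^j_1}$. Your incompressibility step shows that at least one of $u^j_{v^j_0},u^j_{v^j_1}$ has high min-entropy given the prover's state $W$, and you conclude ``so $\beta_j$ is near-uniform given that view''. That inference is false: if the prover stored the single bit $u_i\oplus u_{i'}$, then each of $u_i,u_{i'}$ has full entropy given $W$ yet their XOR is determined. Of course the prover cannot target the eventual pair $(v^j_0,v^j_1)$ in advance, but your entropy bookkeeping---which conditions only on $W$, not on the companion bit---does not capture this, and the step as written does not go through.

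The paper sidesteps this by using the two-string commitment of~\cite{BGKPV23}: the verifier sends $r_0,r_1$, the prover maps $\ket{x}\mapsto\ket{x}\ket{r_{g(x)}^\intercal x}$ (this is where the predicate $g$ produced by stitching is actually used), and the CHSH test checks $b$ against $\alpha=(r_0\Vert r_1)^\intercal(x_0\Vert x_1)$. Goldreich--Levin now extracts the \emph{full} pair $(x_0,x_1)$, so the needed hardness is that of outputting both $Z_0$ and $Z_1$. At this point the inference ``some $Z^j_b$ has high min-entropy given $W$ $\Rightarrow$ $H_\infty(Z^j_0,Z^j_1\mid W)$ is high'' is trivially true (joint min-entropy dominates any marginal), and summing over the good rounds---with a Chernoff bound on the number of bad rounds---yields $H_\infty(Z_0,Z_1\mid W)\ge\Omega(\sec)$.

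One minor correction on complexity: the restart succeeds with probability $\Theta(1/k^2)$, since the verifier's random pair $\{v^*_0,v^*_1\}$ is independent of the hashing transcript and must match one of $\binom{k}{2}$ possible preimage pairs. Hence each one-bit claw costs $O(k^3)$ expected time, not $O(k^2)$; this is where the $m^3$ in the stated runtime comes from once $k$ is set.
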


\subsection{PoQ from Claw Generation}\label{sec:poq-from-claw}

We recall the PoQ outline from~\cite[Figure 4]{BGKPV23}, which assumes the existence of a claw generation sub-protocol. In fact, \cite{BGKPV23} assumes a special case of claw generation, namely claw generation via \emph{trapdoor claw-free functions} (TCF). In contrast, we consider a more general outline where claw generation may be performed arbitrarily as long as it guarantees completeness, i.e. that the prover obtains a claw state, and soundness, i.e. that it is hard for the verifier to compute both values in the claw simultaneously. This does not affect the analysis of the protocol at all: its completeness and soundness follow from those of claw generation, just as in~\cite{BGKPV23}. \iflncs Nevertheless, we provide proofs for the sake of completeness in \Cref{sec:appendix}. \else Nevertheless, we provide proofs for the sake of completeness. \fi 

\begin{protocol}[PoQ from Claw Generation]\label{protocol:template} The following proof of quantumness is parameterized by a security parameter $\sec\in\NN$ and consists of three phases:
\begin{itemize}
    \item (Claw Generation) The prover and verifier engage in a \emph{claw generation sub-protocol} at the end of which the verifier has a pair of \emph{claw values} $x_0,x_1\in\bin^{\ell}$, where $\ell:=\ell(\sec)$ is a polynomial, and the prover's residual state is the \emph{claw state}\begin{equation}\label{eq:claw}
        \frac{1}{\sqrt{2}}(\ket{x_0}+\ket{x_1})_{\X}.
    \end{equation}
    Additionally, we assume that the prover obtains a function $g:\bin^\ell\to \bin$ that satisfies $g(x_0)=0$ and $g(x_1)=1$.
    \item (Commitment)
    \begin{enumerate}
        \item\label{step:r0r1} The verifier samples uniformly random $r_0,r_1\gets\FF_2^{\ell}$ and sends it to the prover.
        \item The prover applies the following isometric mapping to his state
        \[
        \ket{x}_\X \mapsto \ket{x}_{\X}\ket{{r_{g(x)}}^\intercal x}_\B,
        \]
        then measures $\X$ in Hadamard basis to obtain $d \in \mathbb{F}_2^{\ell}$ and sends it to the verifier.
    \end{enumerate}
    \item (CHSH Test) 
    \begin{enumerate}
        \item The verifier samples an angle $\theta \gets \{\pi/ 8, - \pi/8\}$ and sends $\theta$ to the prover.
        \item The prover measures the qubit in $\B$ in the basis
        \[
        \left\{\cos(\theta)\ket{0} + \sin(\theta)\ket{1}, \cos(\theta)\ket{0} - \sin(\theta)\ket{1}\right\}
        \]
        to obtain an outcome $b\in \{0,1\}$ and sends it to the verifier.
        \item \label{step:last-test} Let $\alpha=r_0^\intercal x_0\oplus r_1^\intercal x_1=(r_0||r_1)^\intercal (x_0||x_1)$. The verifier accepts if the following conditions are satisfied (hereby, arithmetics are over the integers):
        \begin{itemize}
            \item If $\theta = \pi/8$, accept iff \begin{equation}\label{eq:b-theta-plus}
                (-1)^b=(1-\alpha)(-1)^{r_0x_0} + \alpha(-1)^{d^\intercal(x_0\oplus x_1)}.
            \end{equation}
            \item If $\theta = -\pi/8$,  accept iff \begin{equation}\label{eq:b-theta-minus}(-1)^b=(1-\alpha)(-1)^{r_0x_0} - \alpha(-1)^{d^\intercal(x_0\oplus x_1)}.
            \end{equation}
        \end{itemize}
        \end{enumerate}
\end{itemize}
\end{protocol}

\iflncs

The following two lemmas, which we prove in \Cref{sec:appendix}, derive the completeness and soundness of the above protocol given the claw generation phase indeed satisfies the desired properties.

\begin{restatable}[Completeness~\cite{BGKPV23}]{lemma}{templatecompleteness}\label{lem:template-completeness}
    The verifier in \Cref{protocol:template} accepts with probability $\cos^2(\pi/8)\approx 0.853$ when interacting with an honest quantum prover.
\end{restatable}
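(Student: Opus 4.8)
\emph{Proof proposal.} The plan is to follow the honest prover's register through all three phases of \Cref{protocol:template} and read off the acceptance probability by a case analysis on the bit $\alpha = r_0^\intercal x_0 \oplus r_1^\intercal x_1$. By assumption, after the claw generation phase the prover holds $\tfrac{1}{\sqrt 2}(\ket{x_0}+\ket{x_1})_\X$ together with a predicate $g$ with $g(x_0)=0$ and $g(x_1)=1$. Applying the commitment isometry $\ket{x}_\X\mapsto\ket{x}_\X\ket{r_{g(x)}^\intercal x}_\B$ branch by branch turns the state into $\tfrac{1}{\sqrt2}\bigl(\ket{x_0}\ket{r_0^\intercal x_0}+\ket{x_1}\ket{r_1^\intercal x_1}\bigr)_{\X\B}$; writing $\beta_0=r_0^\intercal x_0$, $\beta_1=r_1^\intercal x_1$, we have $\alpha=\beta_0\oplus\beta_1$.

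Next I would apply $H^{\otimes\ell}$ to $\X$ and measure, obtaining $d\in\FF_2^\ell$, splitting into two regimes. If $\alpha=0$ then $\ket{\beta_0}=\ket{\beta_1}$, the $\B$ register factors out, the outcomes $d$ that occur are exactly those with $d^\intercal(x_0\oplus x_1)=0$, and $\B$ is left in the product state $\ket{\beta_0}$. If $\alpha=1$ then $\ket{\beta_0}\perp\ket{\beta_1}$, every $d$ occurs with probability $2^{-\ell}$, and after pulling out the global phase $(-1)^{d^\intercal x_0}$ the residual state of $\B$ is $\tfrac{1}{\sqrt2}\bigl(\ket0+(-1)^{d^\intercal(x_0\oplus x_1)}\ket1\bigr)$ — and crucially this is the same up to global phase whether $(\beta_0,\beta_1)=(0,1)$ or $(1,0)$, so the verifier's single rule will apply uniformly.

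Finally I would compute the distribution of the CHSH outcome $b$ for each angle and match it against \eqref{eq:b-theta-plus} and \eqref{eq:b-theta-minus}. In the $\alpha=0$ case $\B=\ket{\beta_0}$, so $\Pr(b=\beta_0)=\cos^2\theta=\cos^2(\pi/8)$, while the verifier's rule collapses to ``$(-1)^b=(-1)^{\beta_0}$'' (the $\alpha(-1)^{d^\intercal(x_0\oplus x_1)}$ term vanishes), so the verifier accepts with probability $\cos^2(\pi/8)$ for both angles. In the $\alpha=1$ case a one–qubit computation gives $\Pr\bigl(b\oplus d^\intercal(x_0\oplus x_1)=0\bigr)=\tfrac12(1+\sin2\theta)$ for $\theta=\pi/8$ and $\Pr\bigl(b\oplus d^\intercal(x_0\oplus x_1)=1\bigr)=\tfrac12(1-\sin2\theta)$ for $\theta=-\pi/8$, both equal to $\tfrac12\bigl(1+\tfrac{1}{\sqrt2}\bigr)=\cos^2(\pi/8)$; and since $\alpha=1$ makes the right-hand sides of \eqref{eq:b-theta-plus} and \eqref{eq:b-theta-minus} equal to $\pm(-1)^{d^\intercal(x_0\oplus x_1)}$, these are exactly the events the verifier checks. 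Averaging over the two cases yields acceptance probability exactly $\cos^2(\pi/8)$.

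There is no deep obstacle here — the statement reduces to a careful but routine calculation. The only point requiring attention is the Hadamard-basis measurement on $\X$: both the induced distribution of $d$ and whether $\B$ stays entangled depend on whether $\beta_0=\beta_1$, and one must verify that in the $\alpha=1$ branch the residual one-qubit state is independent (up to global phase) of the ordering of $\{\beta_0,\beta_1\}$, so that a single verification rule suffices. Everything downstream is bookkeeping of $\cos^2(\pi/8)=\tfrac12(1+\tfrac{1}{\sqrt2})$.
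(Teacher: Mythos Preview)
Your proposal is correct and follows essentially the same route as the paper's proof: both compute the post-commitment state $\tfrac{1}{\sqrt2}\bigl(\ket{x_0,r_0^\intercal x_0}+\ket{x_1,r_1^\intercal x_1}\bigr)$, apply the Hadamard measurement on $\X$, and then verify that the CHSH test succeeds with probability $\cos^2(\pi/8)$. The paper condenses your explicit $\alpha\in\{0,1\}$ case analysis into a single ``by inspection'' step after writing the residual one-qubit state, so your version is simply a more detailed rendering of the same argument.
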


\begin{restatable}[Soundness~\cite{BGKPV23}]{lemma}{templatesoundness}\label{lem:template-soundness}
    Assume there exists a classical adversary $\padv(1^\sec)$ that makes the verifier from \Cref{protocol:template} accept with probability larger than $3/4 + \epsilon(\sec)$ for some function $\epsilon=1/\poly$. Then, there exists an adversary $\adv(1^\sec)$ that interacts with the verifier in the claw generation sub-protocol (the first phase) and, at the end of the interaction, outputs $(x_0,x_1)$ with probability at least $1/\poly(\sec)$, where $x_0,x_1\in\bin^\ell$ are the claw values that the verifier obtains. Further, the space complexity of $\adv$ is larger than the space complexity of $\padv$ by an additive factor of at most $O(\ell\log\ell)$.
\end{restatable}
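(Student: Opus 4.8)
The plan is to mirror the soundness argument of \Cref{lem:simple-soundness}, adapting it to the two-vector commitment $(r_0,r_1)$ and to the CHSH conditions \Cref{eq:b-theta-plus,eq:b-theta-minus}. First I would pass to a ``good'' set of post-claw-generation prover states: letting $W$ be $\padv$'s internal state at the end of the claw generation phase and $\accept$ the event that the verifier accepts, define $G=\{w:\Pr(\accept\mid W=w)\ge 3/4+\epsilon(\sec)/2\}$. The same averaging argument as in \Cref{lem:simple-soundness} gives $\Pr(W\in G)\ge\epsilon(\sec)/2$, so it suffices to build, for every $w\in G$, an oracle predicting a random linear functional of $(x_0\|x_1)$ with inverse-polynomial bias, and then invoke Goldreich--Levin.

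The key structural observation concerns the acceptance predicate. Writing $\alpha=(r_0\|r_1)^\intercal(x_0\|x_1)$: when $\alpha=0$, both \Cref{eq:b-theta-plus,eq:b-theta-minus} collapse to the requirement $b=r_0^\intercal x_0$, so the correct answer is \emph{independent} of the angle $\theta$; when $\alpha=1$, the two conditions demand opposite values of $b$ (namely $d^\intercal(x_0\oplus x_1)$ and its complement), so the correct answer \emph{flips} with $\theta$. This motivates a distinguisher $\badv(w)$ that, on a uniformly random challenge $(r_0,r_1)\gets\FF_2^{2\ell}$, forwards it to $\padv(w)$, receives $d$, then runs the CHSH phase of $\padv$ on both $\theta=\pi/8$ and $\theta=-\pi/8$ (rewinding in between, which is free for a classical prover), and outputs $0$ if the two answers agree and $1$ otherwise. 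Letting $\equal(w)$ be the event that the two simulated answers coincide, the same chain of inequalities as in \Cref{lem:simple-soundness}---using that, conditioned on $\alpha=0$, a mismatch makes at most one $\theta$-branch accept, while conditioned on $\alpha=1$, a match makes at most one $\theta$-branch accept---yields
\[
\Pr_{r_0,r_1}\!\left(\equal(w)\mid \alpha=0,\, w\in G\right)-\Pr_{r_0,r_1}\!\left(\equal(w)\mid \alpha=1,\, w\in G\right)\ \ge\ 2\epsilon(\sec).
\]
Since $x_0\neq x_1$ we have $x_0\|x_1\neq 0^{2\ell}$, hence $\alpha$ is a uniform bit over the choice of $(r_0,r_1)$; therefore $\badv(w)$ correctly guesses $(r_0\|r_1)^\intercal(x_0\|x_1)$ with probability at least $1/2+\epsilon(\sec)$ whenever $w\in G$.

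Finally I would apply \Cref{lmm:GL} (Goldreich--Levin) with the function $r\mapsto\badv(w)(r)$ and hidden string $x=(x_0\|x_1)$ of length $2\ell$: there is an extractor using $O(\ell\log\ell)$ memory, running in $\poly(\sec)$ time and making oracle calls to $\badv(w)$, that outputs $(x_0,x_1)$ with inverse-polynomial probability whenever $w\in G$. The adversary $\adv$ simulates $\padv$ through the claw generation interaction to obtain $w$ and then runs this extractor; since $w\in G$ with probability $\ge\epsilon(\sec)/2$ and the extractor succeeds with inverse-polynomial probability conditioned on that, $\adv$ outputs $(x_0,x_1)$ with inverse-polynomial probability. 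The only space beyond that of $\padv$ is the Goldreich--Levin extractor's $O(\ell\log\ell)$ bits plus the bookkeeping to rewind a classical machine, giving the claimed additive overhead.

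I expect the main obstacle to be carrying out the case analysis of the acceptance predicate cleanly---in particular confirming that the $\alpha=0$ branch is genuinely $\theta$-independent while the $\alpha=1$ branch flips sign with $\theta$---and then checking that the rewinding-based distinguisher only requires each individual Goldreich--Levin query point to be marginally uniform (rather than the whole tuple of queries jointly), so that the standard analysis underlying \Cref{lmm:GL} applies without modification.
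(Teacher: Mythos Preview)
Your proposal is correct and follows essentially the same approach as the paper: the paper's proof also fixes a good set $G$ by averaging, builds the same rewinding-based distinguisher $\badv(w)$, observes that $b_+\oplus b_-=\alpha$ (your case analysis is exactly the verbose form of the paper's one-line identity $(-1)^{b_+\oplus b_-}=(1-\alpha)^2-\alpha^2=(-1)^\alpha$), invokes the same chain of inequalities as in \Cref{lem:simple-soundness} to get bias $2\epsilon$, and finishes with Goldreich--Levin over the $2\ell$-bit string $x_0\|x_1$. Your concern about the Goldreich--Levin query distribution is not an issue---the extractor in \Cref{lmm:GL} only evaluates the predictor on (sums of) uniformly random points, so marginal uniformity is all that is used.
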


\else

In the following, we confirm that if the prover in \Cref{protocol:template} completes claw generation with the claw state from \Cref{eq:claw}, then he is able to convince the verifier into accepting with good probability. The proof follows the lines of the proof of Proposition 5.4 in~\cite{BGKPV23}.

\iflncs

\templatecompleteness*

\else

\begin{lemma}[Completeness~\cite{BGKPV23}]\label{lem:template-completeness}
    The verifier in \Cref{protocol:template} accepts with probability $\cos^2(\pi/8)\approx 0.853$ when interacting with an honest quantum prover.
\end{lemma}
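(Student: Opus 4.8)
The plan is to follow the honest prover's quantum state through the three phases of \Cref{protocol:template} and to compute the verifier's acceptance probability conditioned on each choice of the verifier's randomness $(r_0,r_1,\theta)$ and the prover's measurement outcome $d$; we will find that this conditional probability equals $\cos^2(\pi/8)$ in every case, which immediately gives the claim. We may invoke the (assumed) completeness of the claw generation sub-protocol: after the first phase the honest prover holds the state $\frac{1}{\sqrt 2}(\ket{x_0}+\ket{x_1})_\X$ from \Cref{eq:claw} together with a function $g$ satisfying $g(x_0)=0$ and $g(x_1)=1$, while the verifier knows $x_0,x_1$.

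\emph{Commitment phase.} Applying the isometry $\ket{x}_\X\mapsto\ket{x}_\X\ket{r_{g(x)}^\intercal x}_\B$ and substituting $g(x_0)=0$, $g(x_1)=1$ turns the state into $\frac{1}{\sqrt2}(\ket{x_0}_\X\ket{r_0^\intercal x_0}_\B+\ket{x_1}_\X\ket{r_1^\intercal x_1}_\B)$. The prover then measures $\X$ in the Hadamard basis; conditioning on the outcome $d\in\FF_2^\ell$ amounts to projecting $\X$ onto $2^{-\ell/2}\sum_x(-1)^{d^\intercal x}\ket{x}$, so the residual state of $\B$ is proportional to $(-1)^{d^\intercal x_0}\ket{r_0^\intercal x_0}+(-1)^{d^\intercal x_1}\ket{r_1^\intercal x_1}$. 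Set $\alpha=r_0^\intercal x_0\oplus r_1^\intercal x_1$ as in Step~\ref{step:last-test}. If $\alpha=0$, the pre-measurement state is the product state $\frac{1}{\sqrt2}(\ket{x_0}+\ket{x_1})_\X\otimes\ket{r_0^\intercal x_0}_\B$, so measuring $\X$ leaves $\B$ in the computational-basis state $\ket{r_0^\intercal x_0}$, independently of $d$. If $\alpha=1$, then using $d^\intercal x_0=d^\intercal x_1\oplus d^\intercal(x_0\oplus x_1)$ one sees that the residual $\B$-state equals, up to an irrelevant global phase, $\frac{1}{\sqrt2}(\ket0+(-1)^{d^\intercal(x_0\oplus x_1)}\ket1)$, a Hadamard-basis state. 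Note that the verifier can determine which case holds, as well as the bit $r_0^\intercal x_0$ (resp.\ $d^\intercal(x_0\oplus x_1)$), from $x_0,x_1,r_0,r_1,d$; and since $r_0,r_1$ are uniform and $x_0\neq x_1$, the bit $\alpha$ is uniform.

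\emph{CHSH test.} For each pairing of $\alpha\in\bin$ with $\theta\in\{\pi/8,-\pi/8\}$, the right-hand side of the acceptance condition \Cref{eq:b-theta-plus} (when $\theta=\pi/8$) or \Cref{eq:b-theta-minus} (when $\theta=-\pi/8$) evaluates to $\pm1$ given the two cases above, so the condition reads $b=b^*$ for a bit $b^*$ that the verifier can compute. When $\alpha=0$ we get $b^*=r_0^\intercal x_0$, and the prover obtains $b$ by measuring the computational state $\ket{r_0^\intercal x_0}$ in the rotated basis $\{\cos\theta\ket0+\sin\theta\ket1,\,-\sin\theta\ket0+\cos\theta\ket1\}$, so $\Pr[b=b^*]=\cos^2\theta=\cos^2(\pi/8)$. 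When $\alpha=1$ we get $b^*=d^\intercal(x_0\oplus x_1)$ for $\theta=\pi/8$ and $b^*=1\oplus d^\intercal(x_0\oplus x_1)$ for $\theta=-\pi/8$; a direct computation shows that measuring $\frac{1}{\sqrt2}(\ket0+(-1)^e\ket1)$ in the rotated basis returns $b^*$ with probability $\frac12(1+\sin(2|\theta|))=\frac12(1+\sin(\pi/4))=\cos^2(\pi/8)$, where the flipped sign in \Cref{eq:b-theta-minus} for $\theta=-\pi/8$ is precisely what keeps $b^*$ aligned with the more likely outcome. Hence in all four cases the verifier accepts with probability exactly $\cos^2(\pi/8)$, and averaging over $\alpha$ and $\theta$ yields the lemma.

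\emph{Main obstacle.} The argument is entirely linear-algebraic bookkeeping with no conceptual hurdle; the only place needing care is the parity and phase accounting in the case split, in particular recognizing the global phase $(-1)^{d^\intercal x_1}$ as irrelevant in the $\alpha=1$ branch, and checking that the predicate \Cref{eq:b-theta-minus} for $\theta=-\pi/8$ is calibrated so that $b^*$ matches the high-probability measurement outcome. One should also observe that the outcomes $d$ occurring with positive probability never affect the count: when $\alpha=0$ the predicate does not involve $d$ at all, and when $\alpha=1$ it involves $d$ only through $d^\intercal(x_0\oplus x_1)$, which the verifier also knows.
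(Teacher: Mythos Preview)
Your proof is correct and follows essentially the same approach as the paper's: compute the residual $\B$-state after the Hadamard measurement of $\X$, then case-split on $\alpha$ and $\theta$ to verify that the predicate singles out the high-probability outcome of the rotated-basis measurement. The only difference is that the paper leaves the final measurement calculation to ``by inspection'' while you spell out the four cases explicitly, which is arguably clearer.
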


\fi

\begin{proof}
    Assuming the completeness of claw generation, the prover's state at the end of the first phase of \Cref{protocol:template} is $(\ket{x_0}+\ket{x_1})/\sqrt{2}$. Let $r_0,r_1\in\mathbb{F}_2^{\ell}$ be the strings sent by the verifier in Step~\ref{step:r0r1} of the commitment phase. The prover maps its state to
    \begin{equation}
        \frac{1}{\sqrt{2}}\left(\ket{x_0}+\ket{x_1}\right)_\X \mapsto \frac{1}{\sqrt{2}}\left(\ket{x_0,  r_0^\intercal x_0}+\ket{x_1, r_1^\intercal x_1}\right)_{\X\B},
    \end{equation}
    then measures $\X$ in the Hadamard basis and gets $d\in \mathbb{F}_2^{\ell}$, which leaves him with
    $$
        \frac{(-1)^{d^\intercal x_0}}{\sqrt{2}}(\ket{r_0^\intercal x_0} + (-1)^{d^\intercal(x_0\oplus x_1)}\ket{r_1^\intercal x_1}).
    $$
    From here, it follows by inspection that, when the above state is measured in the basis $\{\ket{\pi/8},\ket{5\pi/8}\}$ (i.e. when $\Theta=\pi/8$), then $b$ satisfying \Cref{eq:b-theta-plus} is the most likely outcome with probability $\cos^2(\pi/8)$. Otherwise, when the state is measured in the basis $\{\ket{-\pi/8},\ket{3\pi/8}\}$, the most likely outcome is $b$ satisfying \Cref{eq:b-theta-minus}.
\end{proof}

The soundness of \Cref{protocol:template} holds whenever it is hard for any classical prover to compute both values in the claw from \Cref{eq:claw} simultaneously, i.e. $x_0$ and $x_1$.

\iflncs

\templatesoundness*

\else

\begin{lemma}[Soundness~\cite{BGKPV23}]\label{lem:template-soundness}
    Assume there exists a classical adversary $\padv(1^\sec)$ that makes the verifier from \Cref{protocol:template} accept with probability larger than $3/4 + \epsilon(\sec)$ for some function $\epsilon=1/\poly$. Then, there exists an adversary $\adv(1^\sec)$ that interacts with the verifier in the claw generation sub-protocol (the first phase) and, at the end of the interaction, outputs $(x_0,x_1)$ with probability at least $1/\poly(\sec)$, where $x_0,x_1\in\bin^\ell$ are the claw values that the verifier obtains. Further, the space complexity of $\adv$ is larger than the space complexity of $\padv$ by an additive factor of at most $O(\ell\log\ell)$.
\end{lemma}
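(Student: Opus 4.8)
The plan is to adapt the proof of \Cref{lem:simple-soundness} almost verbatim, with the one-bit secret $(s,-1)$ there replaced by the concatenated claw vector $z:=(x_0\|x_1)\in\FF_2^{2\ell}$ here. From the assumed winning prover $\padv$ I will build a one-bit predictor $\badv(w)$ that, given $\padv$'s post-claw-generation state $w$ and a uniform $(r_0\|r_1)\in\FF_2^{2\ell}$, guesses $z^\intercal(r_0\|r_1)=r_0^\intercal x_0\oplus r_1^\intercal x_1$ with inverse-polynomial advantage; the final adversary $\adv$ then simulates $\padv$ against the verifier through the claw-generation phase to obtain $w$, and runs the Goldreich--Levin extractor of \Cref{lmm:GL} with oracle $\badv(w)$ to recover $z=(x_0,x_1)$.

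First I would fix the ``good'' set $G=\{w:\Pr(\accept\mid W=w)\geq 3/4+\epsilon(\sec)/2\}$, where $W$ is $\padv$'s internal state right after claw generation, and observe by averaging (as in \Cref{lem:simple-soundness}) that $\Pr(W\in G)\geq\epsilon(\sec)/2$. The predictor $\badv(w)$ proceeds as follows: sample uniform $(r_0,r_1)$ and send it as the Step~\ref{step:r0r1} message of \Cref{protocol:template}; receive $d$ from $\padv$; run the remainder of $\padv$ on each of $\theta=\pi/8$ and $\theta=-\pi/8$, rewinding to the post-$d$ state in between; output $0$ if the two responses $b$ coincide and $1$ otherwise. (Note that, unlike in the simple protocol, the verifier's commitment message is already uniform here, so no re-randomization argument is needed.) Let $\equal(w)$ denote the event that $\padv$'s two $\theta$-responses coincide.

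The core step is to show that for every fixed $w\in G$, the predictor $\badv(w)$ agrees with $(r_0\|r_1)\mapsto z^\intercal(r_0\|r_1)$ with probability at least $1/2+\epsilon(\sec)$ over uniform $(r_0,r_1)$. Since $x_0\neq x_1$ we have $z\neq 0$, so $\alpha:=z^\intercal(r_0\|r_1)$ is uniform and we may condition on it. Substituting $\alpha=0$ into \Cref{eq:b-theta-plus,eq:b-theta-minus} shows that for \emph{both} values of $\theta$ the verifier accepts iff $b=r_0^\intercal x_0$, whence $\Pr(\accept\mid\alpha=0,W=w)\leq\tfrac12+\tfrac12\Pr(\equal(w)\mid\alpha=0,W=w)$; substituting $\alpha=1$ shows the accepting responses for the two $\theta$'s are complementary, whence $\Pr(\accept\mid\alpha=1,W=w)\leq 1-\tfrac12\Pr(\equal(w)\mid\alpha=1,W=w)$. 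Averaging over the uniform $\theta$ (equivalently, over $\alpha$) and using the defining inequality $\Pr(\accept\mid W=w)\geq\tfrac34+\tfrac{\epsilon(\sec)}{2}$, the same chain of inequalities as in \Cref{lem:simple-soundness} yields $\Pr(\equal(w)\mid\alpha=0,W=w)-\Pr(\equal(w)\mid\alpha=1,W=w)\geq 2\epsilon(\sec)$, i.e. prediction advantage $\epsilon(\sec)$ for $\badv(w)$.

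To conclude, $\adv$ runs $\padv$ against the verifier through claw generation, obtaining $w\in G$ with probability $\geq\epsilon(\sec)/2$; conditioned on this, \Cref{lmm:GL} applied to $\badv(w)$ (with secret $z\in\FF_2^{2\ell}$ and advantage $\epsilon(\sec)=1/\poly$) outputs $z=(x_0,x_1)$ with inverse-polynomial probability, using $O(2\ell\log 2\ell)=O(\ell\log\ell)$ additional memory; the rewinding inside $\badv$ is purely local post-interaction computation that re-uses $\padv$'s own workspace and therefore does not worsen the bound on the state kept between rounds. I expect the only delicate point to be the case analysis in the previous paragraph: faithfully translating the $(1-\alpha)$/$\alpha$-weighted accept predicates \Cref{eq:b-theta-plus,eq:b-theta-minus} into the two $\equal$-based inequalities, and noting that conditioning on $W=w$ --- which fixes neither the claw values nor the claw-generation transcript --- still leaves $\alpha$ uniform, so that the average over $\theta$ and $\alpha$ is legitimate.
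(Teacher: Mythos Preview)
Your proposal is correct and follows essentially the same approach as the paper's proof: define the good set $G$ by averaging, build the rewinding predictor $\badv(w)$ for $\alpha=(r_0\|r_1)^\intercal(x_0\|x_1)$, derive the $2\epsilon$ advantage via the $\equal$ inequalities, and finish with Goldreich--Levin (\Cref{lmm:GL}) costing $O(\ell\log\ell)$ extra memory. The paper compresses your case analysis into the single observation $(-1)^{b_+\oplus b_-}=(1-\alpha)^2-\alpha^2=(-1)^\alpha$ and then cites the chain of inequalities from \Cref{lem:simple-soundness}, but the content is identical. One small remark on your closing caveat: the reason $\alpha$ remains uniform given $W=w$ is not that $w$ ``fixes neither the claw values nor the claw-generation transcript'' (it may well fix both), but simply that $(r_0,r_1)$ is sampled fresh after $w$ and $(x_0\|x_1)\neq 0$; your argument is unaffected.
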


\fi

\begin{proof}
    The soundness analysis from~\cite{BGKPV23} is in two steps: \begin{enumerate*}[label=(\roman*)]
        \item \label{item:soundness-bgk} First, they observe that a classical prover that succeeds with probability better than $3/4$ can essentially predict the value $\alpha=(r_0||r_1)^\intercal(x_0||x_1)$ used for the verification in Step~\ref{step:last-test} in the CHSH test. \item \label{item:soundness-gl} Second, by a Goldreich-Levin argument similar to that from the proof of soundness of our first protocol (\Cref{lem:simple-soundness}), they show that a predictor of $(r_0||r_1)^\intercal(x_0||x_1)$ for random $r_0,r_1$ can be transformed into an algorithm that computes $x_0,x_1$.
    \end{enumerate*}

    Similarly to the proof of \Cref{lem:simple-soundness}, we begin by fixing an internal state of $\padv$ after the claw generation is complete, with which he succeeds to convince the verifier at the end with good probability. Specifically, letting $W$ denote $\padv$'s state after claw generation (the first phase) and $\accept$ denote the event that the verifier accepts, we again define a ``good'' set $G$ over the support of $W$ as
    \[
    G = \left\{ w : \Pr\left(\accept \mid W=w\right)\geq \frac{3}{4} + \frac{\varepsilon(\sec)}{2} \right\}.
    \]
    By an averaging argument, it holds that $\Pr\left(W \in G\right) \geq \varepsilon(\sec)/2$.

    Next, we proceed with showing~\ref{item:soundness-bgk}. Denote by $b_{+}$ the value of $b$ that satisfies \Cref{eq:b-theta-plus} and by $b_{-}$ the value that satisfies \Cref{eq:b-theta-minus}. Then, it holds that $(-1)^{b_+\oplus b_-} = (1-\alpha)^2-\alpha^2= 1-2\alpha = (-1)^\alpha$. Hence, predicting the parity $b_+\oplus b_-$ is equivalent to predicting $\alpha$. We describe an adversary $\badv$ that takes as input a $\padv$'s internal state after claw generation, and aims to predict $\alpha=(r_0||r_1)^\intercal (x_0||x_1)$ for uniformly random $r_0,r_1\gets\FF^{\ell}_2$. $\badv(w)$ performs the following:
    \begin{enumerate}
        \item On input uniformly sampled $r_0,r_1 \gets \FF_2^{\ell}$, $\badv$ sends $r_0,r_1$ to $\padv(w)$ as the message from Step~\ref{step:r0r1} of the commitment phase.
        \item $\padv$ returns some $d\in \mathbb{F}_2^{\ell}$.
        \item Simulate $\padv$ on both $\theta = \pi/8$ and $\theta = -\pi/8$, rewinding the algorithm in-between. If $\padv$ outputs the same answer for both cases return $0$, else return $1$.
    \end{enumerate}

    We argue that, given $w\in G$, $\badv(w)$ predicts $b_+\oplus b_-$, and therefore $\alpha$, with advantage at least $2\epsilon(\sec)$. This follows by the same derivation with which we proved the advantage of the same predictor in the proof of \Cref{lem:simple-soundness}, namely \Cref{eq:final} (there, the corresponding parity $b_+\oplus b_-$ is precisely the inner product $r^\intercal t$).
    
    The proof is then complete via \ref{item:soundness-gl}, again similarly to the proof of \Cref{lem:simple-soundness}. The predictor $\badv$ against $(r_0||r_1)^\intercal (x_0||x_1)$ implies, via \cref{lmm:GL} (Goldreich-Levin), an algorithm $\adv$ that computes $x_0,x_1$ as follows: $\adv$ simulates $\padv$ in the claw generation with the verifier and obtains an internal state $w$. Then, $\adv$ applies the Goldreich-Levin extractor with access to $\badv(w)$. By \cref{lmm:GL}, $\adv$ uses at most $O(\ell\log \ell)$ additional memory compared to $\padv$.
\end{proof}

\fi

\subsection{Interactive Hashing}

In a (classical) \emph{interactive hashing} protocol~\cite{NOVY}, two parties, say Alice (to be thought of as a challenger, or the verifier later in our context) and Bob (a challengee, or the prover), engage in an interaction the defines a 2-to-1 hash function $h$ over $[k]$ and a hash value $y$. An interactive hashing protocol allows Bob to control one of the pre-images of $y$ under $h$, specifically to make it so $h(v)=y$ for an input $v$ of his choice. However, the soundness of interactive hashing prevents Bob from controlling \emph{both} pre-images.

\begin{definition}[Interactive Hashing~\cite{NOVY}]\label{def:interactive-hashing}
	An \emph{interactive hashing protocol} is a classical protocol between a public-coin Alice (the verifier in our context) and a deterministic Bob (the prover). Alice has no input and Bob has an input $v\in[k]$. For any choice of Alice's public randomness $h$ and Bob's input $v$, we denote by $y=h(v)$ the deterministic answers computed by (an honest) Bob. The protocol satisfies the following:
	\begin{itemize}
		\item (2-to-1 Hash) Any random choice of $h$ is a 2-to-1 function. That is, for any $h$ and $y$, $|h^{-1}(y)|=2$.
		\item ($(\alpha,\beta)$-Security) For any fixed set $B\subseteq[k]$ of size at most $\beta k$, for any (possibly malicious, unbounded) Bob's strategy which, on input public coins $h$ results in an arbitrary $y$ such that $h^{-1}(y)=\{v_0,v_1\}$, it holds that: \[\Pr\left(\{v_0,v_1\}\subseteq B\right)\leq\alpha.\]
	\end{itemize}

	We say that an interactive hashing protocol is \emph{stateless} if Bob is not required to keep an intermediate state between the rounds of the protocol. Namely, if Bob's message at any round is a deterministic function of its input $v$ and Alice's public coins at that round. 
\end{definition}

The seminal work of Naor et al.~\cite{NOVY} devises a simple interactive hashing protocol, where the hash function is a random 2-to-1 linear map, sent by Alice one row at a time. Bob, in turn, answers by multiplying the rows with its input $v$ at every round, hence the protocol is stateless and has a linear number of rounds. Ding et al.~\cite{DHRS04} propose an improvement over the~\cite{NOVY} protocol where Alice sends random hash functions with certain $t$-wise independence properties, and requires only 4 rounds of interaction. The protocol from~\cite{DHRS04} is also stateless as per \Cref{def:interactive-hashing}: at every round, Bob merely applies the functions sent by Alice on his input and sends back the result.

\begin{theorem}[Constant-round Interactive Hashing~\cite{DHRS04}]\label{thm:constant-round-hashing}
	There exists a stateless 4-round interactive hashing protocol that is $(\alpha,\beta)$-secure for any $\beta>0$ with $\alpha=O(\beta\log k)$. The execution of the protocol and the computation of $h^{-1}$ can be done in time and space polylogarithmic in $k$.
\end{theorem}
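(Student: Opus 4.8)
The natural line of attack is to realize $h$ as a (nearly) random surjective $\FF_2$-linear map that Alice reveals \emph{incrementally}, in a constant number of batches of coordinates, so that the exact $2$-to-$1$ property falls out of linear algebra while soundness reduces to controlling how a fixed set shrinks from one batch to the next.

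Concretely, I would identify $[k]$ with $\FF_2^n$ for $n=\lceil\log k\rceil$ and have Alice send, over $r=O(1)$ rounds, surjective linear maps $L_j\colon\FF_2^n\to\FF_2^{\Delta_j}$ with $\sum_{j\le r}\Delta_j=n-1$, insisting (by re-sampling the rare degenerate draws) that the stacked map $h:=(L_1,\dots,L_r)$ has full rank $n-1$; then every $h$ in the support is exactly $2$-to-$1$, and $h^{-1}(h(v))=\{v,\,v\oplus z\}$ where $z$ spans the $1$-dimensional $\ker h$. Honest Bob replies in round $j$ with $L_j(v)$, which depends only on $v$ and $L_j$, so the protocol is stateless; and $h^{-1}(y)$ is obtained by solving a single $\poly(n)=\polylog(k)$-size linear system, in $\polylog(k)$ time and space. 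The one important design choice is that $L_1$ is drawn from a $\Theta(\log k)$-wise independent linear family (whose description is still $\polylog(k)$ bits), while the later $L_j$ need only be pairwise independent, e.g.\ Toeplitz, hence essentially free.

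For soundness, fix $B\subseteq[k]$ with $|B|\le\beta k$ and an arbitrary unbounded Bob. His answers $c_{\le j}$ pin the preimage to an affine subspace $S_j$ of dimension $n-\sum_{i\le j}\Delta_i$; writing $B_j:=B\cap S_j$, we have $|B_0|\le\beta 2^n$, the sequence $B_j$ is non-increasing, and the bad event ``$h^{-1}(y)\subseteq B$'' is exactly $|B_r|=2$. The plan is a round-by-round bound on the fullest fibre: conditioned on $(S_{j-1},B_{j-1})$, independence of $L_j$ makes each $|L_j^{-1}(c)\cap B_{j-1}|$ concentrate near its mean $\mu_j=|B_{j-1}|2^{-\Delta_j}$, and a union bound over the at most $2^n$ answers $c_j$ Bob might send shows that, except with small probability, the surviving set has size $\le\max\{2\mu_j,\,O(\log k)\}$ when $\mu_j$ is large (round $1$) and $\le 1$ when $\mu_j\ll 1$ (where a second moment over pairs suffices). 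Chaining these: round $1$ uses the $\Theta(\log k)$-wise independence of $L_1$ to collapse $|B_0|=\beta 2^n$ down to $|B_1|=\Theta(\log k)$; a further $O(1)$ rounds reduce $S_j$ to a space of constant dimension, the first of them costing $O(|B_1|\cdot|B_2|)=O(\beta\log k)$ by the pairwise fibre bound (since $|B_2|=\Theta(\beta)$ there) and the remaining ones only $O(\beta^2)$; a last round drops to dimension $1$ and again contributes $O(\beta^2)$. Summing in the negligible failure probabilities of the $O(1)$ concentration events gives $\alpha=O(\beta\log k)$.

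I expect the main obstacle, and the source of the $\log k$ loss, to be the first round: affording a union bound over the $\approx k$ possible adversarial answers needs the per-fibre tail below $2^{-\omega(\log k)}$, which via a $t$-wise tail forces both $t\gtrsim\log k$ \emph{and} the post-round mean $\mu_1\gtrsim\log k$, so a single round cannot bring $B$ below $\Theta(\log k)$ elements, and this $\Theta(\log k)$ propagates into $\alpha$. A second delicate point is honoring the \emph{exact} $2$-to-$1$ requirement of \Cref{def:interactive-hashing}: a generic $t$-wise independent hash would give the needed concentration but is not $2$-to-$1$, so one must either work with linear families throughout (and then argue the round-$1$ concentration despite the induced collision indicators being only $\approx n$-wise independent) or append a linear clean-up layer; doing all this while keeping $r$ an absolute constant — which is the improvement of \cite{DHRS04} over the $\Omega(\log k)$ rounds of \cite{NOVY} — is the crux of the construction.
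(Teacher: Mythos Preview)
The paper does not prove this theorem: \Cref{thm:constant-round-hashing} is stated as a citation of \cite{DHRS04} and is used as a black box throughout (the only accompanying text is the one-line remark that in \cite{DHRS04} ``Alice sends random hash functions with certain $t$-wise independence properties, and requires only 4 rounds of interaction''). So there is no ``paper's own proof'' to compare against.

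That said, your sketch is broadly faithful to the actual \cite{DHRS04} construction and analysis: the protocol there does proceed by revealing, in a constant number of batches, hash functions with graded independence (high $t$-wise independence in the first round to absorb a union bound over all of Bob's possible replies, weaker independence thereafter), and soundness is argued by tracking the size of $B\cap S_j$ round by round exactly as you outline. Your identification of the $\log k$ loss with the first-round tail requirement is also the correct intuition.

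Two places where your write-up is loose and would not survive as a proof: (i) the accounting ``$|B_2|=\Theta(\beta)$'' is dimensionally off ($\beta$ is a density, not a count), and more generally the chain of contributions ``$O(|B_1|\cdot|B_2|)=O(\beta\log k)$, then $O(\beta^2)$, \ldots'' does not add up to the stated bound without a more careful case split on the regime of $|B_{j-1}|$ versus $2^{\Delta_j}$; in \cite{DHRS04} this is handled by a specific schedule of $\Delta_j$'s and a separate second-moment argument once $|B_{j-1}|$ drops below a threshold. (ii) Your plan to enforce the exact $2$-to-$1$ property by ``re-sampling the rare degenerate draws'' of a stacked linear map is fine for correctness, but you then need to argue that conditioning on full rank does not spoil the $t$-wise independence used in the round-$1$ tail bound; this is where \cite{DHRS04} does some work, and your sketch waves at it (``work with linear families throughout \ldots or append a linear clean-up layer'') without committing to either fix.
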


Morimae and Yamakawa~\cite{MY23} show how to perform the interactive hashing protocol from~\cite{NOVY} in a setting where Bob's input is a quantum state rather than a classical input. They propose a coherent implementation where Bob's state at the end of the protocol is a superposition over all possible classical inputs consistent with the obtained transcript. In the following lemma, we generalize their implementation to any stateless interactive hashing protocol.

\begin{lemma}[Coherent Implementation of Interactive Hashing]\label{lem:coherent-hashing}
	For any stateless interactive hashing protocol over $k$-bit input, there exists a \emph{coherent implementation} of the protocol between a classical Alice and a \emph{quantum} Bob with $k$-qubit input register $V$, satisfying the following:\begin{itemize}
		\item (Correctness) If Bob's state before the protocol is: $$\rho_\mathsf{VW} = \sum_{v\in\bin^k} \ket{v}_\mathsf{V}\ket{\psi_v}_\mathsf{W},$$ and the transcript of the protocol upon its completion is $(h,y)$, then Bob's state at the end of the protocol is:
		$$\tilde{\rho}_{\mathsf{VW}} = \sum_{v:\ h(v)=y} \ket{v}_\mathsf{V}\ket{\psi_v}_\mathsf{W}.$$
		\item (Complexity) The time and space complexity of quantum Bob in the implementation is polynomial in the time and, resp., space complexity of Bob in the classical protocol. 
	\end{itemize} 
\end{lemma}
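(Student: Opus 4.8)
The plan is to mimic the Morimae--Yamakawa construction~\cite{MY23} for the~\cite{NOVY} protocol and observe that the only property it uses is statelessness. Recall that in a stateless interactive hashing protocol, Bob holds a classical input $v\in[k]$, and at each round $j$ his message $\beta_j$ is a fixed deterministic function $\beta_j = f_j(v, \tau_{<j})$ of his input and the transcript $\tau_{<j}$ of Alice's public coins so far; in particular, Bob keeps no private state between rounds, and the transcript $\tau$ determines a 2-to-1 function $h_\tau$ and a value $y_\tau$ with the property that honest Bob on input $v$ produces $y_\tau = h_\tau(v)$. The coherent Bob will simply run each round $j$ \emph{in superposition over $\mathsf{V}$}: he adjoins a fresh ancilla register $\mathsf{M}_j$, applies the isometry $\ket{v}_\mathsf{V}\ket{0}_{\mathsf{M}_j}\mapsto\ket{v}_\mathsf{V}\ket{f_j(v,\tau_{<j})}_{\mathsf{M}_j}$ (this is a well-defined classical circuit since $f_j$ is efficient and $\tau_{<j}$ is classical data in Bob's hand), then \emph{measures} $\mathsf{M}_j$ in the computational basis to obtain the message $\beta_j$ he sends to Alice, and then uncomputes $\mathsf{M}_j$ back to $\ket{0}$ using $\beta_j$ and $v$ (coherently, conditioned on $\mathsf{V}$).

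First I would set up notation and write out the state after round $j$. The crucial point is the effect of the measurement of $\mathsf{M}_j$: starting from $\sum_v \ket{v}_\mathsf{V}\ket{\psi_v}_\mathsf{W}\ket{f_j(v,\tau_{<j})}_{\mathsf{M}_j}$ and measuring $\mathsf{M}_j$ to obtain outcome $\beta_j$, the post-measurement state is $\sum_{v:\, f_j(v,\tau_{<j})=\beta_j}\ket{v}_\mathsf{V}\ket{\psi_v}_\mathsf{W}\ket{\beta_j}_{\mathsf{M}_j}$, i.e.\ the superposition over $\mathsf{V}$ is simply restricted to those $v$ consistent with all messages seen so far. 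Uncomputing $\mathsf{M}_j$ (which is possible because $\beta_j$ is now fixed classical data and the map $v\mapsto f_j(v,\tau_{<j})$ can be recomputed) returns the ancilla to $\ket{0}$ and leaves $\sum_{v:\, \text{consistent with }\tau_{\le j}}\ket{v}_\mathsf{V}\ket{\psi_v}_\mathsf{W}$. Iterating over all $O(\polylog k)$ (or $O(k)$) rounds, I conclude by induction on $j$ that after the last round Bob's state is $\sum_{v:\, \text{$v$ consistent with the full transcript}}\ket{v}_\mathsf{V}\ket{\psi_v}_\mathsf{W}$. Finally, by statelessness ``$v$ consistent with the full transcript'' is precisely ``$h_\tau(v)=y_\tau$'' — honest Bob's answers are $h_\tau(v)$, and the transcript fixes $y_\tau$, so the surviving $v$'s are exactly $h_\tau^{-1}(y_\tau)$. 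This gives the claimed output state $\tilde\rho_{\mathsf{VW}}=\sum_{v:\, h(v)=y}\ket{v}_\mathsf{V}\ket{\psi_v}_\mathsf{W}$.

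For the complexity claim, each round's isometry and uncomputation is a reversible version of the classical circuit computing $f_j$, blown up by at most a polynomial factor (standard reversible-computation overhead), acting on $\mathsf{V}$ of $k$ qubits plus $O(\log k)$ ancilla for a message; summing over the (poly-logarithmically or linearly many) rounds keeps time and space polynomial in Bob's classical time and, respectively, space. The main thing to be careful about — the only real ``obstacle'' — is that the distribution over transcripts $\tau$ (and hence over $(h,y)$) produced by coherent Bob must coincide with the classical protocol's distribution: this is exactly where the computational-basis measurements of the $\mathsf{M}_j$ registers matter, since they collapse Bob's message to a classical value with the right marginal (the amplitude-weighted distribution over consistent $v$'s induces the same message distribution as a classical honest Bob whose input is drawn from that same weighting, and in any case Alice's messages, being public-coin, are generated identically). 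I would state this as a remark rather than belabor it, since once the measurement picture above is in place it is immediate that coherent Bob is indistinguishable from classical honest Bob from Alice's side, so all security properties of the underlying protocol (e.g.\ $(\alpha,\beta)$-security) are inherited verbatim.
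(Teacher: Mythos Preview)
Your proposal is correct and follows essentially the same approach as the paper: compute Bob's round-$j$ message into a fresh ancilla via the reversible circuit for $f_j$, measure it, and observe that the surviving superposition over $\mathsf{V}$ is exactly the set of $v$'s consistent with the transcript so far. The only cosmetic differences are that the paper argues correctness via deferred measurement (writing the aggregate map as $\ket{v}\mapsto\ket{v}\ket{h(v)}$ and measuring all ancillas at once) rather than by round-by-round induction, and it omits your uncomputation step, which is harmless but unnecessary since the measured ancilla is already in a known product state.
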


\begin{proof}
Let $h_i$ and $y_i$ denote Alice's public coins and, respectively, Bob's message at round $i$ of the protocol. Let $y_i\gets f_i(v,h_i)$ denote Bob's computation at round $i$. The coherent interactive hashing over Bob's quantum input performs the protocol as follows. At round $i$, upon receiving Alice's random coins, Bob applies the following mapping to its state
	$$ \ket{v,z}_\mathsf{VY} \mapsto \ket{v,z\oplus f_i(v,h_i)}_\mathsf{VY},$$
where $\mathsf{Y}$ is an additional ancilla register (created new and initialized to $\ket{0}$ at every round). Bob measures the register $\mathsf{Y}$ to obtain $y_i$ and sends it to Alice.

To see why the implementation satisfies correctness, note that we may purify the execution of the protocol as follows: Apply the following for a random choice of $h=(h_1,\dots,h_r)$ $$\ket{v,z_1,\dots,z_r}\mapsto \ket{v,z_1\oplus f_1(v,h_1),\dots,z_r\oplus f_r(v,h_r)}_{\mathsf{VY}_1\dots \mathsf{Y}_r},$$ then measure the registers $\mathsf{Y}_1,\dots,\mathsf{Y}_r$ to obtain $y$ (recall $h$ is sampled independently in Bob's input). In particular, when applied over $\rho_{\mathsf{VW}}$ and $Y_i$ registers that are initiated to ancillas, the above mapping gives
\begin{align*}
	\sum_{v\in\bin^k}\ket{v}_\mathsf{V}\ket{0}_{\mathsf{Y}_1\dots \mathsf{Y}_r}\ket{\psi_v}_\mathsf{W}&\mapsto \sum_{v\in\bin^k}\ket{v}_\mathsf{V}\ket{f_1(v,h_1),\dots,f_r(v,h_r)}_{\mathsf{Y}_1\dots \mathsf{Y}_r}\ket{\psi_v}_\mathsf{W}\\
	&=\sum_{v\in\bin^k}\ket{v}_\mathsf{V}\ket{h(v)}_{\mathsf{Y}}\ket{\psi_v}_\mathsf{W}.
\end{align*}
Evidently, when we measure $\mathsf{Y}$ in the above state and obtain $y$, we obtain the claimed residual state $\tilde{\rho}_{\mathsf{VW}}$.
\end{proof}


\subsection{Claw Generation}

Equipped with the formulation of coherent interactive hashing, we are prepared to present our claw generation protocol. The following lemma immediately implies \Cref{thm:arbitrary-poq} through \Cref{lem:template-completeness,lem:template-soundness}. In addition, we prove that our protocol provides soundness against space-bounded quantum attackers, albeit with inverse-polynomial soundness error. We elaborate on the applications of quantum soundness in \Cref{sec:qhardness}.

\begin{lemma}[Claw Generation]\label{lem:claw-gen}
    Let $m:=m(\sec)$ be a bound on memory. Let $k:=k(\sec)$ be an additional security parameter such that $k/\log k>\sec(m+\Omega(\sec))$. There exists a protocol between a classical verifier and a quantum prover, where both parties take as input a security parameter $1^\sec$ and at the end of the protocol the verifier outputs a pair of values $x_0,x_1\in\bin^{\ell(\sec)}$, that satisfies the following:
    \begin{itemize}
        \item (Correctness) The residual state of the prover at the end of the protocol is
        $$
            \frac{1}{\sqrt{2}} (\ket{x_0}+\ket{x_1}),
        $$
        with probability 1, where $(x_0,x_1)$ is the output of the verifier.
        Additionally, the prover obtains a function $g:\bin^\ell\to \bin$ that satisfies $g(x_0)=0$ and $g(x_1)=1$.
        \item (Claw-Finding Hardness) Any non-uniform classical, or quantum, adversary with memory of at most $m$ bits (resp., qubits) that interacts with the verifier, outputs $(x_0,x_1)$ with probability at most $2^{-\Omega(\sec)}$, respectively $O(\sec^2\sqrt{m/k})$, at the end of the interaction, where $(x_0,x_1)$ is the output of the verifier.
        \item (Complexity) Both the prover and the verifier run in time $O(\sec\cdot k^3)$ in expectation (and with probability all but negligible in $\sec$) and use $O(\sec\cdot\polylog k)$ space.
    \end{itemize}
    In particular, setting $k=\Theta(\sec (m+\sec)\log m)$ gives a protocol with classical soundness (with negligible soundness error), runtime $O(\sec m^3\polylog m)$ and space $O(\sec\polylog m)$. Setting $k=\Theta(\sec^{2c} m)$ gives a protocol with quantum soundness up to soundness error $1/\sec^c$, polynomial runtime and space $\polylog m$.
\end{lemma}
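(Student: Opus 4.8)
The plan is to verify, in turn, the correctness, the claw-finding hardness (classical and quantum), and the complexity of the protocol sketched in the overview: the verifier streams $k$ fresh uniform bits $u_1,\dots,u_k$ one at a time and remembers the two bits at uniformly random positions $v_0^*,v_1^*\in[k]$; the prover records the stream coherently into $\frac{1}{\sqrt{k}}\sum_{v\in[k]}\ket{v}_\V\ket{u_v}_\U$; the parties run the $4$-round interactive hashing of \cref{thm:constant-round-hashing} coherently on register $\V$ via \cref{lem:coherent-hashing}; the verifier restarts with a fresh stream unless $h(v_0^*)=h(v_1^*)=y$; this ``$1$-bit'' subroutine is repeated $\sec$ times and the resulting claws are stitched as described. \emph{Correctness} is essentially mechanical: by \cref{lem:coherent-hashing}, coherent interactive hashing with transcript $(h,y)$ maps $\frac{1}{\sqrt{k}}\sum_v\ket{v,u_v}$ exactly to $\frac{1}{\sqrt{2}}(\ket{v_0,u_{v_0}}+\ket{v_1,u_{v_1}})$, where $h^{-1}(y)=\{v_0,v_1\}$ has size exactly $2$; when the check passes, $2$-to-$1$-ness forces $\{v_0,v_1\}=\{v_0^*,v_1^*\}$, so the verifier knows both indices and, having stored $u_{v_0^*},u_{v_1^*}$, both claw bits, and it can send the branch predicate $g$. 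Stitching is the isometry of the overview applied coherently, after which the joint state equals $\frac{1}{\sqrt{2}}(\ket{x_0}+\ket{x_1})$ with $x_0,x_1$ computable by the verifier from the revealed stitching bits; since every coherent operation here is exact, this holds with probability $1$. I also record that for the honest prover $y=h(v)$ for a uniform $v$, so $\{v_0,v_1\}$ is a uniform pair and each attempt passes with probability $\Theta(1/k^2)$, making the number of repetitions $O(k^2)$ in expectation and concentrated.

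For \emph{classical claw-finding hardness} I follow the incompressibility approach of~\cite{DQW23}. Fix the successful attempt of the $j$-th subroutine, let $W^j$ be the adversary's $\le m$-bit state right after that attempt's streaming, and let $E$ be the whole preceding transcript; since that stream is fresh and independent of $E$, subadditivity of Shannon entropy gives $\sum_{i\in[k]}(1-H(u^j_i\mid W^j,E))\le|W^j|\le m$, so the ``known'' set $B^j=\{i:H(u^j_i\mid W^j,E)<1-\delta\}$ has size $\le m/\delta$. Taking $\delta$ a small constant and $\beta=m/(\delta k)$, the $(\alpha,\beta)$-security of \cref{thm:constant-round-hashing} applied to the fixed set $B^j$ shows the selected pair $\{v_0^j,v_1^j\}$ avoids $B^j$ except with probability $\alpha=O(\beta\log k)=O(m\log k/(\delta k))$, which is $o(1)$ by the hypothesis $k/\log k>\sec(m+\Omega(\sec))$. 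When the pair avoids $B^j$, one of its bits, say $u^j_{v^j_1}$, has $H(u^j_{v^j_1}\mid W^j,E)\ge 1-\delta$; all information received after the $j$-th streaming (the hashing transcript, later streams, the low-information stitching predicates) is independent of this bit given $(W^j,E)$, so \cref{prop:independent-condition} preserves the bound and \cref{prop:entropy-to-min-entropy} converts it to min-entropy $\ge 1-O(\sqrt{\delta})$; hence the adversary outputs $u^j_{v^j_1}$ correctly with probability $\le\frac{1}{2}+O(\sqrt{\delta})$, even at the end of the full protocol. Since the stitched $x_0,x_1$ jointly determine all $2\sec$ selected bits and the $\sec$ streams are independent, a sequential conditioning argument in which each subroutine contributes a factor $\le\frac{1}{2}+O(\sqrt{\delta})+\alpha=\frac{1}{2}+o(1)$ yields $\Pr[\text{adversary outputs }(x_0,x_1)]\le(\frac{1}{2}+o(1))^\sec=2^{-\Omega(\sec)}$.

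For \emph{quantum claw-finding hardness} the incompressibility step has no direct analog for an $m$-qubit memory $W$; I instead invoke the Plug-In Lemma (\cref{lem:plug-in}), which says that, relative to $W$ and the stream prefix, the bit at a uniformly random position is $\sqrt{m/2k}$-close to a freshly resampled (hence uniform) bit, so it is unpredictable up to advantage $\sqrt{m/2k}$. Combining this with the interactive-hashing guarantee that at least one of the two selected positions is ``spread out'' relative to any prior the adversary could have committed to, and adding the trace-distance losses across the $\sec$ stitched subroutines, gives predictability advantage $\poly(\sec)\cdot\sqrt{m/k}$, and hence the claimed bound $O(\sec^2\sqrt{m/k})$; this more delicate argument is carried out in \cref{sec:qhardness}. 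I expect the main obstacle throughout to be this hardness argument's dependence on ``the successful attempt'': both $B^j$ and the $(\alpha,\beta)$-security bound are naturally stated for a \emph{fixed} attempt before its hashing phase, whereas the claw the verifier outputs comes from whichever attempt passes the check --- an event correlated with the adversary's hashing-phase behavior --- and chaining the per-subroutine bounds into a clean product despite the shared history requires care; the quantum version compounds this by forcing the entire entropy bookkeeping to be recast in terms of closeness to resampled distributions.

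Finally, \emph{complexity} is a direct count: each attempt streams $k$ bits and runs the $4$-round hashing plus an $h^{-1}$ computation in time and space $\polylog k$ (\cref{thm:constant-round-hashing}); there are $O(k^2)$ attempts in expectation per subroutine, $\sec$ subroutines, and stitching costs $\polylog k$ per subroutine, for total time $O(\sec k^3)$ in expectation and, by concentration of a sum of $\sec$ geometric variables, with probability $1-2^{-\Omega(\sec)}$; the verifier stores only $O(\log k)$ bits per subroutine and the prover only $O(\log k)$ qubits per active subroutine, for space $O(\sec\polylog k)$. The stated corollaries follow by substitution: $k=\Theta(\sec(m+\sec)\log m)$ makes $\alpha$ negligible in $\sec$, giving classical soundness with runtime $O(\sec m^3\polylog m)$ and space $O(\sec\polylog m)$, while $k=\Theta(\sec^{2c}m)$ makes $\poly(\sec)\sqrt{m/k}\le\sec^{-c}$, giving quantum soundness with polynomial runtime and $\polylog m$ space.
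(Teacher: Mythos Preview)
Your proposal is correct and follows essentially the same approach as the paper: the protocol, correctness, and complexity analyses match almost verbatim, and your classical hardness argument (incompressibility to bound a ``known'' set $B^j$, then $(\alpha,\beta)$-security of interactive hashing, then chaining across the $\sec$ independent streams via conditional min-entropy) is exactly the paper's strategy, as is deferring the quantum case to the Plug-In Lemma. The concern you flag about conditioning on the successful attempt and fixing $B^j$ before the hashing phase is precisely what the paper handles by conditioning on the adversary's post-stream state $W^j_{pre}=\omega$ (so that $B^j_\omega$ is a fixed set to which \cref{thm:constant-round-hashing} applies) and by observing that the verifier's acceptance check depends only on the fresh uniform pair $(v_0^*,v_1^*)$ and is therefore independent of the stream; your outline anticipates both points.
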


The rest of this subsection is dedicated to presenting the protocol underlying \Cref{lem:claw-gen} and analyzing its correctness and complexity. The proof of security (hardness of finding a claw) is deferred to \Cref{sec:chardness,sec:qhardness}.

\paragraph{The Protocol.} Our claw generation protocol is parametrized by a \emph{stream length} $k:=k(\sec)$ which is chosen such that $k/(\sec\log k) > m+\Omega(\sec)$. We assume for notational convenience that $k$ is always a power of $2$. The protocol makes use of a coherent interactive hashing sub-routine (see \cref{def:interactive-hashing,lem:coherent-hashing}), which in particular may be based on the protocol from \cref{thm:constant-round-hashing}.

The protocol proceeds as follows:

\begin{enumerate}
    \item (1-bit Claws Generation) For $j\in[\sec]$, the prover creates new $\log k$-qubit register $\V_j$ and 1-qubit register $\U_j$ and engages in the following interaction with the verifier:
    \begin{enumerate}[label*=\arabic*.]
        \item (Setup) The verifier samples two uniform indices $v_0, v_1 \gets [k]$ and the prover prepares the uniform superposition over $\V_j$ and the $0$ state in $\U_j$:
        \[
            \ket{\psi} = \sum_{v\in[k]} \ket{v}_{\V_j} \otimes \ket{0}_{\U_j}.
        \]
        \item\label{step:streaming} (Streaming) The verifier uniformly samples and transmits to the prover a stream of $k$ bits $U=(U_1,\dots,U_k)$, one bit at a time. The verifier stores $(v_0,U_{v_0})$ and $(v_1,U_{v_1})$ in his memory.
        Upon receiving the $i$-th bit $U_i$, the prover applies to its state the unitary defined by the following map:
        \[
        \ket{v}_{\V_j} \otimes \ket{u}_{\U_j} \mapsto \ket{v}_{\V_j}\ket{u \oplus f_{i,U_i}(v)}_{\U_j} \quad\text{where}\quad f_{i,U_i}(v) = \begin{cases}
            U_i & \text{if } v = i \\
            0 & \text{otherwise.}
        \end{cases}
        \]
        \item \label{step:interactive-hashing} (Interactive Hashing) The verifier and the prover engage in a run of the coherent interactive hashing protocol over $[k]$, where the input of the prover is the register $\V_j$. Let $(h,y)$ be the transcript of the interactive hashing upon its completion. If $h^{-1}(y)\neq\{v_0,v_1\}$, the verifier aborts and starts over with the Setup phase. Otherwise, the verifier stores:
        \begin{align*}
            v_0^j=v_0&& v_1^j=v_1&& z_0^j=U_{v_0}&& z_1^j=U_{v_1}.
        \end{align*}
    \end{enumerate}
    \item (Amplification) The prover and the verifier engage in the following protocol:
    \begin{enumerate}[label*=\arabic*.]
    	\item For any $j\in[\sec]$, the verifier arbitrarily picks a function $g^j:[k]\to\bin$ with short description satisfying $g^j(v^j_b)=b$ for $b\in\bin$,\footnote{For instance, $g^j$ can be a dictator function with description size $\log\log k + 1$.} The verifier sends $(g^1,\dots,g^\sec)$ to the prover.
   		\item \label{step:stitch-1} For $j=2\dots\sec$, the prover adds a new 1-qubit ancilla register $\B_j$ and applies the following mapping over $\V_1\V_j\B_j$:
   		$$
   		\ket{v^1}_{\V_1}\otimes \ket{v^j}_{\V_j}\otimes\ket{b}_{\B_j}\mapsto \ket{v^1}_{\V_1}\otimes \ket{v^j}_{\V_j}\otimes\ket{b\oplus g^1(v^1)\oplus g^j(v^j)}_{\B_j}.
   		$$
        \item \label{step:stitch-2} The prover measures $\B_2\dots\B_\sec$ to obtain an outcome $(b_2,\dots,b_\sec)\in\bin^{\sec-1}$, which he sends to the verifier. The verifier outputs:
        \begin{align*}
        	x_0=(v^1_0,z^1_0,v^2_{b_2},z^2_{b_2},\dots,v^\sec_{b_\sec},z^\sec_{b_\sec}) && x_1=(v^1_1,z^1_1,v^2_{1-b_2},z^2_{1-b_2},\dots,v^\sec_{1-b_\sec},z^\sec_{1-b_\sec}).
        \end{align*}
    \end{enumerate}
\end{enumerate}

\paragraph{Complexity.} The runtime of one attempt of a 1-bit claw generation is linear in $k$ for both parties (note interactive hashing is performed over the domain $[k]$ which is of polylogarithmic size). An attempt succeeds with probability $1/k^2$ since the verifier samples $v_0,v_1$ uniformly and independently of the other random choices in the protocol. Therefore, in expectation, the claw generation for a single bit takes $O(k^2)$ time. It also takes $O(k^2)$ time with probability all but negligible in $k$ via a standard Chernoff argument (\Cref{lmm:chernoff}). Since 1-bit claw generation is repeated $\sec$ times, this results in $O(\sec k^3)$ runtime in total, which easily dominates over the runtime of the amplification phase. As for space complexity, observe that, for each of the $\sec$ bits in the claw, both parties require memory at most polylogarithmic in $k$ since they both store a constant amount of variables over $[k]$ and apply interactive hashing over this domain.

\paragraph{Correctness.} 
We show that the output of the protocol results into a well-formed claw (see \Cref{lem:claw-gen}), with certainty, provided that the protocol terminates.

Consider a prover that follows the protocol honestly. At any round $j\in[\sec]$, once the streaming in Step~\ref{step:streaming} is complete, the prover obtains the superposition $\sum_{v\in[k]}\ket{v}\ket{U_v}$ in the registers $\V_j\U_j$, where $U$ is the communicated stream at that round. By \cref{lem:coherent-hashing}, the following coherent interactive hashing step, if succeeds, results in the state $\ket{v^j_0}\ket{z^j_0}+\ket{v^j_1}\ket{z^j_{1}}$ in these registers. Notice that the generation of these 1-bit claws is independent across all $j$ and, hence, the prover's state prior to the stitching stage may be written as:
\[ \left(\sum_{\beta\in\bin}\ket{v^1_\beta}_{\V_1}\ket{z^1_\beta}_{\U_1}\right)\otimes\dots\otimes\left(\sum_{\beta\in\bin}\ket{v^\sec_\beta}_{\V_\sec}\ket{z^\sec_\beta}_{\U_\sec}\right).\]
By inspection, since $g^1(v^1_{\beta_1})\oplus g^j(v^j_{\beta_j})=\beta_1\oplus \beta_j$ for any $j$, the mapping performed at Step~\ref{step:stitch-1} gives:
\[ \sum_{\beta_1\in\bin}\ket{v^1_{\beta_1}}_{\V_1}\ket{z^1_{\beta_1}}_{\U_1}\otimes
\bigotimes_{j=2\dots \sec}
\left(\sum_{\beta_j\in\bin}\ket{v^j_{\beta_j}}_{\V_j}\ket{z^j_{\beta_j}}_{\U_j}\ket{\beta_1\oplus\beta_j}_{\B_j}\right).\]
Thus, when applying the measurement at Step~\ref{step:stitch-2} and obtaining $(b_2,\dots,b_\sec)$, the prover is left with the following state:
$$ \sum_{\beta_1\in\bin}\ket{v^1_{\beta_1}}_{\V_1}\ket{z^1_{\beta_1}}_{\U_1}\otimes \ket{v^2_{\beta_1\oplus b_2}}_{\V_2}\ket{z^2_{\beta_1\oplus b_2}}_{\U_2} \otimes\dots\otimes\ket{v^\sec_{\beta_1\oplus b_\sec}}_{\V_\sec}\ket{z^\sec_{\beta_1\oplus b_\sec}}_{\U_\sec},$$
which is a superposition of the two values $x_0$ and $x_1$ that the verifier outputs.

Lastly, recall that we require the prover to obtain a function $g:\bin^\ell\to\bin$ that differentiates between the two values of the claw by satisfying $g(x_b)=b$. This function may be set to be $g^1$ (which the prover receives from the verifier), applied to the first $\log k$ bits of $x_b$, namely the part containing the index $v^1_b$.

\subsection{Classical Hardness of Finding a Claw} \label{sec:chardness}

We show a bound on the success probability of any \emph{classical} prover to output a claw. For $j\in[\sec]$, let $U^j$ be the random variable taking the value of the string $U$ streamed in Step~\ref{step:streaming}, before the first successful completion of the interactive hashing step, namely the first attempt where the verifier does not abort and start over in Step~\ref{step:interactive-hashing}. Let $W^j_{pre}$ denote the random variable consisting of the adversary's internal state, i.e., its $m$-bit memory, right after the streaming of $U^j$ has completed and just before the start of the interactive hashing. Let $W^j_{post}$ denote its internal state after the completion of the interactive hashing. For $b\in\bin$, we let the random variable $Z^j_b$ take the value of the bit $z^j_b$ that is stored by the verifier in Step~\ref{step:interactive-hashing} at round $j$ and let $Z_b=(Z^1_b,\dots,Z^\sec_b)$ (note these values are part of the claw $x_0,x_1$). 

By definition, the following lemma implies the classical hardness of finding a claw.

\begin{lemma}[Classical Hardness of Claw-Finding]\label{lem:claw-gen-classical-soundness}
    It holds that $H_\infty(Z_0,Z_1\mid W^\sec_{post})\geq \Omega(\sec).$
\end{lemma}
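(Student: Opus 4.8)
The plan is to show that the verifier's stored bits $Z_0,Z_1$ retain high min-entropy conditioned on the adversary's post-interactive-hashing memory $W^\sec_{post}$, by a hybrid argument that peels off the two obstacles separately: first the interactive hashing step only costs the adversary a polylogarithmic amount of information per round (since the transcript $(h,y)$ is polylog-size), so it suffices to lower bound $H_\infty(Z_0,Z_1\mid W^\sec_{pre})$ up to an additive $O(\sec\polylog k)$ loss using \Cref{lem:min-entropy-conditional}; second, across the $\sec$ rounds, the streams $U^1,\dots,U^\sec$ are sampled independently and the verifier's index choices $v^j_0,v^j_1$ are independent of the stream, so by \Cref{prop:independent-condition} it is enough to analyze a single round and argue $H_\infty(Z^j_0,Z^j_1\mid W^j_{pre})\geq \Omega(1)+\text{(polylog overhead)}$, then sum.

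For a single round, the key observation — following the incompressibility argument of Dodis et al.~\cite{DQW23} — is that the pair of indices $(v_0,v_1)$ is chosen \emph{by the verifier uniformly in $[k]^2$, independently of everything the adversary sees before the interactive hashing}, while $W^j_{pre}$ is an $m$-bit string computed from the stream $U^j$ (and earlier rounds). Define the ``bad set'' $B\subseteq[k]$ to be the set of stream positions $i$ such that, conditioned on $W^j_{pre}$, the bit $U^j_i$ has low min-entropy (say below some constant threshold). By \Cref{prop:min-entropy-block-entropy}, $H_\infty(U^j\mid W^j_{pre})\leq\sum_{i\in[k]}H(U^j_i\mid W^j_{pre})$; since $U^j$ is uniform over $k$ bits and its min-entropy drops by at most $m$ after conditioning on the $m$-bit state $W^j_{pre}$ (\Cref{lem:min-entropy-conditional}), we get $\sum_i H(U^j_i\mid W^j_{pre})\geq k-m$, forcing most positions $i$ to have $H(U^j_i\mid W^j_{pre})$ close to $1$. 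Combined with \Cref{prop:entropy-to-min-entropy}, which converts high binary Shannon entropy into a constant lower bound on min-entropy, this shows $|B|$ is small, i.e. $|B|\leq\beta k$ for a $\beta$ that we can afford. Now the interactive hashing's $(\alpha,\beta)$-security (\Cref{def:interactive-hashing}, \Cref{thm:constant-round-hashing}) guarantees that the two pre-images $\{v^j_0,v^j_1\}$ of the hash value land both inside $B$ only with probability $\alpha=O(\beta\log k)$; outside that bad event, at least one of $v^j_0,v^j_1$ lies outside $B$, so at least one of $Z^j_0=U^j_{v^j_0}$, $Z^j_1=U^j_{v^j_1}$ has constant min-entropy conditioned on $W^j_{pre}$. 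Using \Cref{lem:min-entropy-tail-bound} to handle the expectation-vs-worst-case subtlety in the definition of conditional min-entropy, and absorbing the failure probability into the error terms, this yields $H_\infty(Z^j_0,Z^j_1\mid W^j_{pre})=\Omega(1)$ per round (up to the polylog-$k$ overhead that we already budgeted, via the constraint $k/\log k>\sec(m+\Omega(\sec))$).

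Finally I would assemble the rounds: because round $j$'s stream and index choices are independent of the transcript of all other rounds given the adversary's carried-over state, \Cref{prop:independent-condition} (specifically $H_\infty(X,Y\mid Z)\geq H_\infty(X\mid Z)+H_\infty(Y\mid Z)$, applied inductively over $j$) lets me sum the per-round contributions to obtain $H_\infty(Z_0,Z_1\mid W^\sec_{pre})\geq\Omega(\sec)$, after which the one-time $O(\sec\polylog k)$ correction from stripping the interactive-hashing transcripts via \Cref{lem:min-entropy-conditional} still leaves $\Omega(\sec)$ provided $k$ is chosen as in the hypothesis $k/\log k>\sec(m+\Omega(\sec))$.

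The main obstacle I anticipate is bookkeeping the conditioning correctly across rounds: the adversary's memory is reused, so $W^j_{pre}$ depends on earlier streams and transcripts, and one must be careful that the verifier's fresh choice of $(v^j_0,v^j_1)$ and the fresh stream $U^j$ are genuinely independent of this history — this is what makes the incompressibility bound applicable round-by-round — and that the min-entropy losses (the $-m$ from the memory bound, and the polylog losses from transcripts) are not double-counted. A secondary technical point is the passage between Shannon entropy (natural for the summation bound on $U^j$) and min-entropy (needed for the final statement), which \Cref{prop:entropy-to-min-entropy} handles at the single-bit level but which must be invoked only on the ``good'' positions outside $B$.
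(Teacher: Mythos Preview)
Your approach matches the paper's at the level of ideas: an incompressibility argument to bound the bad set of stream positions, interactive-hashing soundness to keep at least one of $v^j_0,v^j_1$ outside it, and summing per-round min-entropies. The gap is exactly where you anticipate it, in the assembly step. You invoke \Cref{prop:independent-condition} to sum the per-round contributions, but that requires the $(Z^j_0,Z^j_1)$ to be mutually independent given the conditioning variable, and conditioning on a \emph{single} snapshot such as $W^\sec_{pre}$ does not give this: the adversary may spend its $m$ bits storing correlations across rounds (e.g.\ parities of bits drawn from different streams), so the $Z^j$ are in general dependent given $W^\sec_{pre}$. The paper fixes this by conditioning on the \emph{entire sequence} $W^1_{pre},W^1_{post},\ldots,W^\sec_{pre},W^\sec_{post}$; the Markov structure of the protocol then makes the $Z^j$ conditionally independent, and each summand reduces (again via \Cref{prop:independent-condition}) to $H_\infty(Z^j_0,Z^j_1\mid W^j_{pre}=\omega^j,W^j_{post})$. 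The paper lower-bounds this by a constant on ``good'' rounds and uses a Chernoff bound (the per-round success events are negatively correlated through the interactive-hashing soundness) to show at least $\sec/2$ rounds are good except with probability $2^{-\Omega(\sec)}$.

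Relatedly, your ``lose $O(\sec\,\polylog k)$ for the interactive-hashing transcripts'' detour is unnecessary and reflects the wrong mental model. The adversary's hashing responses are computed from $W^j_{pre}$ and fresh verifier coins $h^j$, so $U^j$ is independent of $W^j_{post}$ given $W^j_{pre}$; by \Cref{prop:independent-condition} conditioning additionally on $W^j_{post}$ costs \emph{nothing} for the entropy of $U^j$. This is precisely why the paper defines the bad set as $B^j_\omega=\{i:H(U^j_i\mid W^j_{pre}=\omega,W^j_{post})<0.99\}$ and obtains the per-round bound directly in the post-hashing conditioning without any polylog budgeting. (One further small correction: conditioned on a successful round, the stored indices $\{v^j_0,v^j_1\}$ equal $h^{-1}(y)$ where $y$ is chosen by the adversary, so they are not ``uniform and independent of everything the adversary sees''; the interactive-hashing soundness, which you do invoke, is the actual reason they avoid $B$.)
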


Let $t=k/(\sec\log k)-m=\Omega(\sec)$. Denote by $\Bad_j$ the event that $W^j_{pre}=\omega$ for $\omega\in\bin^m$ satisfying:
\begin{equation}\label{eq:bad}
    H_\infty(U^j\mid W^j_{pre}=\omega, W^j_{post})<k-(m+t).\footnote{Note this is the conditional entropy of the marginal distribution of $U_j$ when $W^j_{pre}=\omega$, conditioned on the random variable $W^j_{post}$.}
\end{equation}
Let us first bound the probability that the above bad event occurs for any round $j$.
\begin{claim}[Bad Events]\label{prop:bad-bound}
	$\Pr\left(\bigcup_j\Bad_j\right)<2^{-\Omega(\sec)}$.
\end{claim}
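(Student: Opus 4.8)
The plan is to bound $\Pr(\Bad_j)$ for a single round $j$ and then apply a union bound over the $\sec$ rounds. Fix $j$. The key observation is that $W^j_{pre}$ is the adversary's $m$-bit memory snapshot right after the stream $U^j$ has been transmitted, so intuitively $W^j_{pre}$ cannot reveal much more than $m$ bits of information about $U^j$; and $W^j_{post}$ is produced from $W^j_{pre}$ together with the (low-entropy, polylogarithmic-size) interactive hashing transcript, so conditioning additionally on $W^j_{post}$ only costs a further $\polylog k$ bits. More precisely, $U^j$ is a uniform string over $\bin^k$, sampled independently of everything that came before round $j$'s streaming, so $H_\infty(U^j) = k$; since $W^j_{pre}$ is over $\bin^m$, \Cref{lem:min-entropy-conditional} gives $H_\infty(U^j \mid W^j_{pre}) \geq k - m$. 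The subtlety is that we need the min-entropy of $U^j$ conditioned on $W^j_{pre}=\omega$ \emph{and} on $W^j_{post}$, and we want this to hold for all but a $2^{-\Omega(\sec)}$ fraction of $\omega$.

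First I would argue that $W^j_{post}$ adds only a small amount of conditioning. The interactive hashing transcript $(h,y)$ — together with the internal randomness the adversary uses during hashing — has description length polylogarithmic in $k$ (by \Cref{thm:constant-round-hashing} the transcript is of size $\polylog k$), and $W^j_{post}$ is itself an $m$-bit string that is a randomized function of $W^j_{pre}$ and this transcript. Hence, viewing $W^j_{post}$ as living over a set of size $2^m \cdot 2^{\polylog k}$ (or simply over $\bin^m$ and invoking \Cref{lem:min-entropy-conditional} a second time with the transcript), we get $H_\infty(U^j \mid W^j_{pre}, W^j_{post}) \geq k - m - \polylog k$. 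Now I would use the tail bound for min-entropy, \Cref{lem:min-entropy-tail-bound}: applied to the pair $(W^j_{pre}, W^j_{post})$ conditioning $U^j$, it says that for any $\varepsilon > 0$, with probability at least $1 - \varepsilon$ over the joint draw, $H_\infty(U^j \mid W^j_{pre}=\omega, W^j_{post}) \geq H_\infty(U^j \mid W^j_{pre}, W^j_{post}) - \log(1/\varepsilon) \geq k - m - \polylog k - \log(1/\varepsilon)$.

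Setting $\log(1/\varepsilon) = \Theta(\sec)$, i.e. $\varepsilon = 2^{-\Theta(\sec)}$, the good event gives $H_\infty(U^j \mid W^j_{pre}=\omega, W^j_{post}) \geq k - m - \polylog k - O(\sec)$. I then need to check that this lower bound is at least $k - (m+t)$, i.e. that $t = k/(\sec\log k) - m \geq \polylog k + O(\sec) + (\text{the gap between } m \text{ and our bound})$ — here one uses the hypothesis $k/\log k > \sec(m + \Omega(\sec))$ from \Cref{lem:claw-gen}, which ensures $t = \Omega(\sec)$ is comfortably larger than the $\polylog k + O(\sec)$ slack (this is where the precise choice of constants in the $\Omega(\sec)$ hidden in $t$ must be made consistent with the $\polylog k$ loss; I would absorb the polylog term into the gap using that $k/\log k$ dominates $\sec \cdot m$, hence $\polylog k = o(t)$). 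Therefore $\Pr(\Bad_j) \leq \varepsilon = 2^{-\Omega(\sec)}$. A union bound over $j \in [\sec]$ gives $\Pr(\bigcup_j \Bad_j) \leq \sec \cdot 2^{-\Omega(\sec)} = 2^{-\Omega(\sec)}$, as claimed.

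The main obstacle I anticipate is the bookkeeping around $W^j_{post}$: one must be careful that $W^j_{post}$ really is obtainable from $W^j_{pre}$ plus a short transcript (so that conditioning on it is cheap), rather than depending on $U^j$ in some richer way. In a bounded-storage adversary this is automatic — between the end of streaming and the end of interactive hashing, the only new information the adversary sees is the hashing transcript, and its state is $m$ bits throughout — but I would want to state this explicitly, perhaps noting that even under the "unbounded short-term memory, $m$-bit state per round" model mentioned in \Cref{sec:exp}, $W^j_{post}$ is still an $m$-bit string determined (randomly) by $W^j_{pre}$ and the transcript. The second place requiring care is ensuring the independence of $U^j$ from $W^{j}_{pre}$'s "prior" content: $U^j$ is freshly sampled in round $j$'s streaming phase, independent of all earlier rounds, so $H_\infty(U^j) = k$ unconditionally and the only conditioning that reduces it is $W^j_{pre}$ and $W^j_{post}$ — this is what licenses the clean application of \Cref{lem:min-entropy-conditional}.
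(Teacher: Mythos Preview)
Your approach is essentially the same as the paper's---union bound over $j$, use \Cref{lem:min-entropy-conditional} to get $H_\infty(U^j\mid W^j_{pre})\geq k-m$, then \Cref{lem:min-entropy-tail-bound} to pass to a fixed $\omega$---and is correct in outline. The one place where the paper is cleaner is the handling of $W^j_{post}$: rather than paying a $\polylog k$ entropy loss for the hashing transcript, the paper simply observes that $U^j$ and $W^j_{post}$ are \emph{conditionally independent} given $W^j_{pre}$ (this is exactly \Cref{prop:independent-condition}), so $H_\infty(U^j\mid W^j_{pre}=\omega, W^j_{post})\geq H_\infty(U^j\mid W^j_{pre}=\omega)$ with no loss at all. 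You actually identify this structural fact in your final paragraph (``$W^j_{post}$ is determined by $W^j_{pre}$ and the transcript'', with transcript and randomness independent of $U^j$), but then don't exploit it, instead routing through a size-of-conditioning bound.

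This matters because your $\polylog k$ slack is not always absorbable: in the exponential-gap regime one may take $m=2^{\Theta(\sec)}$, whence $\log k = \Theta(\sec)$ and $\polylog k$ can be $\poly(\sec)$, while the guarantee on $t$ is only $t=\Omega(\sec)$. Your claim that ``$\polylog k = o(t)$'' is therefore not justified in general. The fix is exactly the conditional-independence step you already anticipated: once you use \Cref{prop:independent-condition}, the tail bound yields $\Pr(\Bad_j)<2^{-t}$ directly, and the union bound finishes.
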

\begin{proof}
	By \Cref{prop:independent-condition}, since $U^j$ is independent in $W^j_{post}$ given $W^j_{pre}$, it holds that $H_\infty(U^j\mid W^j_{pre}=\omega, W^j_{post})\geq H_\infty(U^j\mid W^j_{pre}=\omega)$. Thus, for any $j$, we have
	$$\Pr\left(\Bad_j\right) < 2^{k-(m+t) - H_\infty(U^j\mid W^j_{pre})}<  2^{k-t - H_\infty(U^j)}=2^{-t},$$
	where the first inequality is by \cref{lem:min-entropy-tail-bound} and the second by \cref{lem:min-entropy-conditional}. A union bound is then sufficient to derive the inequality.
\end{proof}
In the next proposition, we invoke the soundness of the interactive hashing to argue the unpredictability of \emph{both} $Z^j_0$ and $Z^j_1$, for any round $j$, even given the adversary's bounded storage.

\begin{claim}[Entropy Bound]\label{prop:high-bit-entropy}
    Assume $\Bad_j$ does not occur, and fix any $W^j_{pre}=\omega$ in the support.
    Then, it holds that:
    \[
    \Pr\left(H_\infty(Z^j_0,Z^j_1\mid \omega,W^j_{post})> 0.8\right) =1-O(1/\sec) 
    \]
    where the probability is taken over the randomness of the interactive hashing at round $j$.
\end{claim}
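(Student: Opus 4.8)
The plan is to combine a bounded‑storage incompressibility argument with the soundness of the interactive hashing. Roughly: since the adversary compresses the fresh stream $U^j$ into only $m$ bits, all but a small fraction of the coordinates $i\in[k]$ must carry high conditional entropy in its view; I will declare the remaining coordinates ``bad'', argue via the soundness of interactive hashing that the two selected indices are almost never both bad, and conclude that whenever one of them is good the corresponding bit --- hence the pair $(Z^j_0,Z^j_1)$ --- has min‑entropy above $0.8$.

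First I would fix $W^j_{pre}=\omega$ with $\Bad_j$ not occurring. Since $U^j$ is independent of $W^j_{post}$ given $W^j_{pre}$ (the same fact used in the proof of \Cref{prop:bad-bound}), conditioning on $W^j_{post}$ does not change the distribution of $U^j$, so the negation of $\Bad_j$ gives $H_\infty(U^j\mid\omega)=H_\infty(U^j\mid\omega,W^j_{post})\geq k-(m+t)=k-\frac{k}{\sec\log k}$. Using $H_\infty\leq H$ together with subadditivity of Shannon entropy (i.e.\ \Cref{prop:min-entropy-block-entropy}), this yields $\sum_{i\in[k]}\bigl(1-H(U^j_i\mid\omega)\bigr)\leq m+t=\frac{k}{\sec\log k}$. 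Now I fix once and for all a constant $h^*\in(0,1)$ large enough that $1-\log\bigl(1+\sqrt{1-(h^*)^{\ln 4}}\bigr)>0.8$ --- by \Cref{prop:entropy-to-min-entropy} any $h^*\geq 0.99$ works --- and set $B:=\{\,i\in[k]:H(U^j_i\mid\omega)<h^*\,\}$. The point is that $B$ depends only on $\omega$ and the distribution of $U^j$, both of which are determined before the interactive hashing begins; and since every $i\in B$ contributes more than $1-h^*$ to the sum above, $|B|<\frac{m+t}{1-h^*}=\frac{k}{(1-h^*)\sec\log k}=:\beta k$.

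Next I would invoke \Cref{thm:constant-round-hashing}: the adversary's conduct during the interactive hashing step of round $j$ is an (unbounded) Bob strategy, $B$ is a fixed set of size at most $\beta k$, and the protocol is $(\alpha,\beta)$‑secure with $\alpha=O(\beta\log k)=O\bigl(\frac{1}{(1-h^*)\sec}\bigr)=O(1/\sec)$, so the two pre‑images $h^{-1}(y)=\{v^j_0,v^j_1\}$ satisfy $\Pr\bigl(\{v^j_0,v^j_1\}\subseteq B\bigr)\leq\alpha=O(1/\sec)$ over the interactive hashing randomness. On the complementary event, which has probability $1-O(1/\sec)$, at least one of the two selected indices lies outside $B$; say $v^j_0\notin B$, so $H(U^j_{v^j_0}\mid\omega)\geq h^*$ and hence $H_\infty(U^j_{v^j_0}\mid\omega)>0.8$ by \Cref{prop:entropy-to-min-entropy} (the bound there is increasing in the entropy). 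Since $v^j_0,v^j_1$ are fixed by the transcript and $U^j\perp W^j_{post}\mid W^j_{pre}$, conditioning on $W^j_{post}$ again changes nothing, and adding a second coordinate cannot decrease min‑entropy, so $H_\infty(Z^j_0,Z^j_1\mid\omega,W^j_{post})=H_\infty(U^j_{v^j_0},U^j_{v^j_1}\mid\omega)\geq H_\infty(U^j_{v^j_0}\mid\omega)>0.8$, which is exactly the claimed event.

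The main obstacle I anticipate is the bookkeeping around the set $B$: for \Cref{def:interactive-hashing} to apply, $B$ must be pinned down before the hashing and in particular independent of the transcript, which is why I define it through $H(U^j_i\mid\omega)$ rather than through the realized post‑hashing memory, and why the independence $U^j\perp W^j_{post}\mid W^j_{pre}$ must be used both to pass from $\Bad_j$ to the bound on $H_\infty(U^j\mid\omega)$ and to dispose of the $W^j_{post}$‑conditioning in the conclusion. The other point worth flagging is quantitative rather than conceptual: the estimates close only because \Cref{prop:entropy-to-min-entropy} already extracts min‑entropy above $0.8$ from a \emph{constant} Shannon‑entropy threshold $h^*$ bounded away from $1$, keeping $1-h^*$ a constant; a threshold approaching $1$ would inflate $|B|$ --- and therefore the interactive‑hashing failure probability --- beyond $O(1/\sec)$.
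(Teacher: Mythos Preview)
Your argument is correct and is essentially the paper's proof: both bound $\sum_i(1-H(U^j_i\mid\omega,\cdot))\le m+t$, define the ``bad'' index set via a Shannon-entropy threshold ($0.99$ in the paper, your $h^*$), use an averaging/Markov step to get $|B|\le (m+t)/(1-h^*)$, invoke interactive-hashing soundness to bound $\Pr(\{v^j_0,v^j_1\}\subseteq B)=O(1/\sec)$, and then pass from Shannon to min-entropy via \Cref{prop:entropy-to-min-entropy}. The only cosmetic difference is that you invoke $U^j\perp W^j_{post}\mid W^j_{pre}$ up front to drop the $W^j_{post}$-conditioning, whereas the paper carries $W^j_{post}$ through the definition of $B^j_\omega$; by that same independence the two formulations coincide.
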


\begin{proof}
Conditioned on $\Bad_j$ not occurring, fix any $W^j=\omega$ 
in the support. We have:
\begin{align*}
    \Expct_{i\gets[k]}\left(H(U^j_i\mid W^j_{pre}=\omega,W^j_{post})\right)&=\frac{1}{k}\sum_{i\in[k]} H(U^j_i\mid \omega,W^j_{post})\\
    & \geq \frac{1}{k} H_\infty(U^j \mid \omega,W^j_{post})\\
    &\geq \frac{1}{k}(k-(m+t))=1-(m+t)/k,
\end{align*}
where the first inequality follows by \cref{prop:min-entropy-block-entropy} and the second one follows be the definition of the $\Bad_j$ event in \cref{eq:bad}. 
Define the set:
$$
B^j_{\omega}:=\left\{i\in[k]\mid H(U^j_i\mid W^j_{pre}=\omega,W^j_{post})< 0.99\right\}.
$$
By an averaging argument, it must hold that $\Pr_{i\gets[k]}(i\in B^j_{\omega})\leq 100(m+t)/ k$, and consequently that $|B^j_{\omega}|\leq 100(m+t)$. Thus, by \cref{thm:constant-round-hashing}, we have that:
\[\Pr\left(\{v^j_0,v^j_1\}\subseteq B^j_{\omega}\right)= O((m+t)\log k/ k)=O(1/\sec),\] where probability is over the randomness of the interactive hashing in the choice of $\{v^j_0,v^j_1\}$ and it holds for any (possibly unbounded) prover. By definition, we have:
\[
\Pr\left(\exists b\in\bin:\ H(Z^j_b\mid W^j_{pre}=\omega,W^j_{post})=0.99\right)>1-O(1/\sec).
\]
From \cref{prop:entropy-to-min-entropy}, this implies:
\[\Pr\left(\exists b\in\bin:\ H_\infty(Z^j_b\mid W^j_{pre}=\omega,W^j_{post})>0.8\right)=1-O(1/\sec),\]
and, in particular, the proposition follows.
\end{proof}

Fix any sequence $\omega=(\omega^1,\dots,\omega^\sec)$ of values taken by $W^1_{pre},\dots,W^\sec_{pre}$ throughout the protocol. Define the random variable:
\[
J_{\omega} := \left\{j : H_\infty(Z^j_0,Z^j_1\mid W^j_{pre}=\omega^j,W^j_{post})<0.8 \right\} \subseteq[\sec].
\]

\begin{claim}\label{prop:J-bound}
    Assume $\Bad_j$ does not occur for any $j$. Then, it holds that $\Pr(|J_\omega|\geq\sec/2)<2^{-\Omega(\sec^2)}$ for any $\omega$ in the support, where probability is over the randomness of the interactive hashing invocations.
\end{claim}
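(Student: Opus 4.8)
The plan is to dominate $|J_\omega|$ by a sum of near-independent indicators and finish with a multiplicative Chernoff bound. Fix $\omega=(\omega^1,\dots,\omega^\sec)$ in the support and assume no $\Bad_j$ occurs; once $\omega$ is fixed each $\Bad_j$ is just the deterministic statement that $\omega^j$ satisfies \Cref{eq:bad}, so this assumption merely restricts attention to those $\omega$ all of whose coordinates are ``good''. For $j\in[\sec]$ let $Y_j:=\mathbf{1}[\{v^j_0,v^j_1\}\subseteq B^j_\omega]$, with $B^j_\omega$ the set from the proof of \Cref{prop:high-bit-entropy} (of size $O(m+t)=O(k/(\sec\log k))$ and depending only on $\omega^j$). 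The argument in that proof is precisely that $j\in J_\omega$ forces $\{v^j_0,v^j_1\}\subseteq B^j_\omega$: if some $v^j_b\notin B^j_\omega$ then $H(Z^j_b\mid\omega^j,W^j_{post})\ge 0.99$, hence $H_\infty(Z^j_0,Z^j_1\mid\omega^j,W^j_{post})>0.8$ by \Cref{prop:entropy-to-min-entropy}, so $j\notin J_\omega$. Thus $|J_\omega|\le\sum_{j}Y_j$, and by \Cref{thm:constant-round-hashing} applied with the fixed set $B^j_\omega$ each $Y_j$ equals $1$ with probability at most $\alpha=O(|B^j_\omega|\log k/k)=O(1/\sec)$, which we write as $C/\sec$ for an absolute constant $C$.

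The core step is to promote this per-round estimate to a joint one, showing that the $Y_j$ are negatively correlated, i.e.\ $\Pr(\prod_{j\in S}Y_j=1)\le(C/\sec)^{|S|}$ for every $S\subseteq[\sec]$. The reason this should hold is that interactive hashing soundness (\Cref{def:interactive-hashing,thm:constant-round-hashing}) is worst case over the unbounded, stateful prover: the event $\{v^j_0,v^j_1\}\subseteq B^j_\omega$ only speaks about the two preimages output by the round-$j$ hashing, whose public coins the verifier samples fresh and independently of rounds $1,\dots,j-1$, while $B^j_\omega$ is determined by $\omega^j$ and hence already fixed before those coins are drawn. So if one exposes the rounds in increasing order and conditions, at round $j$, only on the past (the earlier states $\omega^1,\dots,\omega^j$, the earlier hashing transcripts, and all prior outcomes), the round-$j$ hashing still runs with uniform public coins against \emph{some} prover strategy, and \Cref{thm:constant-round-hashing} yields $\Pr(Y_j=1\mid\text{past})\le C/\sec$; iterating this over $j\in S$ gives the product bound (equivalently, $(Y_1,\dots,Y_\sec)$ is stochastically dominated by i.i.d.\ $\mathrm{Bernoulli}(C/\sec)$ variables).

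Given negative correlation, set $\tilde Y:=\Expct(\sum_j Y_j)\le\sec\cdot C/\sec=C$ (if $\tilde Y=0$ the statement is vacuous) and invoke the Chernoff bound of \Cref{lmm:chernoff}, which by \cite{PS97} applies to negatively correlated variables, with $\delta$ chosen so that $(1+\delta)\tilde Y=\sec/2$, i.e.\ $\delta=\sec/(2\tilde Y)-1=\Theta(\sec)$. Then
\[
\Pr\left(|J_\omega|\ge\sec/2\right)\le\Pr\left(\sum_j Y_j\ge\sec/2\right)\le e^{-\tilde Y\delta^2/3}=e^{-\Theta(\sec^2)}=2^{-\Omega(\sec^2)},
\]
using $\tilde Y\delta^2\ge\sec^2/(4\tilde Y)\ge\sec^2/(4C)$, which is the claim.

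The step I expect to be most delicate is the negative-correlation claim: conditioning on the \emph{entire} vector $\omega=(\omega^1,\dots,\omega^\sec)$ is entangled with the internal randomness of all $\sec$ rounds, and in particular conditioning on $W^{j+1}_{pre}$ can leak information about the round-$j$ hashing coins, so one must be careful to apply interactive hashing soundness at round $j$ only under a conditioning on the past, and then argue that the remaining coordinates of $\omega$ can be folded in without violating the per-round bound. Matching this bookkeeping to the exact meaning of the conditional min-entropies that appear in the definition of $J_\omega$ and in \Cref{prop:high-bit-entropy} is the part that needs care; everything else reduces to the Chernoff estimate above.
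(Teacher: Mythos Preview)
Your proposal is essentially the paper's own argument, just spelled out in more detail: the paper also bounds $\Expct(|J_\omega|)=O(1)$ via \Cref{prop:high-bit-entropy}, asserts that the per-round events are negatively correlated because interactive-hashing soundness applies round by round regardless of what happens elsewhere, and then invokes the Chernoff bound for negatively correlated variables to get $e^{-\Omega(\sec^2)}$. Your explicit domination by the indicators $Y_j$ and the sequential-exposure justification for negative correlation are exactly the content behind the paper's one-line ``anti-correlated predicates'' remark; the concern you flag about conditioning on the full tuple $\omega$ (which can leak information about earlier hashing coins through later $\omega^{j'}$) is real and is glossed over in the paper as well, but for the way the claim is used downstream the sequential bound you describe suffices.
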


\begin{proof}
Assuming that $\omega$ are such that $\Bad_j$ does not hold for any $j$, we have that
\[
\Expct \left(|J_{\omega}|\right) = \sec \cdot O(1/\sec)=O(1)
\]
by \cref{prop:high-bit-entropy} (expectation is taken over the interactive hash invocations).
Further, notice that upon fixing $\omega$, the membership of the different $j$'s in $J_{\omega}$ constitute anti-correlated predicates, since the probability of each is bounded by the soundness of the interactive hash, independently of what happens in other rounds. Hence, we may invoke Chernoffs's inequality (\Cref{lmm:chernoff}) to obtain the following tail bound on the size of $J_{\omega}$:
\[
    \Pr\left(|J_{\omega}|\geq\sec/2\right)< e^{-\Omega(\sec^2)}=e^{-\Omega(\sec^2)}.
\]
\end{proof}


We consider a run of the protocol conditioned on $\Bad_j$ not occurring for any $j$ and $|J_\omega|< \sec/2$, where $\omega=(\omega^1,\dots,\omega^\sec)$ are the values taken by $W^1_{pre},\dots,W^\sec_{pre}$. By \Cref{prop:bad-bound,prop:J-bound}, if an adversary succeeds in the original experiment with certain probability, he must succeed in the conditional experiment with probability smaller by at most $2^{-\Omega(\sec)}$. It remains, then, to bound the success probability under the above two conditions.

We denote the distributions in the conditional experiment by $\tilde{Z}$, $\tilde{W}_{pre}$ and $\tilde{W}_{post}$. Note that $\tilde{Z}^j_0,\tilde{Z}^j_1$ are independent in all other values in $\tilde{Z}_0$ and $\tilde{Z}_1$ given the snapshots of adversary's state before and after they are determined, namely $\tilde{W}^j_{pre}$ and $\tilde{W}^j_{post}$. Hence, by \Cref{prop:independent-condition}, 
\begin{align*}
	H_\infty(\tilde{Z}_0,\tilde{Z}_1\mid \tilde{W}^\sec_{post}) &\geq H_\infty(\tilde{Z}_0,\tilde{Z}_1\mid \tilde{W}^\sec_{post},\dots,\tilde{W}^1_{post},\tilde{W}^\sec_{pre},\dots,\tilde{W}^1_{pre})\\
    & \geq \sum_{j=1}^\sec H_\infty(\tilde{Z}^j_0,\tilde{Z}^j_1\mid \tilde{W}^\sec_{post},\dots,\tilde{W}^1_{post},\tilde{W}^\sec_{pre},\dots,\tilde{W}^1_{pre})\\
    & \geq \sum_{j=1}^\sec H_\infty(\tilde{Z}^j_0,\tilde{Z}^j_1\mid \tilde{W}^j_{post},\tilde{W}^j_{pre})\\
    & \geq \min_\omega \sum_{j\in J_\omega} H_\infty(\tilde{Z}^j_0,\tilde{Z}^j_1\mid \tilde{W}^j_{post},\tilde{W}^j_{pre}=\omega^j)\\
    & \geq 0.8(\sec/2).
\end{align*}

The above implies that the probability of success in the hybrid experiment is at most $2^{-0.4\sec}=2^{-\Omega(\sec)}$ and, consequently, is at most $2^{-\Omega(\sec)}$ in the original experiment as well. This completes the proof of \Cref{lem:claw-gen-classical-soundness}.

\subsection{Quantum Hardness of Finding a Claw} \label{sec:qhardness}

We prove hardness of finding a claw also against a quantum (but memory-bounded) attacker, with a somewhat worse bound. Although this statement is not necessary for the proof of our proof of quantumness theorem (\Cref{thm:arbitrary-poq}), it enables new applications in the context of verification of quantum computation, which we outline in \cref{sec:appendix}. Before starting with the analysis, let us make the notion of a memory-bounded quantum adversary more precise.

\paragraph{Quantum Adversaries.} A memory-bounded quantum adversary is modeled as a quantum channel acting on a fixed-size register $\mathsf{M} \simeq \mathbb{C}^{2^m}$ and on a register $\mathsf{N}$, which corresponds to the next message of the protocol. We can model any memory-bounded quantum adversary without loss of generality as follows: 
\begin{itemize}
    \item The adversary starts with an initial state $\rho_\mathsf{M}$ in the memory register.
    \item For each round $i$ of the protocol and each incoming message $\mu_i$, the adversary applies an arbitrary CPTP linear map:
\[
\Phi^{i,\mu_i}_{\mathsf{M}\to \mathsf{MN}} : \text{L}(\mathsf{M}) \mapsto\text{L}(\mathsf{M}\otimes \mathsf{N}).
\]
\item The message sent by the adversary as a response is determined by measuring $\mathsf{N}$ in the computational basis.
\item The updated state of the attacker is the reduced density matrix on $\mathsf{M}$.
\end{itemize}
An implication of this fact is that the initial state of the adversary, along with transcript of the protocol, uniquely determine the state of the attacker at any given round.

\paragraph{Analysis.}
Just like in the proof for classical soundness in \Cref{sec:chardness}, let us denote by $Z$ the random variable containing the $2\sec$-bit bits $\{z^1_b,\dots,z^{\sec}_b\}_{b\in \{0,1\}}$ that are stored by the verifier in Step~\ref{step:interactive-hashing}, and by $W$ the random variable containing the $m$-qubit state of the attacker at the end of the protocol. Further, for a set $J\subseteq [\sec]\times\bin$, we denote by $Z_J$ the restriction of $Z$ to $J$.

\begin{lemma}[Quantum Hardness of Claw-Finding]\label{lmm:qhard}
There exists a random variable $J\subseteq [\sec]\times\bin$ (which depends on $Z$ and $W$) such that
\[
\TD\left((J,Z_J,W),(J, Z',W)\right) \leq \sec^2\sqrt{m/2k},
\]
where $Z'$ is a uniformly random string of length $|J|$, and, further, $|J| \geq \sec/2$ with probability all but $\exp(-\Omega(\sec))$ over the random coins of the protocol.
\end{lemma}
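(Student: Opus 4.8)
The plan is to quantize the argument of \Cref{sec:chardness}, replacing each min-entropy estimate by a trace-distance estimate supplied by the Plug-In Lemma (\Cref{lem:plug-in}). Throughout I would use the Markovian structure of the memory-bounded quantum adversary stressed in this section (``the initial state and the transcript determine the adversary's state at every round''): writing $W^j_{pre}$ for the $m$-qubit memory just before the interactive hashing of round $j$, every later object in the protocol — the transcript of round $j$'s hashing and hence the selected pair $\{v^j_0,v^j_1\}$, the streams and hashing coins of rounds $>j$, the amplification messages, and in particular the final memory $W$ — is obtained from $W^j_{pre}$ by a fixed CPTP map that only additionally consumes randomness independent of the round-$j$ stream $U^j=(U^j_1,\dots,U^j_k)$.

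First I would build, for each round $j$, a small set of ``bad'' stream positions. Applying \Cref{lem:plug-in} with $X=U^j$ and memory register $W^j_{pre}$, and noting that for a uniform string the conditional marginal of $U^j_i$ given $U^j_{<i}$ is just a fresh uniform bit, the lemma gives $\Expct_{i\gets[k]}[\delta^j_i]\le\sqrt{m/2k}$, where $\delta^j_i$ is the trace distance between $(U^j_{<i},U^j_i,W^j_{pre})$ and the same state with $U^j_i$ resampled uniformly. By Markov's inequality (\Cref{lmm:markov}) the set $B^j:=\{i:\delta^j_i>(\sec/2)\sqrt{m/2k}\}$ satisfies $|B^j|\le 2k/\sec$. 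Now I invoke the $(\alpha,\beta)$-security of interactive hashing (\Cref{thm:constant-round-hashing}) for this fixed set of density $O(1/\sec)$: for \emph{every} prover strategy, $\Pr(\{v^j_0,v^j_1\}\subseteq B^j)=O(\log k/\sec)$. Defining the random set $J:=\{(j,b)\in[\sec]\times\bin: v^j_b\notin B^j\}$, each round fails to contribute an element to $J$ with probability only $O(\log k/\sec)$, and these failures are negatively correlated across rounds (each is controlled by interactive-hashing soundness regardless of the other rounds), so — exactly as in \Cref{prop:J-bound} — a Chernoff bound (\Cref{lmm:chernoff}) gives $|J|\ge\sec/2$ except with probability $\exp(-\Omega(\sec))$, whenever $\log k=O(\sec)$.

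To bound $\TD\big((J,Z_J,W),(J,Z',W)\big)$ I would reveal $J$ and then hybridize over its $\le 2\sec$ elements, at each step replacing $z^j_b=U^j_{v^j_b}$ by a fresh uniform bit, processing the pairs of each round in \emph{decreasing} order of index $v^j_b$. At the step handling $(j,b)\in J$, all data revealed so far — $W$, the set $J$, the already-resampled uniform bits, and the still-unprocessed real values (which within round $j$ lie at positions $<v^j_b$, hence inside $U^j_{<v^j_b}$, and in other rounds belong to streams independent of $U^j$) — is the image, under a fixed CPTP map, of $(U^j_{<v^j_b},W^j_{pre})$ together with randomness independent of the future bits $U^j_{\ge v^j_b}$. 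Applying that map to both sides of the per-index estimate $\delta^j_{v^j_b}\le(\sec/2)\sqrt{m/2k}$ (which holds because $(j,b)\in J$ means $v^j_b\notin B^j$) and using monotonicity of trace distance under CPTP maps, each step costs at most $(\sec/2)\sqrt{m/2k}$; summing over at most $2\sec$ steps yields the claimed bound $\sec^2\sqrt{m/2k}$.

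I expect the conditioning bookkeeping in the last step to be the main obstacle: one must verify that the information exposed at each hybrid step genuinely factors through $(U^j_{<v^j_b},W^j_{pre})$ and randomness independent of $U^j_{\ge v^j_b}$. This is what forces the decreasing-index processing order, and it rests on the Markov structure of the adversary, on the independence of distinct rounds' streams and of the hashing coins from the streams, and on the fact that the retry mechanism (the verifier restarting whenever $h^{-1}(y)\ne\{v_0,v_1\}$ for its \emph{private} pair $v_0,v_1$) prevents the prover from correlating $U^j$ with the selected pair or from leaking future stream bits through the public transcript. A secondary nuisance is the interplay between the Markov threshold defining $B^j$ and the $O(\log k)$ loss in interactive-hashing soundness, which is why the stated bound carries the extra factor $\sec$ beyond the $2\sec$-fold hybrid and why one restricts to $k$ at most exponential in $\sec$.
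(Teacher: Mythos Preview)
Your plan mirrors the paper's proof closely: both apply the Plug-In Lemma to each round's stream, extract a small ``bad'' set via Markov, invoke interactive-hashing soundness to bound how often both preimages land in the bad set, and then run a hybrid over the elements of $J$. Two points deserve comment.

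First, there is one difference that matters. You apply the Plug-In Lemma to the unconditioned pair $(U^j,W^j_{pre})$ and define $\delta^j_i$ and $B^j$ globally; the paper instead conditions on the transcript $\tau^j$ of everything prior to round $j$'s stream, working with $\delta^j_i(\tau)$ and a $\tau$-dependent bad set $B^j_\tau$. This matters in the hybrid step: the output $(J,Z_J,W)$ contains $J^{<j}$, which is a function of $\tau^j$ but is \emph{not} computable from $(U^j_{<v^j_b},W^j_{pre})$ alone (the $m$-qubit state need not encode the history). Your ``randomness independent of $U^j_{\ge v^j_b}$'' escape does not cover $\tau^j$, since $\tau^j$ is correlated with $W^j_{pre}$ even though it is independent of $U^j$. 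Concretely, the per-index quantity you actually need is
\[
\tilde\delta^j_i:=\TD\big((\tau^j,U^j_{<i},U^j_i,W^j_{pre}),(\tau^j,U^j_{<i},\tilde U,W^j_{pre})\big)=\Expct_{\tau}\big[\delta^j_i(\tau)\big],
\]
and only $\delta^j_i\le\tilde\delta^j_i$ holds, so membership in your $B^j$ does not control $\tilde\delta^j_i$. The fix is exactly what the paper does: apply Plug-In after fixing $\tau^j$ (which keeps $U^j$ uniform and $W^j_{pre}$ at $m$ qubits, so the $\sqrt{m/2k}$ bound is unchanged) and let the bad set depend on $\tau^j$. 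You correctly flagged this bookkeeping as the obstacle; it is precisely the place where your definitions need to change.

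Second, a minor structural difference: the paper puts at most \emph{one} element per round into $J$ (the smallest $b$ with $v^j_b\notin B^j_{\tau^j}$), yielding $\le\sec$ hybrid steps at cost $1/t=\sec\sqrt{m/2k}$ each. You allow up to two per round, with $\le 2\sec$ steps at cost $(\sec/2)\sqrt{m/2k}$ each, and use the decreasing-index processing to keep the remaining real bit inside $U^j_{<v^j_b}$. Both choices land on the same $\sec^2\sqrt{m/2k}$ bound; the paper's single-element choice simply avoids the within-round ordering trick.
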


Before proving \cref{lmm:qhard}, we first observe that it indeed implies quantum hardness of finding a claw. This is because an adversary that guesses the claw must in particular guess any subset of its bits, including $J$. However, the success probability of the best attacker on the RHS distribution is at most $2^{-\sec/2}$. By a triangle inequality, this implies that the success probability in guessing $Z$ is bounded by $2^{-\sec/2} + \sec^2\sqrt{m/2k}$.

We now proceed to prove \Cref{lmm:qhard}. Similarly as above, for $j\in[\sec]$, let $U^j$ be the random variable taking the value of the string $U$ streamed in Step~\ref{step:streaming}, before the first successful completion of the interactive hashing step, namely the first attempt where the verifier does not abort and start over in Step~\ref{step:interactive-hashing}. Let $W^j_{pre}$ denote the random variable consisting of the adversary's internal state, i.e., its $m$-qubit memory, right after the streaming of $U^j$ has completed. We consider the marginal distributions of $U^j$ and $W^j_{pre}$ given a fixed transcript of the protocol up to (and excluding) the streaming of $U^j$ (recall the transcript, together with the adversary's initial state, determine the state of the adversary and, therefore the aforementioned marginal distributions are well-defined). For any such possible transcript $\tau$, we denote the corresponding marginals by $U^j(\tau)$ and $W^j_{pre}(\tau)$. By \cref{lem:plug-in}, for any $\tau$, we have that:
\begin{equation}\label{eq:plugin}
    \Expct_{i\gets[k]} \underbrace{\TD\left( (U_{<i}^j(\tau), U_i^j(\tau), W^j_{pre}(\tau)), 	(U_{<i}^j(\tau), \tilde{U}_i^j, W^j_{pre}(\tau)) \right)}_{\delta_i^j(\tau)} \leq \sqrt{m/2k}
\end{equation}
where $\Tilde{U}_{i}^j$ is a uniformly sampled bit. Let $t=\sqrt{2k/\sec^2m}$. For all $j\in[\sec]$ and any $\tau$ define the set:
\[
    B^j_\tau := \left\{ i: \delta_i^j(\tau) > 1/t \right\}\subseteq[k].
\]
Using this definition, we define the set $J$ as follows: For any round $j\in[\sec]$, we let $\tau^j$ denote the history of the protocol up to (and excluding) the streaming of $U^j$. We add $(j,b)$ to $J$ if $b\in\bin$ is the smallest such that $v^j_b\notin B^j_{\tau^j}$, where $v^j_0,v^j_1$ are the outcome of the (successful) interactive hashing at round $j$. To complete the proof of \Cref{lmm:qhard}, it suffices the bound the size of $J$, and the trace distance between the following two experiments:
\iflncs
    The first experiment runs the protocol and outputs the adversary's state $W$ at its completion and the bits in $Z_J$. The second experiment does the same thing, except that it outputs uniformly sampled bits, along with $W$.
\else
\begin{itemize}
    \item The first experiment runs the protocol and outputs the adversary's state $W$ at its completion and the bits in $Z_J$.
    \item The second experiment does the same thing, except that it outputs uniformly sampled bits, along with $W$.
\end{itemize}
\fi
We do so in the following two propositions.


\begin{claim}[$J$ is Large] Let $J$ be defined as above, then: 
\[\Pr(|J|< \sec/2)<2^{-\Omega(\sec)}.\]
\end{claim}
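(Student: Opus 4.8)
The plan is to show that for each round $j\in[\sec]$, the pair $(j,b)$ fails to land in $J$ only when \emph{both} indices $v^j_0,v^j_1$ output by the interactive hashing fall inside the bad set $B^j_{\tau^j}$, and then to argue this happens with only small probability per round and with essentially independent outcomes across rounds, so that a Chernoff-type bound gives the claim. First I would fix the history $\tau=\tau^j$ up to (and excluding) the streaming of $U^j$; this determines the marginals $U^j(\tau)$ and $W^j_{pre}(\tau)$ and hence the quantities $\delta^j_i(\tau)$ and the set $B^j_\tau$. From \Cref{eq:plugin} we have $\Expct_{i\gets[k]}\delta^j_i(\tau)\le\sqrt{m/2k}$, so by Markov's inequality (\Cref{lmm:markov}) the fraction of indices $i$ with $\delta^j_i(\tau)>1/t$ is at most $t\sqrt{m/2k}$; with $t=\sqrt{2k/\sec^2 m}$ this gives $|B^j_\tau|\le k/\sec^2 \cdot$ (a constant), i.e. $\Pr_{i\gets[k]}(i\in B^j_\tau)=O(1/\sec^2)$.

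Next I would invoke the $(\alpha,\beta)$-security of the interactive hashing (\Cref{thm:constant-round-hashing}): since $B^j_\tau$ is a fixed set — it depends only on $\tau^j$, which is determined \emph{before} the interactive hashing of round $j$ runs — with $\beta=O(1/\sec^2)$ we get $\Pr(\{v^j_0,v^j_1\}\subseteq B^j_\tau)=O(\beta\log k)=O(\log k/\sec^2)$. Under the hypothesis $k/\log k>\sec(m+\Omega(\sec))$ one checks $\log k = O(k/\sec m)$; substituting this back (and using $k\ge$ a suitable polynomial bound), the probability that round $j$ contributes nothing to $J$, call it $p_j$, is at most $O(1/\sec)$ — in fact we can make it $\le 1/4$ for $\sec$ large enough. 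Crucially this bound holds conditioned on \emph{any} value of $\tau^j$, hence the events "round $j$ contributes to $J$" are negatively correlated: conditioning on the outcomes of earlier rounds only fixes more of $\tau^j$, and for every fixed $\tau^j$ the per-round success probability is $\ge 3/4$.

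Finally I would apply the Chernoff bound for negatively correlated variables (\Cref{lmm:chernoff}, extended as in \cite{PS97}): let $X_j$ be the indicator that round $j$ contributes to $J$, so $\Expct(\sum_j X_j)\ge 3\sec/4$ and the $X_j$ are negatively correlated; then $\Pr(|J|<\sec/2)=\Pr(\sum_j X_j < \sec/2)\le\Pr(\sum_j X_j\le(1-1/3)\cdot(3\sec/4))\le \exp(-\Omega(\sec))=2^{-\Omega(\sec)}$, which is the claim. The main obstacle I anticipate is bookkeeping the dependency structure: one must be careful that $B^j_{\tau^j}$ is measurable with respect to the transcript strictly before round $j$'s interactive hashing (so that the interactive-hashing soundness applies to a \emph{fixed} set), and that the per-round bad probability is uniformly $O(1/\sec)$ regardless of history, which is what licenses the negative-correlation Chernoff argument rather than a naive union bound; the arithmetic relating $\log k$, $m$, $\sec$, and $k$ via the hypothesis $k/\log k>\sec(m+\Omega(\sec))$ also needs a careful (if routine) check.
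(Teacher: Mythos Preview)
Your approach is essentially the paper's: bound $|B^j_{\tau^j}|$ via Markov on \Cref{eq:plugin}, invoke interactive-hashing soundness to bound the per-round failure probability, then apply the Chernoff bound for negatively correlated variables. Two small corrections are worth noting. First, an arithmetic slip: with $t=\sqrt{2k/\sec^2 m}$ one gets $t\sqrt{m/2k}=1/\sec$, not $1/\sec^2$, so $\beta=O(1/\sec)$ and the per-round bad probability is $O(\log k/\sec)$. Second, your attempt to push this below $1/4$ via the hypothesis $k/\log k>\sec(m+\Omega(\sec))$ does not work as written: that hypothesis lower-bounds $k$, it does not upper-bound $\log k$ in terms of $\sec$. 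The paper sidesteps this by working with the \emph{bad} indicators $E^j=1-X_j$ directly: it shows $\Expct\big[\sum_j E^j\big]=O(\log k)$, argues the $E^j$ are negatively correlated (interactive-hashing soundness holds in each round regardless of earlier rounds), and applies Chernoff to get $\Pr\big(\sum_j E^j\ge \sec/2\big)\le e^{-\Omega(\sec^2/\log k)}$, which is $2^{-\Omega(\sec)}$ in the intended parameter regime. Your complementary framing via the $X_j$ is fine in spirit (the uniform conditional lower bound yields stochastic domination by i.i.d.\ Bernoullis), but calling the $X_j$ ``negatively correlated'' is not quite the right justification; the domination/coupling argument is what actually licenses Chernoff here.
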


\begin{proof}
    By \Cref{eq:plugin} and \cref{lmm:markov} (Markov) we have that: \[
        \Pr_{i\gets[k]} \left(\delta_{i}^j(\tau^j) > 1/t \right) < t/\sqrt{m/2k},
    \] which implies that $|B^j_{\tau^j}| < t \sqrt{{km/2}}$. By \cref{thm:constant-round-hashing}, we can bound the probability that the pre-images of the interactive hashing belong to such a set by
    \begin{equation}\label{eq:boundround}        
    \Pr\left(\{v_0^j, v_1^j\} \subseteq B^j_{\tau^j}\right)\leq O\left( t \log k\sqrt{{m/2k}} \right),
    \end{equation}
    where the probability is taken over the random coins of the interactive hashing.
    
    Let us denote by $E^j$ the predicate for $\{v_0^j, v_1^j\} \subseteq B^j_\tau$. To prove the proposition, it suffices to show that $E^j=0$ for at least half of the $j$'s with overwhelming probability. 
    By \cref{eq:boundround}, we have:
    \[
    \tilde{E}:=\Expct\left(\sum_{j\in[\sec]} E^j\right) =O\left( \sec t \log k \sqrt{{m/2k}}\right) = O(\log k).
    \]
    Further, observe that the random variables $E^j$ are negatively correlated, since interactive hashing soundness holds for any round independently of the others and, therefore, the adversary cannot increase probability that the event $E^j$ happens across several rounds. Thus, by \cref{lmm:chernoff} (Chernoff) we can bound:
    \[
    \Pr\left(\sum_{j\in[\sec]} E^j_\tau \geq \sec/2\right) \leq e^{-\Omega(\sec^2/\log\sec)}
    \]
    which completes the proof of the claim.
\end{proof}

\begin{claim}[Distance of the Experiments] Let $J$ be defined as above. It holds that:
\[
\TD((J,Z_J,W),(J,Z',W))\leq\sec/t
\]
where $Z'$ is a uniformly random string of length $|J|$.
\end{claim}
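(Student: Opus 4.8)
The plan is to prove the claim by a hybrid argument over the $\sec$ rounds, swapping the $J$-bit contributed by one round at a time for a fresh uniform bit and charging each swap at most $1/t$, so that the triangle inequality delivers $\sec/t$. For $\ell=0,1,\dots,\sec$ let $H_\ell$ be the experiment that runs the protocol and outputs $(J,\hat Z^{(\ell)},W)$, where $\hat Z^{(\ell)}$ has the same index set $J$ as $Z_J$, with the entry indexed by $(j,b)\in J$ equal to the real $z^j_b$ if $j>\ell$ and equal to a fresh, independent uniform bit if $j\le\ell$. Then $H_0=(J,Z_J,W)$ and $H_\sec=(J,Z',W)$, and it suffices to show $\TD(H_{\ell-1},H_\ell)\le 1/t$ for every $\ell\in[\sec]$.

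Fix $\ell$ and condition on the transcript $\tau^\ell$ of the protocol up to (and excluding) the streaming of $U^\ell$ in the first successful attempt of round $\ell$; this fixes $B^\ell=B^\ell_{\tau^\ell}$ and the adversary's pre-stream state, and makes $U^\ell$ a fresh uniform $k$-bit string. Streaming $U^\ell$ produces the cq state $(U^\ell,W^\ell_{pre})$ on $k$ bits and $m$ qubits, so \Cref{lem:plug-in} gives $\Expct_{i\gets[k]}[\delta^\ell_i(\tau^\ell)]\le\sqrt{m/2k}$; hence every $i\notin B^\ell$ obeys $\delta^\ell_i(\tau^\ell)\le 1/t$, and marginalizing $U^\ell_{<i}$ yields $\TD\big((U^\ell_i,W^\ell_{pre}),(\tilde U,W^\ell_{pre})\big)\le 1/t$ with $\tilde U$ a fresh uniform bit independent of $W^\ell_{pre}$. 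Now observe that everything after the streaming of $U^\ell$ — round $\ell$'s interactive hashing, rounds $\ell+1,\dots,\sec$, and the amplification phase — is a CPTP map $\mathcal E$ applied to $W^\ell_{pre}$ (with fresh independent verifier randomness) that outputs $W$ together with classical side information, including $v^\ell_0,v^\ell_1$ (hence $i^*_\ell:=v^\ell_{b_\ell}$ and whether $(\ell,\cdot)\in J$), the $g$-predicates, and the $J$-bits of all other rounds. Crucially $\mathcal E$ never touches $U^\ell$, so the only coordinate of the output that reads $U^\ell$ directly is the one being resampled in the hybrid, $z^\ell_{b_\ell}=U^\ell_{i^*_\ell}$; everything else is a function of $\mathcal E(W^\ell_{pre})$, of $\tau^\ell$, and of the earlier fresh bits, identically distributed in $H_{\ell-1}$ and $H_\ell$. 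Since $(\ell,\cdot)\in J$ forces $i^*_\ell\notin B^\ell$, i.e. $i^*_\ell$ is a good index, one wants to conclude $\TD(H_{\ell-1},H_\ell)\le 1/t$ by combining the per-index bound above with monotonicity of trace distance under $\mathcal E$ — writing the trace distance as a sum over the partition $\{i^*_\ell=i\}_{i\notin B^\ell}$ and controlling each term by data processing for that fixed good index.

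The main obstacle is exactly this last step. The index $i^*_\ell$ is not a fixed quantity: it is extracted from the interactive hashing that the adversary participates in, hence depends on $W^\ell_{pre}$ and therefore on $U^\ell$; conditioning on $i^*_\ell=i$ naively inflates the trace distance by $1/\Pr(i^*_\ell=i)$, and summing that over $\approx k$ indices is far too lossy. The resolution must use the $(\alpha,\beta)$-soundness of the interactive hashing (\Cref{thm:constant-round-hashing}) exactly as in the ``$J$ is large'' claim: the adversary cannot steer the pair $\{v^\ell_0,v^\ell_1\}$ — and hence $i^*_\ell$ — into any a-priori small set of indices, in particular into the (small) set of coordinates whose bit it has kept correlated with its register. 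Making this quantitative, so that summing the per-index data-processing bounds against the \emph{true} distribution of $i^*_\ell$ produces $1/t$ rather than $|[k]\setminus B^\ell|/t$, is the crux; it should parallel the classical treatment in \Cref{sec:chardness}, with \Cref{lem:plug-in} playing the role that the incompressibility/entropy estimates play there (and one may prefer to carry $U^\ell_{<i}$ along rather than marginalize it, since the final register may also retain information about it).

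Once the per-round bound $\TD(H_{\ell-1},H_\ell)\le 1/t$ is established (after averaging back over $\tau^\ell$, which is harmless since the bound holds for each fixed $\tau^\ell$), the triangle inequality over $\ell=1,\dots,\sec$ gives $\TD\big((J,Z_J,W),(J,Z',W)\big)\le \sec/t$, proving the claim; together with the previous claim that $|J|\ge\sec/2$ except with probability $2^{-\Omega(\sec)}$, this is what feeds into \Cref{lmm:qhard} and the quantum soundness bound $2^{-\sec/2}+\sec^2\sqrt{m/2k}$.
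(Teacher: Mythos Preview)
Your hybrid decomposition and the per-round target $\TD(H_{\ell-1},H_\ell)\le 1/t$ are exactly the paper's approach, and you correctly isolate the only real difficulty: the selected index $i^*_\ell=v^\ell_{b_\ell}$ is correlated with $W^\ell_{pre}$.

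Where the proposal goes astray is the resolution. Interactive-hashing soundness is \emph{not} invoked in this claim at all; it was fully discharged in the previous claim (``$J$ is large''), and its effect is already baked into the \emph{definition} of $J$: for every $(\ell,b_\ell)\in J$ one has $v^\ell_{b_\ell}\notin B^\ell_{\tau^\ell}$ by construction. Given that, the distance bound is pure data processing---there is no second appeal to $(\alpha,\beta)$-security, no partition over the events $\{i^*_\ell=i\}$, and hence no $1/\Pr(i^*_\ell=i)$ blow-up to tame. Trying to push the argument through interactive-hashing soundness, as you suggest, does not yield the per-round $1/t$ bound.

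The actual resolution is precisely your throwaway parenthetical: keep $U^\ell_{<v^\ell}$ rather than marginalizing it. The paper compares the intermediate tuples
\[
(J^{\le\ell},Z'_{t_\ell},U^\ell_{<v^\ell},\,U^\ell_{v^\ell},\,W^\ell_{pre})
\quad\text{and}\quad
(J^{\le\ell},Z'_{t_\ell},U^\ell_{<v^\ell},\,\tilde U,\,W^\ell_{pre}).
\]
A single CPTP map---``continue the protocol from $W^\ell_{pre}$ and plug the supplied bit in as $z^\ell_{b_\ell}$''---reproduces both hybrids from these tuples, so by monotonicity $\TD(H_{\ell-1},H_\ell)$ is bounded by the distance between them. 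Conditioning on $\tau^\ell$, that distance is exactly $\delta^\ell_{v^\ell}(\tau^\ell)$, which is at most $1/t$ because $v^\ell\notin B^\ell_{\tau^\ell}$. So the obstacle dissolves once you carry $U^\ell_{<v^\ell}$ through the monotonicity step and use what the definition of $J$ already guarantees; promote your parenthetical to the main argument and drop the interactive-hashing detour.
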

\begin{proof}
    Let us denote $J=\{(j,b_j)\}$ (note for every $j$ there exists at most one element in $J$ for some $b_j\in\bin$). We bound the distance incurred by each of the swaps. Let $J^{\geq j} = \{(j',b_{j'})\in J\mid j'\geq j\}$ and let $t_j=|J|-|J^{\geq j}|$.  Then, for any $j=1,\dots,\sec$, our goal is to prove
    $$
        \TD((J,Z'_{t_j},Z_{J^{\geq j}},W),(J,Z'_{t_j+1},Z_{J^{\geq j+1}},W))\leq 1/t,
    $$
    which implies the claim by triangle inequality. Now, for any $(j,b_j)\in J$, let $v^j=v^j_{b_j}$; this is the location of the swapped bit in $U^j$. Then, by monotonicity of trace distance, it holds that
    \begin{align*}
        &\TD((J,Z'_{t_j},Z_{J^{\geq j}},W),(J,Z'_{t_j+1},Z_{J^{\geq j+1}},W))\\
        &\indent \leq \TD((J^{\leq j},Z'_{t_j},U^j_{<v^j},U^j_{v^j},W^j_{pre}),(J^{\leq j},Z'_{t_j},U^j_{<v^j},\tilde{U}^j_{v^j},W^j_{pre})).
    \end{align*}
    The above holds since the prover's final state $W$, the final set $J$ and the bits $Z_J$ swapped with random until round $j-1$ (i.e. $Z'_{t_j},Z_{J^{\geq j}}$) or, respectively, round $j$ (i.e. $Z'_{t_j+1},Z_{J^{\geq j+1}}$), may be produced from the prover's state after the streaming of $U^j$, i.e. $W^j_{pre}$, the choice of $J$ up until the $j^{th}$ round, i.e. $J^{\leq j}$, $Z_J$ swapped up to round $j-1$, i.e. $Z'_{t_j}$, and the stream $U^j$ up until the $v_j^{th}$ bit which is either swapped (i.e. $\tilde{U}^j_{v^j}$) or not ($U^j_{v^j}$), respectively. To produce the final distributions, simply carry on with the protocol given $W_{pre}^j$ and $J^{\leq j}$ to obtain the final state $W$ and $J$ in full, together with the bits of $Z_J$ at rounds $j+1,\dots,\sec$. The bit in $Z_{J}$ at round $j$ is simulated by $U^j_{v^j}$ or $\tilde{U}^j_{v^j}$ and the bits before round $j$ are given as $Z'_{t_j}$.

    Lastly, since anything that occurs up till round $j$ is a function of the transcript of the protocol up to that round, i.e. $\tau^j$, we may finish as follows
    \begin{align*}
        &\TD((J^{\leq j},Z'_{t_j},U^j_{<v^j},U^j_{v^j},W^j_{pre}),(J^{\leq j},Z'_{t_j},U^j_{<v^j},\tilde{U}^j_{v^j},W^j_{pre}))\\
        &\indent \leq \Expct_{\tau^j} \TD((U^j_{<v^j}(\tau^j),U^j_{v^j}(\tau^j),W^j_{pre}(\tau^j)),(U^j_{<v^j}(\tau^j),\tilde{U}^j_{v^j},W^j_{pre}(\tau^j)))\\
        &\indent= \Expct_{\tau^j} \delta^j_{v^j}(\tau^j)\leq 1/t.
    \end{align*}
\end{proof}

\iflncs
\bibliographystyle{splncs04}
\else
\bibliographystyle{alpha}
\fi
\bibliography{abbrev0,crypto,bib}

\appendix
\iflncs
\section{Analysis of \Cref{protocol:template}}\label{sec:appendix}

\else
\section{Classical Verification of BQP}\label{sec:appendix}

It is shown in~\cite{NZ23,BK24,BKMSW24} that a claw-generation protocol with quantum soundness implies the existence of a protocol where a quantum prover can demonstrate to a completely classical verifier the validity of any statement in BQP. The protocol roughly goes as follows: First, it turns the hardness of finding a claw into a protocol for \emph{blind quantum computation}, i.e., a protocol where the circuit that is computed by the quantum prover is computationally hidden, and the verifier is fully classical. Second, it uses blind quantum computation to \emph{compile} a two-player non-local game for BQP verification~\cite{NZ23,KLVY23} into a single-player one. 

By plugging our claw-generation protocol into the framework of~\cite{BK24}, one can obtain a protocol for general verification of BQP computation against \emph{memory-bounded} quantum adversaries, with \emph{unconditional} security. However, the straightforward combination of \cref{lem:claw-gen} with the~\cite{BK24} theorem meets two main discrepancies.

First, the reductions as stated in~\cite{BK24} are not concerned with preserving the memory complexity of the attacker, which on the other hand is necessary in our setting. In \cref{sec:appendix-memory-bounded}, we outline how to adapt all reductions involved to be ``memory preserving''.


Second, the analysis from~\cite{BK24} requires a claw-generation protocol where advantage of claw-finding is negligible, whereas \cref{lmm:qhard} only provides an inverse-polynomial bound on the success probability of the attacker. We show how to bridge this gap in \cref{sec:appendix-soundness-error}.

\subsection{From Claw-Generation to Verification of BQP}\label{sec:appendix-memory-bounded}

We outline how to construct a classical verification for BQP computations, unconditionally secure against memory-bounded adversaries. In what follows, we assume that we have a claw-generation protocol with negligible soundness error (the success probability of any memory-bounded attacker) and obtain a classical verification protocol for BQP computations with inverse-polynomial gap between completeness (the success probability of an honest prover) and soundness error.

In a nutshell, we follow the strategy from \cite{BK24} where an analogous implication is shown in the standard cryptographic setting (against computationally-bounded attackers). In fact, all the steps are precisely identical, except that we need to argue why security holds against memory-bounded adversaries. Given that the protocol and the arguments are unchanged, we only provide a proof sketch.

\paragraph{Step I: Oblivious State Preparation.} An oblivious state preparation (OSP) \cite{BK24} is a protocol between a classical verifier and a quantum prover. At the end of the interaction, the prover holds the state
\[
H^\theta \ket{b}
\]
whereas the verifier holds the bits $(\theta, b)$. Security requires that any QPT prover cannot guess the bit of the verifier $\theta$ with probability non-negligibly greater than $1/2$. It is shown in \cite{BK24} (Theorem 4.7) that a claw-state generation protocol can be generically used to construct an OSP. First it is shown that a claw-state generation protocol can always be assumed without loss of generality (Lemma 4.6 in \cite{BK24}) to prepare a state of the form
\[
\frac{\ket{0, x_0} + \ket{1, x_1}}{\sqrt{2}} 
\]
for $x_0,x_1\in\{0,1\}^n$.
Then the prover and the verifier engage in the following interactive protocol:
\begin{itemize}
    \item The verifier samples two random bitstrings $r_0,r_1 \gets \{0,1\}^n$.
    \item The prover applies the map
    \[
    \frac{\ket{0, x_0} + \ket{1, x_1}}{\sqrt{2}}  \mapsto \frac{\ket{0, x_0, x_0^\intercal r_0} + \ket{1, x_1, x_1^\intercal r_1}}{\sqrt{2}} 
    \]
    and measures all but the last qubit in the Hadamard basis to obtain a string $d\in\{0,1\}^{n+1}$.
    \item The verifier sets $\theta = (x_0, x_1)^\intercal (r_0, r_1)$. If $\theta = 0$, then it sets $b = x_0^\intercal r_0 = x_1^\intercal r_1$. Otherwise, it sets $b = d ^\intercal (1,x_0 \oplus x_1)$.
\end{itemize}
Correctness follows by direct calculation, whereas security follows by reducing the hardness of guessing $\theta$ to the hardness of computing $(x_0, x_1)$, with the Goldreich-Levin search-to-decision reduction. Using \cref{lmm:GL}, the same reduction holds in the memory-bounded settings.

\paragraph{Step II: Blind Delegation.} A blind delegation protocol allows a verifier to delegate the computation of a quantum circuit $Q$ on a classical input $\ket{y}$, while keeping $y$ hidden. It is shown in \cite{BK24} (Theorem 6.11) that OSP implies a blind delegation protocol. The protocol starts with a one-time padded input $X^x\ket{y}$ and proceeds while the verifier keeps track of the one-time pad keys and the prover performs the quantum gates. For a Clifford gates $C$, the following identity is used:
\[
C X^x Z^z = X^{x'} Z^{z'} C
\]
where $x',z'$ are functions of $x,z$. For non-Clifford gates, \cite{BK24} (Theorem 6.10) shows an interactive protocol to correct the errors, based on OSP. This subroutine, referred to as \emph{encrypted phase}, allows a prover to obliviously apply a phase, conditioned on a bit that is only known to the verifier. The protocol proceeds as follows:
\begin{itemize}
    \item The verifier and the prover engage in an OSP protocol, with the verifier's bit  $(\theta, b)$. The prover applies a Hadamard gate $H$ and a $\sqrt{X}$ gate to its state. This results into the state
    \[
    Z^b P^\theta \ket{+}.
    \]
    \item The prover CNOTs the target register onto the above state, and measures it in the computational basis to obtain a bit $m\in\{0,1\}$.
    \item The prover sents $m$ to the verifier, who applies the appropriate correction to the one-time pad.
\end{itemize}
We omit most details from this outline, but what is important for us is that the blindness follows directly from the indistinguishability property of the OSP. Thus, precisely the same analysis holds in the memory-bounded settings.

\paragraph{Step III: Computationally Non-Local Strategies.} The last step in the strategy is to use the  blind-delegation protocol to \emph{compile} a two-prover non-local game into a single prover one. This idea was first proposed in \cite{KLVY23} and it soundness was analyzed in \cite{NZ23}, for the case of a particular blind-delegation protocol, based on quantum fully-homomorphic encryption. In \cite{BK24} (Theorem 6.15) this approach is extended to \emph{any} blind delegation protocol, and it shown that the compilation results into a single-prover \emph{computationally non-local strategy}. Blindness of the delegation protocol is only used to establish that the maximal success probability of any QPT prover in the compiled protocol is bounded by the supremum over all computationally non-local strategies (Theorem 6.23). This is again a direct reduction to the blind delegation protocol that also works in the memory-bounded settings (provided of course that the prover is memory bounded). All subsequent steps of the analysis are information-theoretic and thus trivially hold for our setting as well.

\subsection{Tolerating Non-Negligible Soundness Error}\label{sec:appendix-soundness-error}

Next, we show that our claw-generation protocol (\cref{lem:claw-gen}) implies classical verification of BQP computation, despite its inverse-polynomial soundness error against quantum attackers, thus relaxing the assumption of negligible soundness error made in \cref{sec:appendix-memory-bounded}.

One simple way to derive this is by relying on the fact that our quantum soundness argument from \cref{lmm:qhard} stems from an indistinguishability argument between the real experiment and an ideal experiment, where soundness holds \emph{statistically}. 

Let us recall some notation from the proof of \cref{lmm:qhard}: $Z$ is the random variable containing the ``hard'' part of the claw consisting of bits $\{z^1_b,\dots,z^\sec_b\}_{b\in\bin}$ and $W$ is the adversary's $m$-qubit state at the end of the protocol. \cref{lmm:qhard} bounds that trace-distance between $(J,Z_J,W)$ and $(J,Z',W)$, where $J$ is a subset of size at least $\sec/2$ and $Z'$ is uniformly random, by $\epsilon = \sec^2\sqrt{m/2k}$. Since the polynomial $k$ can be chosen to be arbitrarily larger than the bound on memory $m$, we can choose $\epsilon$ to be an arbitrarily small inverse-polynomial function of $\sec$. Specifically, let $\delta$ denote the inverse-polynomial gap between completeness and soundness error in the protocol from~\cite{BK24} when instantiated using claw-generation with negligible soundness error, which we inherit in the bounded-memory setting as shown in \cref{sec:appendix-memory-bounded}. Let $T=\poly(\sec)$ denote the number of times claw-generation is invoked in the protocol to prove a given BQP instance. We set $\epsilon$ to be small enough so that $T\cdot \epsilon< \delta/2$. We argue that, when the protocol is instantiated using our claw-generation from \cref{lem:claw-gen}, we obtain gap between completeness and soundness error that is at most $\delta/2$.

To see this, consider an ideal experiment where after any invocation of claw-generation throughout the protocol, the verifier modifies the claw values in its memory by replacing the bits at locations $J$ in $Z$ with freshly sampled uniformly random bits. By \cref{lmm:qhard}, any such replacement deviates us from the real experiment by at most $\epsilon$ in trace-distance. In total, we obtain an experiment that is at most $\delta/2$-far from the real experiment by triangle inequality.

In the ideal experiment, claw-generation provides soundness up to error $2^{-\sec/2}$, which is negligible in the security parameter. Following the adaption of \cite{BK24} to our setting (\cref{sec:appendix-memory-bounded}), this implies that an attacker against the protocol in the experiment succeeds with soundness error that is smaller than completeness by at least $\delta$. Therefore, this gap in the real experiment is at least $\delta/2$ by triangle inequality.

\fi

\end{document}